\documentclass[proceedings]{stacs}
\stacsheading{2010}{359-370}{Nancy, France}
\firstpageno{359}

\usepackage{epsfig}
\usepackage{multicol}
\usepackage{pstricks,pst-grad}
\usepackage{amssymb}
\usepackage{amsthm}
\usepackage{amsmath}
\usepackage{color}
\usepackage[ruled, vlined]{algorithm2e}
\usepackage{float}
\usepackage{wrapfig}
\usepackage{wasysym}
\usepackage{tikz}
\usetikzlibrary{arrows,shapes}
\usepackage{multirow}
\usepackage{subfigure}
\usepackage{floatflt}
\usepackage{float}
\usepackage{times}

\usepackage{ifthen}
\newcommand{\techRep}{false} 
\newcommand{\iftechrep}{\ifthenelse{\equal{\techRep}{true}}}

\theoremstyle{plain}
\theoremstyle{definition}

\sloppy

\newcommand{\vd}{\mathbf{d}}
\newcommand{\vp}{\mathbf{p}}
\newcommand{\vt}{\mathbf{t}}
\newcommand{\vu}{\mathbf{u}}
\newcommand{\vv}{\mathbf{v}}
\newcommand{\vx}{\mathbf{x}}
\newcommand{\vy}{\mathbf{y}}
\newcommand{\vz}{\mathbf{z}}
\newcommand{\vDelta}{\mathbf{\Delta}}
\newcommand{\ub}{\mathbf{ub}}
\newcommand{\lb}{\mathbf{lb}}
\newcommand{\vzero}{\overline{0}}
\newcommand{\vone}{\overline{1}}
\newcommand{\vtwo}{\overline{2}}
\newcommand{\vX}{\overline{X}}
\newcommand{\ps}[1]{\mathbf{p}^{(#1)}}
\newcommand{\xs}[1]{\mathbf{x}^{(#1)}}

\newcommand{\lbs}[1]{\lb^{(#1)}}
\newcommand{\ubs}[1]{\ub^{(#1)}}
\newcommand{\hs}[1]{h^{(#1)}}
\newcommand{\ks}[1]{\mathbf{\boldsymbol\kappa}^{(#1)}}
\newcommand{\ns}[1]{\mathbf{\boldsymbol\nu}^{(#1)}}

\newcommand{\Id}{\mathit{Id}}
\newcommand{\R}{\mathbb{R}}
\newcommand{\Ne}{\mathcal{N}}
\newcommand{\Rat}{\mathbb{Q}}

\newcommand{\Float}[2] {#1\hspace{1mm}\rotatebox[x=2mm,y=1mm]{180}{$\rightsquigarrow$}\,#2}

%
\definecolor{orange3}{rgb}{1.0,0.2538,0.1681}
\definecolor{blau}{rgb}{0,.39608,0.74118}
\definecolor{rot}{rgb}{0.79216,.12941,0.24706}
\definecolor{dgruen}{rgb}{0,.7,0}

\setlength{\marginparwidth}{25mm}

\newenvironment{qtheorem}[2][]{%
{${}$\\[1mm]
\noindent \bf Theorem #2#1.}
\begin{itshape}%
}{%
\end{itshape}%
}

\newenvironment{qproposition}[2][]{%
{${}$\\[1mm]
\noindent\bf Proposition #2#1.}
\begin{itshape}%
}{%
\end{itshape}%
}

\begin{document}
\title[Computing Least Fixed Points of Probabilistic Systems of Polynomials]{Computing Least Fixed Points of \\ Probabilistic Systems of Polynomials}

\author{J.~Esparza}{Javier Esparza}
\author{A.~Gaiser}{Andreas Gaiser}
\author{S.~Kiefer}{Stefan Kiefer}

\address[TUM]{Fakult{\"a}t f{\"u}r Informatik, Technische Universit{\"a}t M{\"u}nchen, Germany}  
\email{{esparza,gaiser,kiefer}@model.in.tum.de}  
\keywords{computing fixed points, numerical approximation, stochastic models, branching processes}
\subjclass{F.2.1 Numerical Algorithms and Problems, G.3 Probability and Statistics}

\begin{abstract}
We study systems of equations of the form
 $X_1 = f_1(X_1, \ldots, X_n), \ldots, X_n = f_n(X_1, \ldots, X_n)$ where each $f_i$ is a polynomial with nonnegative coefficients that add up to~$1$.
The least nonnegative solution, say~$\mu$, of such equation systems is central to problems from various areas,
 like physics, biology, computational linguistics and probabilistic program verification.
We give a simple and strongly polynomial algorithm to decide whether $\mu=(1,\ldots,1)$ holds.
Furthermore, we present an algorithm that computes reliable sequences of lower and upper bounds on~$\mu$, converging linearly to~$\mu$.
Our algorithm has these features despite using inexact arithmetic for efficiency.
We report on experiments that show the performance of our algorithms.
\end{abstract}

\maketitle

\section{Introduction}

We study how to efficiently compute the least nonnegative solution of
an equation system of the form
$$\begin{array}{ccc}
X_1 = f_1(X_1, \ldots, X_n) & \ldots & X_n = f_n(X_1, \ldots, X_n)\;,
\end{array}$$
where, for every $i \in \{1,\ldots,n\}$,
$f_i$ is a polynomial over $X_1, \ldots, X_n$ with positive rational
coefficients that \emph{add up to~1}.\footnote{Later, we allow that the coefficients add up to {\em at most~$1$}.}
The solutions are the fixed points of the function
$f: \mathbb{R}^n \rightarrow \mathbb{R}^n$ with $f = (f_1, \ldots, f_n)$.
We call $f$ a {\em probabilistic system of polynomials} (short: {\em PSP}).
E.g., the PSP
$$f(X_1, X_2) = \left( \, \frac{1}{2}X_1 X_2 + \frac{1}{2} \; , \; \frac{1}{4}X_2 X_2 + \frac{1}{4} X_1 + \frac{1}{2} \,\right)$$
induces the equation system
$$\textstyle X_1 = \frac{1}{2}X_1 X_2 + \frac{1}{2} \qquad X_2 = \frac{1}{4}X_2 X_2 + \frac{1}{4} X_1 + \frac{1}{2} \; .$$
Obviously, $\overline{1} = (1, \ldots, 1)$ is a fixed point of every PSP.
By Kleene's theorem, every PSP has a least nonnegative fixed point
(called just least fixed point in what follows), given by the limit
of the sequence $\overline{0}, f(\overline{0}), f(f(\overline{0})), \ldots$

PSPs are important in different areas of the theory of stochastic processes
and computational models. A fundamental result of the theory of
branching processes, with numerous applications in physics, chemistry and biology
(see e.g.\ \cite{harris,athreya}), states that extinction probabilities of species are equal to the least fixed point of a PSP.
The same result has been recently shown for the probability of termination of certain probabilistic recursive programs \cite{etessamiyannakakis,EKM04}.
The consistency of stochastic context-free grammars, a problem of interest in statistical
natural language processing, also reduces to checking whether the least fixed point of a PSP equals~$\overline{1}$
(see e.g.~\cite{foundations_nlp}).

Given a PSP~$f$ with least fixed point $\mu_f$, we study how to efficiently solve the following two problems:
(1) decide whether $\mu_f = \overline{1}$, and
(2) given a rational number $\epsilon > 0$, compute $\lb, \ub \in \Rat^n$ such that $\lb \leq \mu_f \leq \ub$ and
$\ub - \lb \leq \overline{\epsilon}$ (where $\vu \le \vv$ for vectors~$\vu, \vv$ means $\le$ in all components).
While the motivation for Problem~(2) is clear (compute the probability of extinction with a given accuracy),
 the motivation for Problem~(1) requires perhaps some explanation.
In the case study of Section~\ref{sub:case-neutron} we consider a family of PSPs, taken from~\cite{harris}, modelling
 the neutron branching  process in a ball of radioactive material of radius~$D$ (the family is parameterized by~$D$).
The least fixed point is the probability that a neutron produced through spontaneous fission {\em does not}
 generate an infinite ``progeny'' through successive collisions with atoms of the ball;
loosely speaking, this is the probability that the neutron {\em does not} generate a chain reaction and the ball {\em does not} explode.
Since the number of atoms in the ball is very large, spontaneous fission produces many neutrons per second, 
and so even if the probability that a given neutron produces a chain reaction is very small, the ball will explode with large
probability in a very short time.
It is therefore important to determine the largest radius~$D$ at which the probability of no chain reaction is still~$1$
(usually called the {\it critical radius}).
An algorithm for Problem~(1) allows to compute the critical radius using binary search.
A similar situation appears in the analysis of parameterized probabilistic programs.
In~\cite{etessamiyannakakis,EKM04} it is shown that the question whether a probabilistic program
almost surely terminates can be reduced to Problem~(1).
Using binary search one can find the ``critical'' value of the parameter for which
the program may not terminate any more.

Etessami and Yannakakis show in~\cite{etessamiyannakakis} that
Problem~(1) can be solved in polynomial time by a reduction to (exact) Linear Programming (LP), which is not known to be strongly polynomial.
Our first result reduces Problem~(1) to solving a system of linear equations, resulting in a strongly polynomial algorithm for Problem~(1). The Maple library offers exact arithmetic solvers for LP and systems
of linear equations, which we use to test the performance of our new algorithm.
In the neutron  branching process discussed above we obtain speed-ups of about one order of magnitude
with respect to LP.

The second result of the paper is, to the best of our knowledge, the first practical algorithm for Problem~(2).
Lower bounds for $\mu_f$ can be computed using Newton's method for approximating a root of the function $f(\overline{X}) - \overline{X}$.
This has recently been investigated in detail~\cite{etessamiyannakakis,EKL07:stoc,EKL08:stacs}.
However, Newton's method faces considerable numerical problems.
Experiments show that naive use of exact arithmetic is inefficient, while floating-point computation
leads to false results even for very small systems. For instance, the PReMo tool~\cite{premo},
which implements Newton's method with floating-point arithmetic for efficiency, reports $\mu_f \ge \vone $ for a
PSP with only 7 variables and small coefficients, although $\mu_f < \vone$
is the case (see Section~\ref{sub:case-h-n}).

Our algorithm produces a sequence of guaranteed lower and upper bounds, both of which converge
linearly to $\mu_f$. Linear convergence means that, loosely speaking,
the number of accurate bits of
the bound is a linear function of the position of the bound in the sequence.
The algorithm is based on the following idea.
Newton's method is an iterative procedure that, given a current lower bound~$\lb$ on~$\mu_f$, applies
a certain operator~$\Ne$ to it, yielding a new, more precise lower bound~$\Ne(\lb)$.
Instead of computing $\Ne(\lb)$ using exact arithmetic, our algorithm computes {\em two} consecutive
Newton steps, i.e., $\Ne(\Ne(\lb))$, using {\em inexact} arithmetic.
Then it checks if the result satisfies a carefully chosen condition.
If so, the result is taken as the next lower bound.
If not, then the precision is increased, and the computation redone.
The condition is eventually satisfied, assuming the results of computing with increased precision
converge to the exact result.
Usually, the repeated inexact computation is much faster than the exact one.
At the same time, a careful (and rather delicate) analysis shows that the sequence of lower
bounds converges linearly to~$\mu_f$.

Computing {\em upper} bounds is harder, and seemingly has not been considered in the literature before.
Similarly to the case of lower bounds, we apply $f$ twice to~$\ub$, i.e.,
we compute $f(f({\bf ub}))$ with increasing precision until a condition holds.
The sequence so obtained may not even converge to~$\mu_f$.
So we need to introduce a further operation, after which we can then prove linear convergence.

We test our algorithm on the neutron branching process. The time
needed to obtain lower and upper bounds on the probability of no
explosion with $\epsilon = 0.0001$ lies below the time needed to check,
using exact LP, whether this probability is $1$ or smaller than one. That
is, in this case study our algorithm is faster, and provides more information.

The rest of the paper is structured as follows.
We give preliminary definitions and facts in Section~\ref{sec:prelim}.
Sections \ref{sec:consistency} and~\ref{sec:inex} present our algorithms for solving Problems (1) and~(2),
and report on their performance on some case studies. Section~\ref{sec:conclusions} contains our conclusions.
\iftechrep{A shorter version of this paper will appear in \emph{27th International Symposium on Theoretical Aspects of Computer Science (STACS 2010)}.}{The full version of the paper, including all proofs, can be found in~\cite{EGK10:stacsTechRep}.} 
\section{Preliminaries} \label{sec:prelim}
\paragraph{\em Vectors and matrices.}
We use bold letters for designating (column) vectors, e.g.\ $\mathbf{v} \in \mathbb{R}^n$.
We write $\overline{s}$ with $s \in \mathbb{R}$ for the vector $(s, \ldots, s)^\top \in \mathbb{R}^n$ (where $^\top$ indicates transpose),
 if the dimension $n$ is clear from the context.
The $i$-th component of $\mathbf{v} \in \mathbb{R}^n$ will be denoted by~$\mathbf{v}_i$.
We write $\vx = \vy$ (resp.\ $\vx \le \vy$ resp.\ $\vx \prec \vy$) if $\vx_i = \vy_i$ (resp.\ $\vx_i \le \vy_i$ resp.\ $\vx_i < \vy_i$)
 holds for all $i \in \{1,\ldots,n\}$.
By $\vx < \vy$ we mean $\vx \le \vy$ and $\vx \ne \vy$.
\iftechrep{\par}{}
By~$\mathbb{R}^{m \times n}$ we denote the set of real matrices with $m$ rows and $n$ columns.
We write $\Id$ for the identity matrix.
For a square matrix~$A$, we denote by $\rho(A)$ the \emph{spectral radius} of~$A$, i.e., the maximum of the absolute values of the eigenvalues.
A matrix is {\em nonnegative} if all its entries are nonnegative.
A nonnegative matrix $A\in \mathbb{R}^{n \times n}$ is \emph{irreducible} if for every $k,l \in \{1, \ldots, n\}$
 there exists an  $i \in \mathbb{N}$ so that $(A^i)_{kl} \not = 0$.

\paragraph{\em Probabilistic Systems of Polynomials.}
We investigate equation systems of the form
$$\begin{array}{ccc}
X_1 = f_1(X_1, \ldots, X_n) & \ldots & X_n = f_n(X_1, \ldots, X_n),
\end{array}$$
where the $f_i$ are polynomials in the variables $X_1, \ldots, X_n$ with positive real coefficients,
 and for every polynomial~$f_i$ the sum of its coefficients is \emph{at most}~$1$.
The vector $f := (f_1,\ldots,f_n)^\top$ is called a \emph{probabilistic system of polynomials} (PSP for short)
 and is identified with its induced function $f : \mathbb{R}^n \rightarrow \mathbb{R}^n$.
If $X_1, \ldots, X_n$ are the formal variables of~$f$, we define $\vX := (X_1, \ldots, X_n)^\top$ and $\text{Var}(f) := \{X_1, \ldots, X_n\}$.
We assume that $f$ is represented as a list of polynomials, and each polynomial is a list of its monomials.
If $S \subseteq \{X_1,\ldots,X_n\}$, then $f_S$ denotes the result of removing the polynomial
$f_i(X_1, \ldots, X_n)$ from $f$ for every $x_i \notin S$; further, given $\vx \in \R^n$ and $B \in \R^{n\times n}$, we denote by $\vx_S$ and $B_{SS}$ the vector and the matrix obtained from~$\vx$ and~$B$ by removing the entries with indices~$i$ such that $X_i \not\in S$.
The coefficients are represented as fractions of positive integers.
The {\em size} of~$f$ is the size of that representation.
The \emph{degree} of~$f$ is the maximum of the degrees of $f_1, \ldots, f_n$.
 PSPs of degree $0$ (resp.~$1$ resp.~$\mathord{>}1$) are called {\em constant} (resp. \emph{linear} resp.\ {\em superlinear}).
PSPs~$f$ where the degree of each $f_i$ is at least~$2$ are called {\em purely superlinear}.
We write $f'$ for the \emph{Jacobian} of~$f$, i.e., the matrix of first partial derivatives of~$f$.

Given a PSP $f$, a variable $X_i$ {\em depends directly} on a 
variable~$X_j$ if $X_j$ ``occurs'' in~$f_i$,
 more formally if $\frac{\partial f_i}{\partial X_j}$ is not the constant~$0$.
A variable~$X_i$ {\em depends} on~$X_j$ if $X_i$ depends directly on~$X_j$ or there is a variable~$X_k$ such that
 $X_i$ depends directly on~$X_k$ and $X_k$ depends on~$X_j$.
We often consider the {\em strongly connected components} (or SCCs for short) of the dependence relation.
The SCCs of a PSP can be computed in linear time using e.g. Tarjan's algorithm. 
An SCC~$S$ of a PSP~$f$ is {\em constant} resp.\ {\em linear} resp.\ {\em superlinear} resp.\ {\em purely superlinear}
 if the PSP~$\tilde f$ has the respective property, 
 where $\tilde f$ is obtained by restricting~$f$ to the $S$-components and replacing all variables not in~$S$ by the constant~$1$.
A PSP is an {\em scPSP} if it is not constant and consists of only one SCC.
Notice that a PSP~$f$ is an scPSP if and only if $f'(\vone)$ is irreducible.

A fixed point of a PSP~$f$ is a vector~$\vx \ge \vzero$ with $f(\vx) = \vx$.
By Kleene's theorem, there exists a least fixed point~$\mu_f$ of~$f$, i.e., $\mu_f \le \vx$ holds for every fixed point~$\vx$.
Moreover, the sequence $\vzero, f(\vzero), f(f(\vzero)), \ldots$ converges to~$\mu_f$.
Vectors~$\vx$ with $\vx \le f(\vx)$ (resp.\ $\vx \ge f(\vx)$) are called {\em pre-fixed} (resp.\ {\em post-fixed}) points.
Notice that the vector~$\vone$ is always a post-fixed point of a PSP~$f$, due to our assumption on the coefficients of a PSP.
By Knaster-Tarski's theorem, $\mu_f$ is the least post-fixed point, so we always have $\vzero \le \mu_f \le \vone$.
It is easy to detect and remove all components~$i$ with $(\mu_f)_i = 0$ by a simple round-robin method (see e.g.~\cite{EKL08:stacs}),
 which needs linear time in the size of~$f$.
We therefore assume in the following that $\mu_f \succ \vzero$.

\section{An algorithm for consistency of PSPs} \label{sec:consistency}

Recall that for applications like the neutron branching process
it is crucial to know exactly whether $\mu_f = \overline{1}$ holds.
We say a PSP~$f$ is {\em consistent} if $\mu_f = \overline{1}$; otherwise it is \emph{inconsistent}.
Similarly, we call a component~$i$ consistent if $(\mu_f)_i = 1$.
We present a new algorithm for the consistency problem, i.e., the problem to check a
PSP for consistency.

It was proved in~\cite{etessamiyannakakis} that consistency is checkable in polynomial time
by reduction to Linear Programming (LP). We first observe that consistency of general
PSPs can be reduced to consistency of scPSPs by computing the DAG of SCCs, and checking consistency
SCC-wise~\cite{etessamiyannakakis}:  Take any bottom SCC~$S$, and check
the consistency of $f_S$. (Notice that $f_S$ is either
constant or an scPSP; if constant, $f_S$ is consistent if{}f $f_S=1$, if an scPSP,
we can check its consistency by assumption.) If $f_S$ is inconsistent, then so is $f$, and we are done.
If $f_S$ is consistent, then we remove every $f_i$ from~$f$ such that $x_i \in S$,
replace all variables of~$S$ in the remaining polynomials by
the constant~$1$, and iterate (choose a new bottom SCC, etc.). Note that this algorithm
processes each polynomial at most once, as every variable belongs to exactly one SCC.

It remains to reduce the consistency problem for scPSPs to LP. The first step
is:
\begin{proposition}\cite{harris,etessamiyannakakis} \; \label{prop:spektralradius}
 An scPSP~$f$ is consistent if{}f $\rho(f'(\overline{1})) \leq 1$ \
(i.e., if{}f the spectral radius of the Jacobi matrix $f'$ evaluated at the vector $\vone$
is at most $1$).
\end{proposition}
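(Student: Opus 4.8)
The plan is to reduce the statement to classical Perron–Frobenius theory for nonnegative matrices combined with monotonicity properties of the Kleene iteration. Write $g := f'(\vone)$; since $f$ is an scPSP, $g$ is an irreducible nonnegative matrix, so by Perron–Frobenius it has a simple dominant eigenvalue $\rho := \rho(g)$ with a strictly positive left eigenvector~$\vu \succ \vzero$. The key quantitative tool is a Taylor expansion of~$f$ around~$\vone$: for $\vx \le \vone$ we have $f(\vx) = \vone - g(\vone - \vx) + R(\vx)$, where the remainder $R(\vx)$ collects the second- and higher-order terms and satisfies $R(\vx) \ge \vzero$ (all coefficients of~$f$, hence of all its derivatives, are nonnegative) and $R(\vx) = \vzero$ exactly when $\vone - \vx$ lies in the kernel of the Hessian directions, i.e.\ essentially when $f$ is linear or $\vx = \vone$.

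First I would prove the ``only if'' direction (consistency $\Rightarrow \rho \le 1$) by contraposition: assume $\rho > 1$. Using the strictly positive left eigenvector~$\vu$, I would exhibit a pre-fixed point strictly below~$\vone$. Concretely, consider $\vx_t := \vone - t\,\vu$ for small $t > 0$. Then $\vone - f(\vx_t) = g(t\vu) - R(\vx_t) = t\rho\,\vu - R(\vx_t)$. Since $R(\vx_t)$ is of order~$t^2$ (each monomial of~$R$ has degree $\ge 2$ in the entries of $\vone - \vx_t$), for $t$ small enough we get $\vone - f(\vx_t) \prec t\,\vu$ componentwise, i.e.\ $f(\vx_t) \succ \vx_t$, so $\vx_t$ is a pre-fixed point with $\vx_t \prec \vone$. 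By Kleene/Knaster–Tarski, the least fixed point dominates no wait — rather, iterating $f$ from~$\vx_t$ stays above~$\vx_t$ and converges to a fixed point $\le \vone$; combined with $\mu_f$ being the least fixed point and a short argument that the iteration from $\vzero$ eventually passes above $\vx_t$ (using irreducibility, so that $\mu_f \succ \vzero$ propagates), one concludes $\mu_f \le$ that fixed point $< \vone$ in at least one component, hence $\mu_f \ne \vone$, i.e.\ $f$ is inconsistent. (Alternatively, and more cleanly, one shows directly $\mu_f \le \vx_t \prec \vone$ is false but $\mu_f \ne \vone$; the cleanest route is: a pre-fixed point $\prec \vone$ forces $\mu_f \ne \vone$ because $\mu_f$ is the sup of the Kleene sequence and one shows by induction the Kleene sequence is bounded by some fixed point $\le \vone$ that is itself $\ne \vone$ — I will need Proposition-style monotonicity here.)

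For the ``if'' direction ($\rho \le 1 \Rightarrow$ consistency), I would argue that $\vone$ is then the least fixed point. Suppose $\vx \le \vone$ is any fixed point; I want $\vx = \vone$. Set $\vd := \vone - \vx \ge \vzero$. Fixed-point equation plus the Taylor expansion give $\vd = g\,\vd - R(\vx) \le g\,\vd$, so $\vd \le g\vd \le g^2\vd \le \cdots \le g^k \vd$ for all~$k$. Multiplying on the left by the positive Perron eigenvector~$\vu$ of~$g^\top$ (for $g^\top$, which has the same spectral radius), we get $\vu^\top \vd \le \vu^\top g^k \vd = \rho^k\, \vu^\top \vd$. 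If $\rho < 1$ the right side tends to~$0$, forcing $\vu^\top \vd = 0$, hence $\vd = \vzero$ since $\vu \succ \vzero$. If $\rho = 1$, the inequality $\vu^\top\vd \le \vu^\top g \vd$ together with $\vu^\top g = \vu^\top$ gives $\vu^\top R(\vx) = 0$; since $R(\vx) \ge \vzero$ and $\vu \succ \vzero$ this forces $R(\vx) = \vzero$, and then $\vd = g\vd$, so $\vd$ is a nonnegative eigenvector of the irreducible matrix~$g$ for eigenvalue $\rho = 1$; by Perron–Frobenius the eigenspace is one-dimensional and spanned by a strictly positive vector, so either $\vd = \vzero$ (done) or $\vd \succ \vzero$. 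In the latter case $\vx \prec \vone$, but then $R(\vx) = \vzero$ means $f$ restricted to the segment behaves linearly, which together with $\vx$ being a fixed point and the coefficient-sum condition needs a separate contradiction — this is the delicate point.

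The main obstacle I anticipate is exactly the boundary case $\rho = 1$: ruling out a strictly positive fixed point $\vx \prec \vone$. The resolution should use that the coefficients of each~$f_i$ sum to \emph{at most}~$1$, equivalently $f_i(\vone) \le 1$ with equality iff the ``constant deficit'' is zero, and that $R(\vx) = \vzero$ for $\vx \prec \vone$ would force every~$f_i$ to be linear (degree~$1$) — contradicting that an scPSP is by definition non-constant and, where it has a genuine superlinear monomial, $R$ does not vanish off~$\vone$; if all $f_i$ are linear then $f(\vx)=\vx$ with $\rho(g)=1$ and $f$ stochastic-substochastic forces $\vx = \vone$ by a direct Perron–Frobenius / stochastic-matrix argument. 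Packaging these sub-cases carefully (linear scPSP vs.\ superlinear scPSP) is where the real work lies; everything else is the standard interplay of Taylor expansion, monotonicity of Kleene iteration, and Perron–Frobenius for irreducible nonnegative matrices. Since the proposition is attributed to~\cite{harris,etessamiyannakakis}, I would also simply cite those sources for the delicate case rather than reprove it in full.
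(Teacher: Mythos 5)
The paper does not prove Proposition~\ref{prop:spektralradius}; it is quoted from \cite{harris,etessamiyannakakis}. Your Taylor-expansion-plus-Perron--Frobenius approach is the standard route from those sources, but the ``only if'' direction as you wrote it contains a sign error that reverses what you conclude. With $\rho>1$, a (right) Perron eigenvector $\vu\succ\vzero$ of $g=f'(\vone)$, and $\vx_t=\vone-t\vu$, the expansion (using $f(\vone)=\vone$) gives $\vone-f(\vx_t)=t\rho\vu-R(\vx_t)$ with $R(\vx_t)=O(t^2)\ge\vzero$; since $\rho>1$, for small $t$ the term $t\rho\vu$ dominates, so $\vone-f(\vx_t)\succ t\vu$, i.e.\ $f(\vx_t)\prec\vx_t$. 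You have produced a \emph{post}-fixed point strictly below $\vone$, not a pre-fixed point. That is actually what you need: Knaster--Tarski then gives $\mu_f\le\vx_t\prec\vone$ immediately, so $f$ is inconsistent. The subsequent paragraph where you try to squeeze inconsistency out of a pre-fixed point below $\vone$ (which by itself says nothing about $\mu_f\ne\vone$ --- there are pre-fixed points arbitrarily close to $\vzero$) is both unnecessary and visibly not working, and is a symptom of the reversed inequality.

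The ``if'' direction is sound for $\rho<1$, and you correctly isolate $\rho=1$ as the delicate case. Your resolution sketch ($R(\vx)=\vzero$ with $\vd\succ\vzero$ forces $f$ to be affine, and a linear PSP with $\mu_f\succ\vzero$ has a unique fixed point, so $\vx$ and $\vone$ cannot both be fixed unless they coincide) is the right idea and completable, though you leave it as a gesture toward the cited sources rather than finishing it. Also note that the proposal silently assumes $f(\vone)=\vone$ in every Taylor step; under the paper's relaxed PSP definition (coefficient sums $\le 1$) this is an extra hypothesis, although the classical branching-process setting you are implicitly invoking does enforce it.
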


\noindent The second step consists of observing that the
 matrix $f'(\overline{1})$ of an scPSP $f$ is irreducible and nonnegative.
It is shown in~\cite{etessamiyannakakis} that $\rho(A) \leq 1$
holds for an irreducible and nonnegative matrix $A$ if{}f the system of inequalities
\begin{equation} \label{eq:lp_spectralradius}
 A \mathbf{x} \geq \vx + \vone \text{ , } \mathbf{x} \geq \overline{0}
\end{equation}

\noindent is infeasible. However, no strongly polynomial algorithm for LP
is known, and we are not aware that (\ref{eq:lp_spectralradius}) falls
within any subclass solvable in strongly polynomial
time~\cite{geometricalgorithms}.

We provide a very simple, strongly polynomial time algorithm to check
whether $\rho(f'(\overline{1})) \leq 1$ holds. We need some results from Perron-Frobenius theory (see e.g. \cite{matrixbuch}).
\begin{lemma} \label{lemma:pf1}
Let $A \in \mathbb{R}^{n \times n}$ be nonnegative and irreducible.
\begin{itemize}
\item[(1)] $\rho(A)$ is a \emph{simple} eigenvalue of $A$.
\item[(2)] There exists an eigenvector $\mathbf{v} \succ \vzero$ with $\rho(A)$ as eigenvalue.
\item[(3)] Every eigenvector $\mathbf{v} \succ \vzero$ has $\rho(A)$ as eigenvalue.
\item[(4)] For all $\alpha, \beta \in \mathbb{R} \setminus \{0\}$ and $\mathbf{v} > \overline{0}$:
if $\alpha \mathbf{v} < A \mathbf{v} < \beta \mathbf{v}$,
then $\alpha < \rho(A) < \beta$.
\end{itemize}
\end{lemma}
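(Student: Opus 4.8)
The plan is to regard parts~(1) and~(2) as the classical Perron--Frobenius theorem for nonnegative irreducible matrices and simply cite~\cite{matrixbuch}, and then to deduce parts~(3) and~(4) from them by a couple of short pairing arguments using the \emph{left} Perron eigenvector. If a self-contained proof of~(1)--(2) were wanted, the standard route is the Collatz--Wielandt variational characterization: one maximizes $r(\mathbf{x}) = \min\{(A\mathbf{x})_i/\mathbf{x}_i : \mathbf{x}_i > 0\}$ over the nonnegative vectors; irreducibility (via the fact that $(\Id + A)^{n-1}\mathbf{x} \succ \vzero$ whenever $\mathbf{x} > \vzero$) lets one restrict to strictly positive vectors, on which $r$ is continuous and the maximum is attained; the maximizer $\mathbf{v} \succ \vzero$ satisfies $A\mathbf{v} = r(\mathbf{v})\mathbf{v}$ with $r(\mathbf{v}) = \rho(A)$, and simplicity of $\rho(A)$ as an eigenvalue is then a further standard argument. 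This is the only genuinely deep ingredient, and since it is textbook material I expect to keep it as a citation.

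For part~(3): note first that $A^\top$ is again nonnegative and irreducible, since $((A^\top)^i)_{kl} = (A^i)_{lk}$, and that $\rho(A^\top) = \rho(A)$. Applying~(2) to $A^\top$ produces a vector $\mathbf{w} \succ \vzero$ with $\mathbf{w}^\top A = \rho(A)\,\mathbf{w}^\top$. Now if $\mathbf{v} \succ \vzero$ is any positive eigenvector of $A$, say $A\mathbf{v} = \lambda\mathbf{v}$ with $\lambda \in \R$ (real, since $A$ and $\mathbf{v}$ are), then pairing on the left gives $\lambda\,\mathbf{w}^\top\mathbf{v} = \mathbf{w}^\top A\mathbf{v} = \rho(A)\,\mathbf{w}^\top\mathbf{v}$; as $\mathbf{w}^\top\mathbf{v}$ is a sum of strictly positive numbers and hence nonzero, we conclude $\lambda = \rho(A)$.

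For part~(4): keep the same positive left eigenvector $\mathbf{w}$. Assume $\alpha\mathbf{v} < A\mathbf{v} < \beta\mathbf{v}$ with $\mathbf{v} > \vzero$. Recall that in the paper's notation $\mathbf{u} < \mathbf{v}$ means $\mathbf{u} \le \mathbf{v}$ and $\mathbf{u} \ne \mathbf{v}$, so $A\mathbf{v} - \alpha\mathbf{v}$ is a nonnegative, nonzero vector. Since $\mathbf{w} \succ \vzero$, the inner product $\mathbf{w}^\top(A\mathbf{v} - \alpha\mathbf{v})$ is then strictly positive; expanding it via $\mathbf{w}^\top A = \rho(A)\mathbf{w}^\top$ this reads $(\rho(A) - \alpha)\,\mathbf{w}^\top\mathbf{v} > 0$, and since $\mathbf{w}^\top\mathbf{v} > 0$ we get $\alpha < \rho(A)$. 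Symmetrically, $\mathbf{w}^\top(\beta\mathbf{v} - A\mathbf{v}) > 0$ yields $\rho(A) < \beta$.

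The main obstacle is therefore entirely contained in the classical statements~(1) and~(2), which I would simply import from~\cite{matrixbuch}; the remaining subtlety is the non-strict componentwise order in the hypothesis of~(4), and this is exactly what is taken care of by pairing against a \emph{strictly} positive left eigenvector rather than a merely nonnegative one. Beyond that I expect no real difficulty.
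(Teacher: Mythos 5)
The paper gives no proof of this lemma at all: it simply points to a standard reference (\emph{cf.}~the text ``We need some results from Perron--Frobenius theory (see e.g.~\cite{matrixbuch})''), treating all four parts as textbook facts. So there is nothing in the paper to compare your argument against directly, but your proposal is correct as a self-contained derivation. Treating (1)--(2) as the core Perron--Frobenius theorem and importing them from the reference is exactly the right scope; deriving (3) and (4) from them via the positive left eigenvector of $A$ (obtained by applying (2) to $A^\top$, which is again nonnegative and irreducible with the same spectral radius) is a clean and standard route. In (3), the observation that $\lambda$ is automatically real because $A\mathbf{v}=\lambda\mathbf{v}$ with real $A,\mathbf{v}$ pins down $\lambda$ componentwise is fine. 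In (4), you correctly identify and resolve the one subtlety: the hypothesis gives only $\mathbf{v}>\vzero$ (nonnegative and nonzero), and $\alpha\mathbf{v}<A\mathbf{v}$ gives only a nonnegative nonzero difference, so one needs strict positivity on the \emph{other} side of the pairing; using $\mathbf{w}\succ\vzero$ makes $\mathbf{w}^\top(A\mathbf{v}-\alpha\mathbf{v})>0$ and $\mathbf{w}^\top\mathbf{v}>0$, which is exactly what is needed. No gap; what you have is a correct proof where the paper offers only a citation.
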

\noindent The following lemma is the key to the algorithm:
\newcommand{\stmtlemmatrixspectralradius}{
 Let $A\in\R^{n \times n}$ be nonnegative and irreducible.
 \begin{enumerate}
   \item[(a)]
    Assume there is $\vv \in \R^n \setminus \{\vzero\}$
    such that $(\Id - A) \vv = \vzero$. Then $\rho(A) \le 1$ if{}f $\vv \succ \vzero$ or $\vv \prec \vzero$.
   \item[(b)]
    Assume $\vv = \vzero$ is the only solution of $(\Id - A) \vv = \vzero$.
    Then there exists a unique $\vx \in \R^n$ such that
    $(\Id - A) \vx = \vone$, and $\rho(A) \le 1$ if{}f $\vx \ge \vone$ and $A \vx < \vx$.
 \end{enumerate}
}
\begin{lemma} \label{lem:matrix-spectral-radius}
 \stmtlemmatrixspectralradius
\end{lemma}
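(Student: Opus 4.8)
The plan is to derive both statements directly from the Perron--Frobenius facts in Lemma~\ref{lemma:pf1}, distinguishing the two cases according to whether or not $1$ is an eigenvalue of~$A$.

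For part~(a), note first that $(\Id-A)\vv=\vzero$ with $\vv\neq\vzero$ says precisely that $\vv$ is an eigenvector of~$A$ for the eigenvalue~$1$, so in any case $\rho(A)\geq 1$. For the ``if'' direction I would assume $\vv\succ\vzero$ (the case $\vv\prec\vzero$ follows by replacing $\vv$ with $-\vv$, again an eigenvector for~$1$): then $\vv$ is a strictly positive eigenvector, so by Lemma~\ref{lemma:pf1}(3) its eigenvalue equals $\rho(A)$, i.e.\ $\rho(A)=1$. Conversely, if $\rho(A)\leq 1$ then $\rho(A)=1$ by the above; by Lemma~\ref{lemma:pf1}(1) this eigenvalue is simple, so the eigenspace for~$1$ is one-dimensional, and by Lemma~\ref{lemma:pf1}(2) it contains some $\mathbf{w}\succ\vzero$. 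Since $\vv$ lies in that same line, $\vv=c\,\mathbf{w}$ for some $c\in\R\setminus\{0\}$, hence $\vv\succ\vzero$ if $c>0$ and $\vv\prec\vzero$ if $c<0$.

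For part~(b), the hypothesis that $\vzero$ is the only solution of $(\Id-A)\vv=\vzero$ just says $\Id-A$ is invertible (equivalently, $1$ is not an eigenvalue of~$A$), so $\vx:=(\Id-A)^{-1}\vone$ is the unique vector with $(\Id-A)\vx=\vone$; note that this already yields $A\vx=\vx-\vone\prec\vx$, so $A\vx<\vx$ always holds and it remains to match $\vx\geq\vone$ with $\rho(A)\leq 1$. For ``$\rho(A)\leq 1\Rightarrow\vx\geq\vone$'': since $1$ is not an eigenvalue, $\rho(A)\leq 1$ strengthens to $\rho(A)<1$, so the Neumann series $\sum_{k\geq 0}A^k$ converges to $(\Id-A)^{-1}$; nonnegativity of~$A$ makes every summand nonnegative, whence $(\Id-A)^{-1}\geq\Id$ entrywise and therefore $\vx=(\Id-A)^{-1}\vone\geq\vone$. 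For the converse, $\vx\geq\vone$ gives $\vx>\vzero$, so Lemma~\ref{lemma:pf1}(4) is applicable with $\vv=\vx$; combining $A\vx\geq\vzero\succ-\vx$ with $A\vx=\vx-\vone\prec\vx$ gives $-\vx\prec A\vx\prec\vx$, and the lemma yields $-1<\rho(A)<1$, in particular $\rho(A)\leq 1$.

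None of the individual steps is long once Lemma~\ref{lemma:pf1} is available; the part that needs the most care is~(b), where one must weld the purely linear-algebraic fact that $\Id-A$ is invertible to the sign information coming from Perron--Frobenius theory. The two small observations that make this go through are that $\rho(A)\leq 1$ can be upgraded to $\rho(A)<1$ in this case (so the Neumann-series argument for nonnegativity of $(\Id-A)^{-1}$ applies), and that the hypothesis $\vx\geq\vone$ is exactly what is needed to turn $\vx$ into an admissible strictly positive test vector for Lemma~\ref{lemma:pf1}(4).
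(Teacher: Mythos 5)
Your proof is correct and follows essentially the same route as the paper's: in part~(a) you use Lemma~\ref{lemma:pf1}(1)--(3) exactly as the paper does, and in part~(b) you use the Neumann-series nonnegativity argument for the forward direction and Lemma~\ref{lemma:pf1}(4) for the converse, just as the paper does. The one small refinement you add that the paper does not spell out is the observation that $A\vx = \vx - \vone \prec \vx$ holds automatically once $(\Id - A)\vx = \vone$, so the conjunct $A\vx < \vx$ in the statement is redundant; this slightly streamlines part~(b) but does not change the argument in substance.
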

\proof \mbox{}
\begin{enumerate}
 \item[(a)]
   From $(\Id - A) \vv = \overline{0}$ it follows $A \vv =\vv$.
   We see that $\vv$ is an eigenvector of~$A$ with eigenvalue~$1$.
   So $\rho(A) \geq 1$.

   \noindent($\Leftarrow$): As both $\vv$ and $-\vv$ are eigenvectors of~$A$ with eigenvalue~$1$, we can assume w.l.o.g.\ that $\vv \succ \vzero$.
   By Lemma~\ref{lemma:pf1}(3), $\rho(A)$ is the eigenvalue
   of $\vv$, and so $\rho(A)= 1$.

    \noindent($\Rightarrow$): Since $\rho(A) \leq 1$ and $\rho(A) \geq 1$, it follows that $\rho(A) = 1$. By Lemma~\ref{lemma:pf1}(1) and (2), the eigenspace of the eigenvalue~$1$ is one-dimensional and contains a vector $\vx \succ \vzero$. So $\vv = \alpha \cdot \vx$  for some
     $\alpha \in \mathbb{R}, \alpha \not = 0$. If $\alpha > 0$, we have $\vv \succ \vzero$, otherwise $\vv \prec \vzero$.
 \item[(b)]
  With the assumption and basic facts from linear algebra it follows that $(Id - A)$ has full rank and therefore
  $(\Id - A) \vx = \vone$ has a unique solution $\vx$. We still have to prove the second part of the conjunction:

   \noindent($\Leftarrow$):  Follows directly from Lemma~\ref{lemma:pf1}(4).

   \noindent($\Rightarrow$): Let $\rho(A) \le 1$.
     Assume for a contradiction that $\rho(A) = 1$.
     Then, by Lemma~\ref{lemma:pf1}(1), the matrix~$A$ would have an eigenvector~$\vv \neq \vzero$ with eigenvalue~$1$, so $(\Id - A) \vv = \vzero$, contradicting the assumption.
     So we have, in fact, $\rho(A) < 1$.
     By standard matrix facts (see e.g.~\cite{matrixbuch}), this implies that $(\Id - A)^{-1} = A^* = \sum_{i=0}^\infty A^i$ exists,
      and so we have $\vx = (\Id - A)^{-1} \vone = A^* \vone \ge \vone$.
     Furthermore,  $A \vx = \sum_{i=1}^\infty A^i \vone < \sum_{i=0}^\infty A^i \vone = \vx$. \qed
\end{enumerate}

In order to check whether $\rho(A) \leq 1$, we first solve the system $(\Id - A) \vv = \vzero$
using Gaussian elimination. If we find a vector
$\vv \not= \vzero$ such that $(Id - A) \vv = \vzero$, we apply Lemma~\ref{lem:matrix-spectral-radius}(a).
If $\vv = \vzero$ is the only solution of $(Id - A) \vv = \vzero$, we solve
$(\Id - A) \vv = \vone$ using Gaussian elimination again,
and apply Lemma~\ref{lem:matrix-spectral-radius}(b).
Since Gaussian elimination of a rational $n$-dimensional linear equation system
can be carried out in strongly polynomial time using $O(n^3)$ arithmetic operations
(see e.g. ~\cite{geometricalgorithms}), we obtain:

\begin{proposition} \label{prop:decide-spectral-radius}
Given a nonnegative irreducible matrix~$A\in\R^{n \times n}$, one can
decide in strongly polynomial time, using $O(n^3)$ arithmetic operations,
whether $\rho(A) \le 1$.
\end{proposition}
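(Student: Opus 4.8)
The plan is to turn Lemma~\ref{lem:matrix-spectral-radius} into an algorithm and then account for its cost. Correctness will be essentially immediate from the lemma, so the real content of the proposition is the complexity claim; most of the effort therefore goes into bookkeeping the arithmetic operations and into invoking the right facts about Gaussian elimination.

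First I would compute, by Gaussian elimination, the solution space of the homogeneous system $(\Id - A)\vv = \vzero$. This costs $O(n^3)$ arithmetic operations, and since the entries of $A$ are rational it can be carried out in strongly polynomial time: the classical analysis of rational Gaussian elimination (e.g.\ via the Bareiss/Edmonds bounds, as in~\cite{geometricalgorithms}) guarantees that all intermediate numerators and denominators stay of size polynomial in the input size. Two cases arise. If the elimination exhibits a nonzero solution $\vv$, we are in case~(a) of Lemma~\ref{lem:matrix-spectral-radius}: we test whether $\vv \succ \vzero$ or $\vv \prec \vzero$, which is $n$ sign comparisons, and report ``$\rho(A)\le 1$'' iff one of these holds. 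If instead $\vzero$ is the only solution, then $\Id - A$ has full rank, so we are in case~(b): we solve $(\Id - A)\vx = \vone$ by a second Gaussian elimination — again $O(n^3)$ operations, strongly polynomial — obtaining the unique $\vx$, and then check the conditions $\vx \ge \vone$ and $A\vx < \vx$. The first is $n$ comparisons; the second requires one matrix-vector product ($O(n^2)$ operations) followed by $n$ comparisons. By Lemma~\ref{lem:matrix-spectral-radius}(b), $\rho(A)\le 1$ holds iff both conditions are met.

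Summing up, the algorithm performs at most two Gaussian eliminations plus $O(n^2)$ further arithmetic operations and $O(n)$ comparisons, for a total of $O(n^3)$ arithmetic operations; and since each individual step is strongly polynomial, so is the whole procedure. Correctness is exactly the content of Lemma~\ref{lem:matrix-spectral-radius}, whose hypotheses — nonnegativity and irreducibility of $A$ — are precisely those of the proposition.

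The step I expect to require the most care is making the phrase ``strongly polynomial'' precise: one must appeal to the fact that rational Gaussian elimination not only uses $O(n^3)$ arithmetic operations but also keeps the bit-length of every intermediate quantity polynomially bounded, so that the arithmetic operations themselves run in polynomial time. This is standard (and already cited to~\cite{geometricalgorithms}), but it is the only non-mechanical ingredient; everything else is a direct translation of Lemma~\ref{lem:matrix-spectral-radius} into pseudocode together with routine operation counting.
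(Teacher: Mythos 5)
Your proof is correct and follows essentially the same route as the paper's: both proofs turn Lemma~\ref{lem:matrix-spectral-radius} into an algorithm by running Gaussian elimination on $(\Id - A)\vv = \vzero$, and, if that system has only the trivial solution, on $(\Id - A)\vx = \vone$, then performing the sign/comparison checks dictated by parts (a) and (b) of the lemma, with strong polynomiality and the $O(n^3)$ operation count both coming from the standard analysis of rational Gaussian elimination cited in~\cite{geometricalgorithms}. The only difference is that you spell out the operation-counting bookkeeping (matrix-vector product, componentwise comparisons) slightly more explicitly than the paper does.
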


Combining Propositions \ref{prop:spektralradius} and~\ref{prop:decide-spectral-radius} directly
yields an algorithm for checking the consistency of scPSPs.
Extending it to multiple SCCs as above, we get:

\begin{theorem} \label{thm:consistency}
  Let $f(X_1, \ldots, X_n)$ be a PSP.
  There is a strongly polynomial time algorithm that uses $O(n^3)$ arithmetic operations and
  determines the consistency of $f$.
\end{theorem}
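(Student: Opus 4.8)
The plan is to assemble Theorem~\ref{thm:consistency} from the pieces already in place, treating it as a bookkeeping argument over the DAG of SCCs. First I would recall the reduction sketched before Proposition~\ref{prop:spektralradius}: given a PSP~$f$ with $\mu_f \succ \vzero$, compute the SCCs of the dependence relation (linear time via Tarjan), and process them in reverse topological order. For a bottom SCC~$S$, the restricted system $f_S$ (with all out-of-$S$ variables set to~$1$) is either constant or an scPSP. If constant, checking $f_S = \vone$ is trivial. If an scPSP, then by the remark in the Preliminaries $f'_S(\vone)$ is irreducible and nonnegative, so Proposition~\ref{prop:spektralradius} reduces consistency of $f_S$ to deciding $\rho(f'_S(\vone)) \le 1$, which Proposition~\ref{prop:decide-spectral-radius} does in strongly polynomial time using $O(|S|^3)$ arithmetic operations. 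If some $f_S$ comes out inconsistent, report inconsistency; otherwise substitute the constant~$1$ for all variables of~$S$ in the remaining polynomials and recurse on the quotient DAG.

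The correctness of this loop is the first thing to nail down. I would argue that $f$ is consistent iff every SCC encountered in this process is consistent, by induction on the number of SCCs. The key monotonicity fact is that raising the value plugged in for a lower SCC can only raise~$\mu_f$ on the SCCs above it, and since we only ever substitute~$1$ (the maximum possible value, as $\mu_f \le \vone$) after having \emph{certified} the lower SCC is consistent, the substitution is faithful: the least fixed point of the reduced system equals the corresponding components of~$\mu_f$. Conversely, if a bottom SCC is inconsistent then those components of $\mu_f$ are already $<1$, hence $f$ is inconsistent. This is essentially the argument of~\cite{etessamiyannakakis}, so I would state it compactly and cite that source rather than re-deriving the fixed-point algebra.

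The complexity bookkeeping is the second ingredient. Each polynomial $f_i$ is touched exactly once — when its SCC is processed — so forming all the restricted systems and performing all substitutions costs time linear in the size of~$f$. The dominating cost is the Gaussian-elimination step from Proposition~\ref{prop:decide-spectral-radius}, applied once per superlinear/linear SCC~$S$ at cost $O(|S|^3)$. Since the SCCs partition $\{1,\dots,n\}$, we have $\sum_S |S|^3 \le \bigl(\sum_S |S|\bigr)^3 = n^3$, giving the claimed $O(n^3)$ arithmetic operations overall; and because every individual step (Tarjan, substitution, Gaussian elimination per Proposition~\ref{prop:decide-spectral-radius}) is strongly polynomial, so is the composite. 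I would also note that Jacobians and evaluations at~$\vone$ are computed symbolically from the monomial-list representation in time linear in the size of~$f$, so they do not affect the bound.

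I do not expect a serious obstacle here: the real work was done in Lemma~\ref{lem:matrix-spectral-radius} and Proposition~\ref{prop:decide-spectral-radius}. The only mildly delicate point is making the SCC reduction airtight — in particular justifying that after certifying a lower SCC consistent we may legitimately hard-wire its variables to~$1$, and that the resulting $\tilde f$ inherits the "positive coefficients summing to at most~$1$" shape so that Proposition~\ref{prop:spektralradius} keeps applying at every level. Both follow from the definitions of SCC-consistency and of a PSP given in the Preliminaries, so in the writeup I would just invoke them and keep the proof to a few lines.
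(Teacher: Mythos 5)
Your proposal matches the paper's proof exactly: both decompose into SCCs, process bottom-up substituting~$1$ for consistent SCCs, and invoke Propositions~\ref{prop:spektralradius} and~\ref{prop:decide-spectral-radius} on each scPSP, with the same complexity bookkeeping. The paper states this more tersely (citing~\cite{etessamiyannakakis} for the SCC-wise reduction), but the argument is the same.
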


\subsection{Case study: A family of ``almost consistent'' PSPs} \label{sub:case-h-n}

In this section, we illustrate some issues faced by algorithms that solve the consistency problem.
Consider the following family $\hs{n}$ of scPSPs, $n \geq 2$:
\iftechrep{
\begin{equation*}
  \hs{n} = \left( \; 0.5 X_1^2 + 0.1 X_n^2 + 0.4 \;,\;  0.01X_1^2 + 0.5 X_2 + 0.49 \; , \; \ldots \; , 0.01X_{n-1}^2 + 0.5X_n + 0.49 \; \right)^\top\;.
 \end{equation*}
}
{\begin{equation*}
  \hs{n} = \left( \; 0.5 X_1^2 + 0.1 X_n^2 + 0.4 \;,\;  0.01X_1^2 + 0.5 X_2 + 0.49 \; , \; \ldots \; , 0.01X_{n-1}^2 + 0.5X_n + 0.49 \; \right)^\top\;.
\end{equation*}
}
It is not hard to show that $\hs{n}(\vp) \prec \vp$ holds for $\vp = (1 - 0.02^n, \ldots, 1 - 0.02^{2n-1})^\top$,
so we have $\mu_{\hs{n}} \prec \vone$ by Proposition~\ref{prop:prefix-postfix}, i.e., the $\hs{n}$ are inconsistent.

The tool PReMo \cite{premo} relies on Java's floating-point arithmetic to compute
approximations of the least fixed point of a PSP.
We invoked PReMo for computing approximants of~$\mu_{\hs{n}}$ for different values of~$n$
between $5$ and~$100$.
Due to its fixed precision, PReMo's approximations for $\mu_{\hs{n}}$ are $\ge 1$
in all components if $n \geq 7$.
This might lead to the wrong conclusion that $\hs{n}$ is consistent.

Recall that the consistency problem can be solved by checking the feasibility of the
system~\eqref{eq:lp_spectralradius} with $A = f'(\overline{1})$.
We checked it with lp\_solve, a well-known LP tool using hardware floating-point arithmetic.
The tool wrongly states that \eqref{eq:lp_spectralradius}
has no solution for $\hs{n}$-systems with $n > 10$.
This is due to the fact that the solutions cannot be represented adequately using machine number precision.%
\footnote{%
The mentioned problems of PReMo and lp\_solve are not due to the fact that the coefficients of~$\hs{n}$ cannot be properly represented using 
basis~2: The problems persist if one replaces the coefficients of~$\hs{n}$ by similar numbers exactly representable by machine numbers.}
Finally, we also checked feasibility with Maple's Simplex package, which uses exact arithmetic,
and compared its performance with the implementation, also in Maple, of our consistency algorithm.
Table~\ref{tab:runtime-hn} shows the results. 
Our algorithm clearly outperforms the LP approach.
For more experiments see Section~\ref{sub:case-neutron}.

\begin{table}
\begin{tabular}[tbp]{l|r|r|r|r|r|r}
                                            & $n=25$   & $n=100$     & $n=200$ & $n=400$ & $n=600$  & $n=1000$\\ \hline
Exact LP                                    & $<1$ sec & 2 sec       & 8 sec   & 67 sec  & 208 sec  & $>$ 2h \\ \hline
Our algorithm                               & $<1$ sec & $<1$ sec    & 1 sec   & 4 sec   & 10  sec  & 29 sec \\ 
\end{tabular}
\caption{Consistency checks for $\hs{n}$-systems: Runtimes of different approaches.}
\label{tab:runtime-hn}
\end{table}

\section{Approximating~$\mu_f$ with inexact arithmetic} \label{sec:inex}

It is shown in~\cite{etessamiyannakakis} that $\mu_f$ may not be representable by roots, so one can only approximate~$\mu_f$.
In this section we present an algorithm that computes two sequences, $(\lbs{i})_i$ and $(\ubs{i})_i$, such that
 $\lbs{i} \le \mu_f \le \ubs{i}$ and $\lim_{i\to\infty} \ubs{i} - \lbs{i} = \vzero$.
In words: $\lbs{i}$ and~$\ubs{i}$ are lower and upper bounds on~$\mu_f$, respectively, and the sequences converge to~$\mu_f$.
Moreover, they converge linearly, meaning that the {\em number of accurate bits} of~$\lbs{i}$ and~$\ubs{i}$ are linear functions of~$i$.
(The number of accurate bits of a vector~$\vx$ is defined as the greatest number~$k$
 such that $|(\mu_f - \vx)_j|/|(\mu_f)_j| \le 2^{-k}$ holds for all $j \in \{1,\ldots,n\}$.)
These properties are guaranteed even though our algorithm uses inexact
arithmetic: Our algorithm detects numerical problems due to rounding errors,
recovers from them, and increases the precision of the arithmetic as needed.
Increasing the precision dynamically is, e.g., supported by the
GMP library~\cite{GMP}.

\newcommand{\gs}[1]{g^{(#1)}}
Let us make precise what we mean by increasing the precision.
Consider an elementary operation~$g$, like multiplication, subtraction,
etc., that operates on two input numbers $x$ and~$y$.
We can {\em compute $g(x,y)$ with increasing precision} if there is
a procedure that on input $x, y$ outputs a sequence
$\gs{1}(x,y), \gs{2}(x,y), \ldots$ that converges to~$g(x,y)$.
Note that there are no requirements on the convergence speed of this procedure ---
 in particular, we do not require that there is an~$i$ with $\gs{i}(x,y) = g(x,y)$.
This procedure, which we assume exists, allows to implement
{\em floating assignments} of the form
 \[
  \Float{z}{g(x,y)} \textbf{ such that } \phi(z)
 \]
with the following semantics: $z$ is assigned the value $\gs{i}(x,y)$,
where $i \geq 1$ is the smallest index such that $\phi(\gs{i}(x,y))$ holds.
We say that the assignment is {\em valid} if $\phi(g(x,y))$ holds and
$\phi$ involves
only continuous functions and strict inequalities.
Our assumption on the arithmetic guarantees that
(the computation underlying) a valid floating assignment terminates.
As ``syntactic sugar'', more complex operations
(e.g., linear equation solving) are also allowed in floating assignments,
because they can be decomposed into elementary operations.

We feel that any implementation of arbitrary precision arithmetic should
satisfy our requirement that the computed values converge to the exact result.
For instance, the documentation of the GMP library~\cite{GMP} states:
 ``Each function is defined to calculate with `infinite precision' followed by a truncation to the destination precision,
 but of course the work done is only what's needed to determine a result under that definition.''

To approximate the least fixed point of a PSP, we first transform it into a certain normal form.
A purely superlinear PSP~$f$ is called {\em perfectly superlinear}
 if every variable depends directly on itself and every superlinear SCC is purely superlinear.
The following proposition states that any PSP~$f$ can be made perfectly superlinear.

\newcommand{\propnormalform}{
 Let $f$ be a PSP of size~$s$.
 We can compute in time $O(n \cdot s)$ a perfectly superlinear PSP~$\tilde f$ with $\text{Var}(\tilde f) = \text{Var}(f) \cup \{\tilde X\}$
  of size $O(n \cdot s)$ such that $\mu_f = ( \mu_{\tilde f} )_{\text{Var}(f)}$.
}
\begin{proposition} \label{prop:normal-form}
 \propnormalform
\end{proposition}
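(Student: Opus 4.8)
The plan is to make $f$ perfectly superlinear by a sequence of local syntactic rewrites, each of which preserves the least fixed point on the original components and increases the size only moderately. I would proceed in three stages, corresponding to the three conjuncts of ``perfectly superlinear'': (i) every variable depends directly on itself; (ii) every superlinear SCC is purely superlinear; (iii) $f$ is purely superlinear (every $f_i$ has degree $\ge 2$). A convenient device is to introduce one fresh variable $\tilde X$ with equation $\tilde X = \tfrac12 \tilde X^2 + \tfrac12$, whose least fixed point is~$1$; substituting $\tilde X$ for a ``$1$'' inside a monomial does not change any fixed-point value but raises the degree of that monomial, and multiplying a monomial by $\tilde X$ likewise. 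This single auxiliary variable lies in its own SCC, depends directly on itself, and is purely superlinear, so it does not spoil the normal form.

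First I would ensure self-dependence: for each $i$ such that $\partial f_i/\partial X_i \equiv 0$, pick a monomial $c\cdot m$ of $f_i$ and replace it by $c\cdot m\cdot X_i$ together with a compensating correction so the coefficient sum is preserved --- more precisely, replace $f_i$ by $f_i - c\cdot m + c\cdot m\cdot X_i$; this is still a PSP (coefficients still sum to at most~$1$), it does not change $\mu_f$ because the added and removed terms agree at the fixed point when evaluated at $(\mu_f)_i\le 1$... but this last claim is false in general, so instead I will take the safer route: if $f_i = a_i + \text{(rest)}$ with constant term $a_i>0$, replace the constant $a_i$ by $a_i X_i$? That changes the value. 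The clean fix is to use $\tilde X$: replace the constant term $a_i$ by $a_i\tilde X$, which is harmless since $\mu_{\tilde X}=1$, and if there is no constant term, multiply one existing monomial by $X_i/X_i$ --- equivalently, observe that a variable with no constant term and not occurring in its own polynomial can be handled by substituting $\tilde X$ into one of its monomials as a factor $X_i$ only when $X_i$ already occurs in $f_i$; when it does not, add the monomial $\varepsilon X_i$ and subtract $\varepsilon$ from the constant, which is legitimate precisely because $\mu_f\preceq\vone$ guarantees... again a value change. I will resolve this properly in the write-up; the correct and standard trick is: to force $X_i$ to depend on itself, rewrite $X_i = f_i$ as $X_i = \tfrac12 f_i + \tfrac12 X_i f_i'$-style --- no; the genuinely correct move, used in the literature, is to note that $X_i = f_i(\vX)$ has the same least solution as $X_i = f_i(\vX)\cdot$ nothing, so instead one substitutes inside: pick any variable $X_j$ that actually occurs in $f_i$ and is such that $X_j$ eventually reaches $X_i$ (which exists once we have handled SCC structure), and this propagates dependence around the SCC; for SCCs that are single variables with a constant polynomial the condition is about superlinear SCCs only, so constant components are exempt.

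Second, for purifying superlinear SCCs: within a superlinear SCC $S$, some $f_i$ ($X_i\in S$) may have degree~$1$. I would raise its degree to~$2$ by multiplying one chosen monomial by $\tilde X$ (harmless, as $\mu_{\tilde X}=1$), so that every component inside a superlinear SCC has degree $\ge 2$, i.e. the SCC becomes purely superlinear. Third, for the global ``purely superlinear'' requirement, the remaining offenders are components in constant or linear SCCs of degree $\le 1$: multiply one monomial of each such $f_i$ by $\tilde X$ as well. Throughout, every rewrite either replaces a monomial $c\cdot m$ by $c\cdot m\cdot\tilde X$ (degree up by one, value unchanged since $\tilde X\mapsto 1$, coefficient sum unchanged, dependence on $\tilde X$ added but $\tilde X$ is its own sink SCC so SCC structure of the old variables is untouched) or adds the single fixed gadget equation for $\tilde X$. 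I would verify: (a) each such substitution is value-preserving, by a short monotonicity/Kleene argument showing $\mu$ of the modified system restricted to $\text{Var}(f)$ equals $\mu_f$ --- formally, the modified $\tilde f$ satisfies $\tilde f(\vx,1) = (f(\vx),\,\cdot\,)$ on the old coordinates, and $1 = \mu_{\tilde X}$, so the least fixed points coincide componentwise by induction on Kleene approximants; (b) the size bound: there are at most $n$ components, each gets $O(1)$ monomials touched, but because we may need the self-dependence rewrite to walk through a whole SCC the blow-up is $O(n)$ per polynomial, hence $O(n\cdot s)$ total, computable in $O(n\cdot s)$ time.

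The main obstacle is getting the self-dependence condition \emph{simultaneously} with purity and with a value-preserving, coefficient-sum-preserving rewrite, without destroying the SCC structure that the other conditions refer to --- the rewrites must be applied in the right order (first fix SCC-level purity, then within each SCC propagate self-dependence, then globalize purity with $\tilde X$) and one must check no step undoes a previous one. The value-preservation argument, while conceptually a one-line Kleene induction, is the place where the $\tilde X = \tfrac12\tilde X^2+\tfrac12$ gadget earns its keep: it supplies a ``$1$'' that can be inserted anywhere to bump degrees without ever changing a fixed-point coordinate, and it is itself already in perfect normal form.
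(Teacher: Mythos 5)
There are two genuine gaps in your proposal, both at the heart of the construction.

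First, your plan for making a superlinear SCC~$S$ purely superlinear — multiplying a monomial of a linear component~$f_i$ ($X_i\in S$) by~$\tilde X$ — does not work. By definition, $S$ is purely superlinear iff the PSP obtained by restricting~$f$ to the $S$-components \emph{and replacing every variable not in~$S$ by the constant~$1$} is purely superlinear. Since $\tilde X\notin S$ (it only depends on itself, hence sits in its own bottom SCC), the factor~$\tilde X$ is replaced by~$1$ before the degree inside~$S$ is measured, so the restricted polynomial is exactly as linear as before. The $\tilde X$-gadget can only raise the \emph{global} degree of~$f_i$, not the degree of the SCC-restriction. What is actually needed here is an inlining/substitution step: pick $X_j\in S$ with $f_j$ superlinear inside~$S$ and a monomial $m=X_j\cdot\tilde m$ of~$f_i$, and replace~$m$ by $\tfrac12 m+\tfrac12 f_j\,\tilde m$. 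This keeps the coefficient sum $\le 1$, leaves every fixed point unchanged, does not alter the dependence graph, and injects degree-$\ge 2$ terms \emph{in $S$-variables} into~$f_i$; at most~$n$ such substitutions suffice. Your proposal never introduces anything of this sort.

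Second, you never actually produce a correct rewrite for self-dependence; the paragraph cycles through several candidate fixes, rejects each (correctly, in most cases), and ends with an admittedly unresolved ``I will resolve this properly in the write-up.'' The missing observation is simple: replace $f_i$ by $\tfrac12 f_i+\tfrac12 X_i$. The equation $X_i=\tfrac12 f_i+\tfrac12 X_i$ is algebraically equivalent to $X_i=f_i$, so the solution set (in particular the least fixed point) is unchanged; the coefficient sum stays $\le\tfrac12\cdot 1+\tfrac12=1$; every $X_i$ now depends directly on itself; and since $f_i$ already has degree $\ge 2$ after the earlier steps, the modified polynomial still does, so neither global nor SCC-level purity is lost. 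Without this (or an equivalent) concrete rewrite, your construction is incomplete. Your use of the $\tilde X=\tfrac12\tilde X^2+\tfrac12$ gadget to obtain global purity, and the observation that it is value-preserving and lives in its own sink SCC, is fine and matches the paper; and your claimed $O(n\cdot s)$ size and time bound is also the right target. But the middle two stages need the concrete rewrites sketched above rather than the $\tilde X$-multiplication and the unfinished self-dependence discussion.
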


\subsection{The algorithm}

The algorithm receives as input a perfectly superlinear PSP $f$ and an error bound $\epsilon > 0$,
and returns vectors $\lb, \ub$ such that $\lb \le \mu_f \le \ub$ and $\ub - \lb \le \overline{\epsilon}$.
A first initialization step requires to compute a vector~$\vx$
with \mbox{$\vzero \prec \vx \prec f(\vx)$}, i.e., a ``strict'' pre-fixed point. This is done in Section~\ref{sub:pre-fixed}.
The algorithm itself is described in Section~\ref{sub:upper-and-lower}.

\subsubsection{Computing  a strict pre-fixed point} \label{sub:pre-fixed}

Algorithm~\ref{alg:compute-prefix} computes a strict pre-fixed point:

\vspace{2mm}
\begin{algorithm}[H]
\KwIn{perfectly superlinear PSP $f$}
\KwOut{$\vx$ with $\vzero \prec \vx \prec f(\vx) \prec \vone$}
 $\vx \leftarrow \vzero$\;
 \While{$\vzero \not\prec \vx$}{
  $Z \leftarrow \{i \mid 1 \le i \le n, f_i(\vx) = 0\}$\;
  $P \leftarrow \{i \mid 1 \le i \le n, f_i(\vx) > 0\}$\;
  $\vy_Z \leftarrow \overline{0}$\;
  $\Float{\vy_P}{f_P(\vx)}$ \textbf{such that} $\vzero \prec \vy_P \prec f_P(\vy) \prec \vone$\;
  $\vx \leftarrow \vy$\;
 }
\label{alg:compute-prefix}
\caption{Procedure \texttt{computeStrictPrefix}}
\end{algorithm}

\newcommand{\propalgcomputeprefix}{
  Algorithm~\ref{alg:compute-prefix} is correct and terminates after at most~$n$ iterations.
}
\begin{proposition} \label{prop:alg-compute-prefix}
 \propalgcomputeprefix
\end{proposition}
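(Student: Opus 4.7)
The plan is to track the support of the iterate through the loop and show that it grows by at least one index per iteration until it becomes full. Let $\vx^{(k)}$ denote the value of $\vx$ after the $k$-th iteration (with $\vx^{(0)}=\vzero$) and $P_k = \{i : \vx^{(k)}_i > 0\}$. Since every $f_i$ has nonnegative coefficients, the sign of $f_i(\vx)$ depends only on $\mathrm{supp}(\vx)$; hence the set $P$ computed in iteration~$k$ equals $T(P_{k-1})$ for the monotone set-operator $T(S) := \{i : f_i \text{ has a monomial with support contained in } S\}$, and the floating assignment then sets $P_k = T(P_{k-1})$, so $P_k = T^k(\emptyset) = \mathrm{supp}(f^k(\vzero))$. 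The ordinary Kleene iterates $f^k(\vzero)$ converge to $\mu_f \succ \vzero$ (the standing assumption of Section~\ref{sec:prelim}), so their supports grow to $\{1,\ldots,n\}$. The sequence $P_k$ is nondecreasing by monotonicity of~$T$, and if ever $P_k = P_{k-1} \subsetneq \{1,\ldots,n\}$, then $P_k$ would be a fixed point of~$T$, trapping the Kleene iterates below full support and contradicting $\mu_f \succ \vzero$. Consequently $|P_k|$ strictly increases each iteration and reaches $n$ after at most $n$ steps, at which point $\vzero \prec \vx$ and the loop-guard fails.

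Correctness of the returned vector then falls directly out of the floating postcondition: in the terminating iteration $P=\{1,\ldots,n\}$ and $Z=\emptyset$, so the postcondition $\vzero \prec \vy_P \prec f_P(\vy) \prec \vone$ reads exactly the output specification $\vzero \prec \vx \prec f(\vx) \prec \vone$. What remains to justify is that each floating assignment is \emph{valid}, i.e.\ that its strict condition already holds at the exact target $\vy = (f_P(\vx^{(k-1)}), \vzero)$, so that the arbitrary-precision procedure is guaranteed to find a witness after finitely many refinements. I would carry this along with two inductive invariants on the loop: (I1)~$\vx^{(k)} \prec \vone$, and (I2)~$\vx^{(k)}_{P_k} \prec f_{P_k}(\vx^{(k)})$, both vacuous at step~$0$ and re-established at step~$k$ directly by the floating postcondition.

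The validity check is the main obstacle. Because $P_{k-1}\subseteq P_k$, one has $\vx^{(k-1)}_{Z_k}=\vzero$, so the exact $\vy$ agrees with $\vx^{(k-1)}$ on $Z_k$ while on $P_k$ it strictly exceeds $\vx^{(k-1)}$: on $P_{k-1}$ by (I2), and on $P_k\setminus P_{k-1}$ because those components jump from $0$ to a positive value. The three strict inequalities of the floating condition are then dispatched as follows. The first, $\vzero \prec \vy_P$, is the definition of~$P$. The third, $f_P(\vy) \prec \vone$, uses (I1) together with perfect-superlinearity: every $f_i$ is non-constant because it depends on~$X_i$, so for $\vy \prec \vone$ one has $f_i(\vy) < f_i(\vone) \leq 1$. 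The delicate middle inequality $\vy_P \prec f_P(\vy)$ is obtained by exhibiting, for each $i \in P_k$, a non-constant monomial of $f_i$ whose support is contained in~$P_k$ and evaluating it at $\vy$ versus $\vx^{(k-1)}$; this is the step where the normal-form structure of $f$ is used in full, to ensure that the strict increase from $\vx^{(k-1)}$ to $\vy$ on $P_k$ propagates through each $f_i$ with $i\in P_k$. Combining this with the strict-growth argument of the first paragraph gives the proposition.
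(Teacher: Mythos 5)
Your proposal is correct and takes essentially the same route as the paper's proof: your invariants (I1)--(I2) are the paper's loop invariant $\vzero \le \vx \le f(\vx)$ together with $0 < \vx_j < f_j(\vx)$ for all $j$ in the support of $f^k(\vzero)$, and your support operator $T$ merely makes explicit the paper's appeal to $f^n(\vzero) \succ \vzero$. Your treatment of the delicate middle inequality---exhibiting an active non-constant monomial via the self-dependence of each variable---is also exactly the paper's argument, stated slightly more carefully.
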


The reader may wonder why Algorithm~\ref{alg:compute-prefix} uses a floating assignment
$\Float{\vy_P}{f_P(\vx)}$, given that it must also perform exact comparisons to obtain the sets $Z$ and $P$ and to decide exactly
 whether $\vy_P \prec f_P(\vy)$ holds in the \textbf{such that} clause of the floating assignment. The reason is that, while we perform such
operations exactly, we do not want to use the {\em result} of exact computations as input for other computations,
as this easily leads to an explosion in the required precision. For instance, the size of the exact result
of~$f_P(\vy)$ may be larger than the size of~$\vy$, while an approximation of smaller size
may already satisfy the \textbf{such that} clause.
In order to emphasize this, we {\em never}
store the result of an exact numerical computation in a variable.

\subsubsection{Computing lower and upper bounds} \label{sub:upper-and-lower}
\mbox{}
Algorithm~~\ref{alg:compute-prefix} uses Kleene iteration
$\vzero, f(\vzero), f(f(\vzero)), \ldots$ to compute a strict pre-fixed point.
One could, in principle, use the same scheme to compute lower bounds of~$\mu_f$,
as this sequence converges to~$\mu_f$ from below by Kleene's theorem. However, convergence of Kleene iteration is generally slow.
It is shown in~\cite{etessamiyannakakis} that for the $1$-dimensional PSP~$f$ with $f(X) = 0.5 X^2 + 0.5$ we have $\mu_f = 1$,
 and the $i$-th Kleene approximant~$\ks{i}$ satisfies $\ks{i} \le 1 - \frac{1}{i}$.
Hence, Kleene iteration may converge only logarithmically, i.e., the number of accurate bits is a logarithmic function of the number of iterations.

In~\cite{etessamiyannakakis} it was suggested to use Newton's method for faster convergence.
In order to see how Newton's method can be used, observe that instead of computing~$\mu_f$, one can equivalently compute the least nonnegative zero
 of \mbox{$f(\vX)-\vX$}.
Given an approximant~$\vx$ of~$\mu_f$, Newton's method first computes $g^{(\vx)}(\vX)$, the first-order linearization of~$f$ at the point~$\vx$:
 \[
  g^{(\vx)}(\vX) = f(\vx) + f'(\vx) (\vX - \vx)
 \]
The next Newton approximant~$\vy$ is obtained by solving $\vX = g^{(\vx)}(\vX)$, i.e.,
 \[
  \vy = \vx + (\Id - f'(\vx))^{-1} (f(\vx) - \vx)\;.
 \]
We write $\Ne_f(\vx) := \vx + (\Id - f'(\vx))^{-1} (f(\vx) - \vx)$, and usually drop the subscript of~$\Ne_f$.
If $\ns{0} \le \mu_f$ is any pre-fixed point of~$f$, for instance $\ns{0} = \vzero$,
 we can define a {\em Newton sequence} $(\ns{i})_i$ by setting $\ns{i+1} = \Ne(\ns{i})$ for $i \ge 0$.
It has been shown in~\cite{etessamiyannakakis,EKL07:stoc,EKL08:stacs} that Newton sequences
 converge at least linearly to~$\mu_f$. 
Moreover, we have $\vzero \le \ns{i} \le f(\ns{i}) \le \mu_f$ for all~$i$.

These facts were shown only for Newton sequences that are computed exactly,
i.e., without rounding errors.
Unfortunately, Newton approximants are hard to compute exactly: Since
each iteration requires to solve a linear equation system whose coefficients
depend on the results of the previous iteration,
the size of the Newton approximants easily explodes.
Therefore, we wish to use inexact arithmetic, but without losing the good properties of Newton's
method (reliable lower bounds, linear convergence).

Algorithm~\ref{gesamtalgo} accomplishes these goals,
and additionally computes post-fixed points~$\ub$ of~$f$, which are upper bounds on~$\mu_f$.%
\begin{algorithm}[ht]
\incmargin{1em}
\linesnumbered
\KwIn{perfectly superlinear PSP $f$, error bound $\epsilon > 0$}
\KwOut{vectors $\lb, \ub$ such that $\lb \le \mu_f \le \ub$ and $\ub - \lb \le \overline{\epsilon}$}
\label{alg:bounds}

  $\mathbf{lb} \leftarrow \texttt{computeStrictPrefix}(f)$\;
  $\mathbf{ub} \leftarrow \overline{1}$\;

  \While{$\mathbf{ub}-\mathbf{lb} \not\le \overline{\epsilon}$}
  {
   $\Float{\vx}{\Ne(\Ne(\lb))}$ \textbf{such that} $f(\lb) + f'(\lb) ( \vx - \lb ) \prec \vx \prec f(\vx) \prec \vone$\; \nllabel{u1}
   $\lb \leftarrow \vx$\;
   $Z \leftarrow \{i \mid 1 \le i \le n, f_i(\ub) = 1\}$\; \nllabel{u2}
   $P \leftarrow \{i \mid 1 \le i \le n, f_i(\ub) < 1\}$\;
   $\vy_Z \leftarrow \overline{1}$\; \nllabel{u2b}
   $\Float{\vy_P}{f_P(f(\ub))}$ \textbf{such that} $f_P(\vy) \prec \vy_P \prec f_P(\ub)$\; \nllabel{u3}
   \ForAll{\upshape superlinear SCCs $S$ of $f$ with $\vy_S = \overline{1}$}
   { \nllabel{u4}
     $\vt \leftarrow \vone - \lb_S$\;
     \If{$f_{SS}'(\vone) \vt \succ \vt$}
     { \nllabel{u6}
       $\displaystyle \Float{\vy_S}{\vone -
         \min \left\{1, \frac{\min_{i\in S}(f_{SS}'(\vone) \vt - \vt)_i}{2 \cdot \max_{i\in S} (f_S(\vtwo))_i} \right\} \cdot \vt}$ 
               \textbf{such that} $f_S(\vy) \prec \vy_S \prec \vone$\; \nllabel{u7}
     }
   }
   $\ub \leftarrow \vy$\; \nllabel{u-end}
 }
\caption{Procedure \texttt{calcBounds}}
\label{gesamtalgo}
\end{algorithm}
Let us describe the algorithm in some detail.
The lower bounds are stored in the variable~$\lb$.
The first value of~$\lb$ is not simply~$\vzero$, but is computed by $\texttt{computeStrictPrefix}(f)$,
in order to guarantee the validity of the following floating assignments.
We use Newton's method for improving the lower bounds because it converges fast (at least linearly) when performed exactly.
In each iteration of the algorithm, {\em two} Newton steps are performed using inexact arithmetic.
The intention is that two inexact Newton steps should improve the lower bound at least as much as one exact Newton step.
While this may sound like a vague hope for small rounding errors,
 it can be rigorously proved thanks to the \textbf{such that} clause of the floating assignment in line~\ref{u1}.
The proof involves two steps.
The first step is to prove that $\Ne(\Ne(\lb))$ is a (strict) post-fixed point of the function $g(\vX) = f(\lb) + f'(\lb) (\vX - \lb)$,
 i.e., $\Ne(\Ne(\lb))$ satisfies the first inequality in the \textbf{such that} clause.
For the second step, recall that $\Ne(\lb)$ is the least fixed point of~$g$.
By Knaster-Tarski's theorem, $\Ne(\lb)$ is actually the least post-fixed point of~$g$.
So, our value~$\vx$, the inexact version of~$\Ne(\Ne(\lb))$, satisfies $\vx \ge \Ne(\lb)$,
 and hence two inexact Newton steps are in fact at least as ``fast'' as one exact Newton step.
Thus, the~$\lb$ converge linearly to~$\mu_f$.

The upper bounds $\ub$ are post-fixed points, i.e., $f(\ub) \le \ub$ is an invariant of the algorithm.
The algorithm computes the sets $Z$ and~$P$ so that inexact arithmetic is only applied to the components~$i$ with $f_i(\ub) < 1$.
In the $P$-components, the function~$f$ is applied to~$\ub$ in order to improve the upper bound.
In fact, $f$ is applied twice in line~\ref{u3}, analogously to applying $\Ne$ twice in line~\ref{u1}.
Here, the \textbf{such that} clause makes sure that the progress towards~$\mu_f$ is at least as fast as the
progress of one exact application of~$f$ would be.
One can show that this leads to linear convergence to~$\mu_f$.

The rest of the algorithm (lines~\ref{u4}-\ref{u7}) deals with the problem that, given a post-fixed~$\ub$, the sequence
 $\ub, f(\ub), f(f(\ub)), \ldots$ does not necessarily converge to~$\mu_f$.
For instance, if $f(X) = 0.75 X^2 + 0.25$, then $\mu_f = 1/3$, but $1 = f(1) = f(f(1)) = \cdots$.
Therefore, the if-statement of Algorithm~\ref{gesamtalgo} allows to improve the upper bound from~$\vone$ to a post-fixed point less than~$\vone$,
 by exploiting the lower bounds~$\lb$.
This is illustrated in Figure~\ref{fig:incons-witness} for a $2$-dimensional scPSP~$f$.
\begin{figure}[ht]
 \begin{tabular}{cc}
 \includegraphics[width=0.48\textwidth]{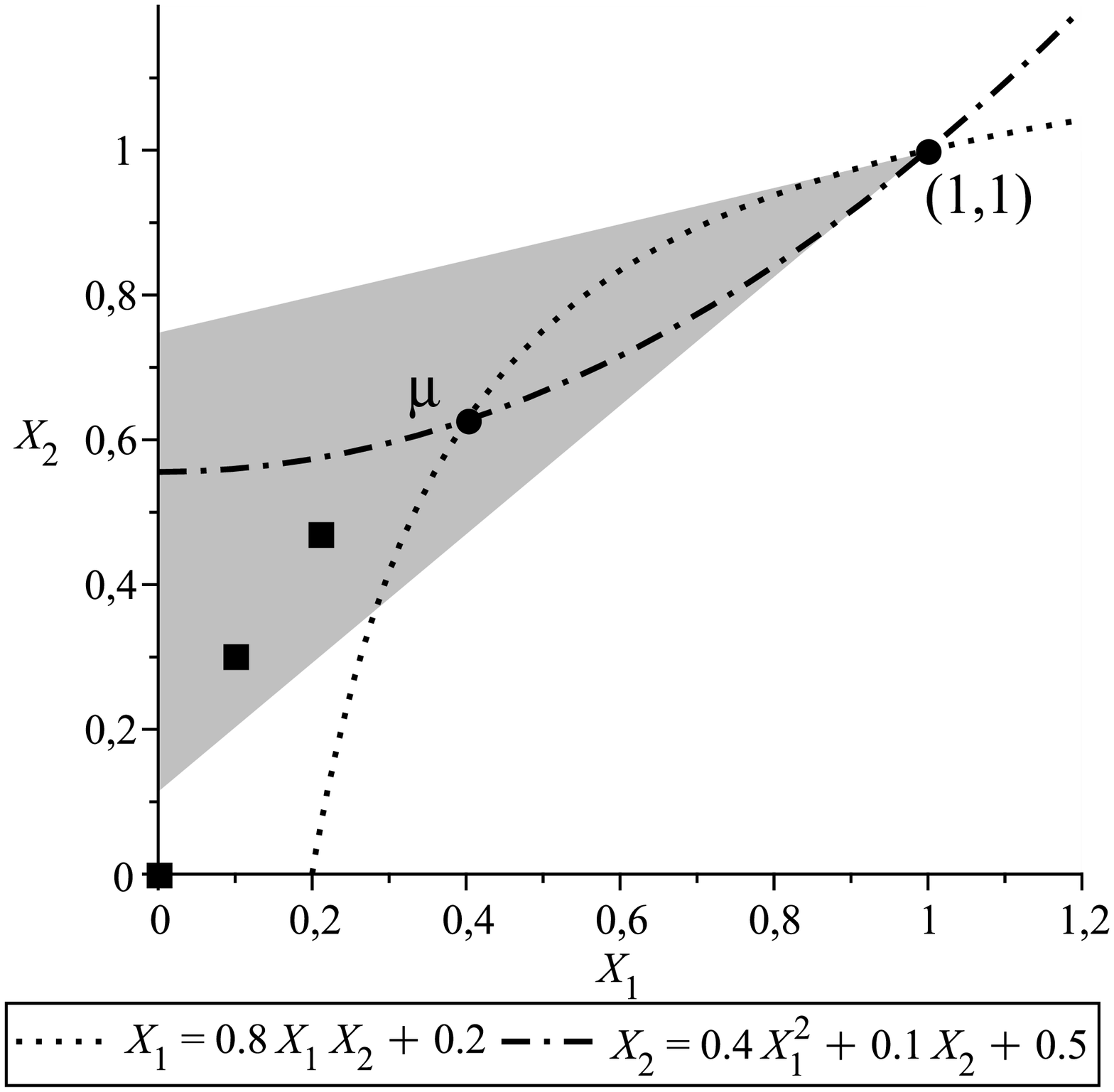} &
 \includegraphics[width=0.48\textwidth]{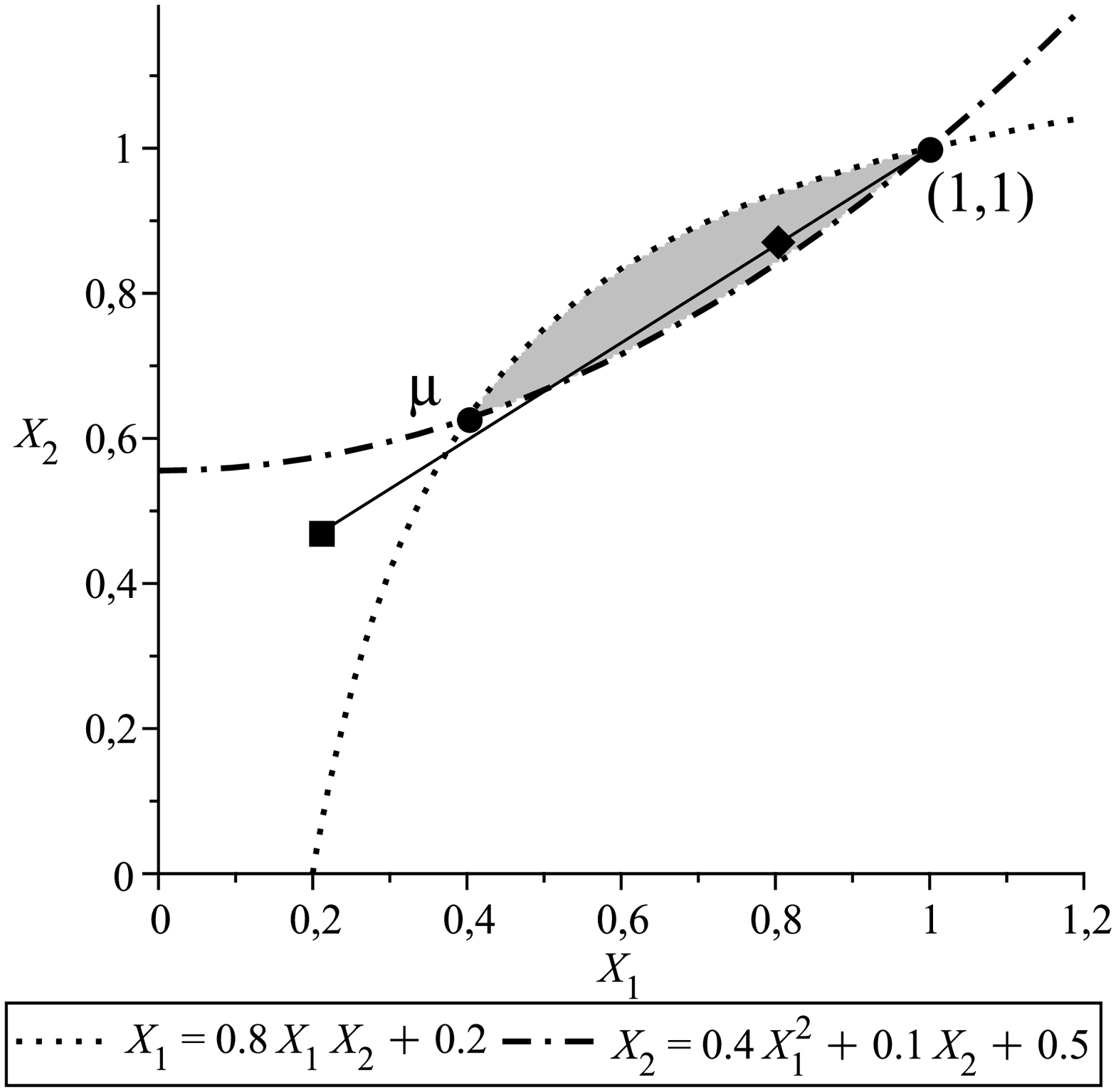} \\
 (a) & (b)
 \end{tabular}
 \caption{Computation of a post-fixed point less than~$\vone$.}
 \label{fig:incons-witness}
\end{figure}
The dotted lines indicate the curve of the points $(X_1, X_2)$ satisfying
 $X_1 = 0.8 X_1 X_2 + 0.2$ and $X_2 = 0.4 X_1^2 + 0.1 X_2 + 0.5$.
Notice that $\mu_f \prec \vone = f(\vone)$.
In Figure~\ref{fig:incons-witness}~(a) the shaded area consists of those points~$\lb$ where $f'(\vone) (\vone - \lb) \succ \vone - \lb$ holds,
 i.e., the condition of line~\ref{u6}.
One can show that $\mu_f$ must lie in the shaded area, so by continuity, any sequence converging to~$\mu_f$, in particular the sequence of lower bounds~$\lb$,
 finally reaches the shaded area.
In Figure~\ref{fig:incons-witness}~(a) this is indicated by the points with the square shape.
Figure~\ref{fig:incons-witness}~(b) shows how to exploit such a point~$\lb$ to compute a post-fixed point~$\ub \prec \vone$
 (post-fixed points are shaded in Figure~\ref{fig:incons-witness}~(b)):
The post-fixed point~$\ub$ (diamond shape) is obtained by starting at~$\vone$
 and moving a little bit along the straight line between $\vone$ and~$\lb$, cf.~line~\ref{u7}.
The sequence $\ub, f(\ub), f(f(\ub)), \ldots$ now converges linearly to~$\mu_f$.

\newcommand{\stmtthmgesamtalgo}{
 Algorithm~\ref{gesamtalgo} terminates and computes vectors $\lb, \ub$ such that $\lb \le \mu_f \le \ub$ and $\ub - \lb \le \overline{\epsilon}$.
 Moreover, the sequences of lower and upper bounds computed by the algorithm both converge linearly to~$\mu_f$.
}
\begin{theorem} \label{thm:gesamt-algo}
 \stmtthmgesamtalgo
\end{theorem}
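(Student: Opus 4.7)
The plan is to prove, in order, (i) the loop invariant $\lb \le \mu_f \le \ub$ together with $\lb \prec f(\lb)$ and $f(\ub) \le \ub$; (ii) termination, which reduces to validity of every floating assignment; (iii) linear convergence of both~$(\lbs{i})_i$ and~$(\ubs{i})_i$.

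For the invariant, the lower-bound update is analyzed via Knaster--Tarski applied to the linearization $g(\vX) := f(\lb) + f'(\lb)(\vX - \lb)$, whose least (and hence least post-) fixed point equals $\Ne(\lb)$. The first strict inequality in line~\ref{u1}, $g(\vx) \prec \vx$, says that $\vx$ is a strict post-fixed point of~$g$, so $\vx \ge \Ne(\lb)$. Combined with the classical facts from~\cite{EKL07:stoc,EKL08:stacs} that $\lb \le \Ne(\lb) \le \mu_f$ whenever $\vzero \le \lb \le \mu_f$ is pre-fixed, this yields $\lb \le \mu_f$ after the update, while $\vx \prec f(\vx)$ re-establishes strict pre-fixedness for the next iteration. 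For the upper bound, setting $\vy_Z = \vone$ and requiring $f_P(\vy) \prec \vy_P$ together with monotonicity of~$f$ makes $\vy$ a strict post-fixed point, so $\vy \ge \mu_f$ by Knaster--Tarski; the SCC refinement in line~\ref{u7} preserves strict post-fixedness analogously via $f_S(\vy) \prec \vy_S$.

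For termination, each floating assignment is valid: its \textbf{such that} predicate involves only continuous functions and strict inequalities, so by the criterion stated in Section~\ref{sec:inex} it suffices to check that the exact value of the assigned expression satisfies the predicate strictly. For line~\ref{u1}, the exact $\Ne(\Ne(\lb))$ satisfies $g(\Ne(\Ne(\lb))) \prec \Ne(\Ne(\lb))$ because $\Ne(\Ne(\lb)) \succ \Ne(\lb)$ (one exact Newton step strictly improves when not at the fixed point; perfect superlinearity of~$f$ is used here, cf.\ Proposition~\ref{prop:normal-form}), and $\Ne(\Ne(\lb)) \prec f(\Ne(\Ne(\lb)))$ follows by the same argument one step further. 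Line~\ref{u3} uses that $\ub$ is strictly post-fixed on~$P$, so $f_P(f(\ub)) \prec f_P(\ub)$ and $f_P(f_P(f(\ub))) \prec f_P(f(\ub))$. Line~\ref{u7} is the most delicate step and the first main obstacle: I would show, via a second-order Taylor expansion of $f_S$ around~$\vone$, that the explicit coefficient $c = \min\{1, \min_{i\in S}(f'_{SS}(\vone)\vt - \vt)_i / (2\max_{i\in S}(f_S(\vtwo))_i)\}$ has been chosen precisely so that the exact value $\vone - c\cdot\vt$ satisfies $f_S(\vone - c\vt) \prec \vone - c\vt \prec \vone$, the denominator bounding the quadratic remainder and the numerator controlling the linear gain.

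For linear convergence of~$(\lbs{i})_i$, the inequality $\vx \ge \Ne(\lb)$ means each loop iteration advances $\lb$ at least as much as one exact Newton step, and exact Newton iteration converges linearly to~$\mu_f$ on PSPs by~\cite{etessamiyannakakis,EKL07:stoc,EKL08:stacs}. Linear convergence of~$(\ubs{i})_i$ is the second main obstacle, handled in two phases. First, the SCC improvement in lines~\ref{u4}--\ref{u7} must eventually fire on every inconsistent superlinear SCC~$S$: such an~$S$ satisfies $\rho(f'_{SS}(\vone)) > 1$ by Proposition~\ref{prop:spektralradius}, and since $\lb_S \to \mu_{f,S}$ with $\mu_{f,S} \prec \vone$, a Perron--Frobenius argument (using that $\vone - \mu_{f,S}$ is a strictly positive vector and $f'_{SS}(\vone)$ is irreducible) shows that $f'_{SS}(\vone) \vt \succ \vt$ holds once $\vt = \vone - \lb_S$ is sufficiently close to $\vone - \mu_{f,S}$. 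Second, once line~\ref{u7} has produced a post-fixed $\ub$ with $\ub_S \prec \vone$ on every inconsistent SCC, the Jacobian $f'(\mu_f)$ restricted to the relevant components has spectral radius strictly less than~$1$, so the iteration $\ub \mapsto f(f(\ub))$ is a contraction of $\ub - \mu_f$ with a factor bounded away from~$1$, yielding linear convergence. In the remaining (consistent) components $\ub_i = 1 = (\mu_f)_i$ is maintained, and in these coordinates linear convergence holds trivially because $\lb_i \to 1$ linearly from the Newton side, ensuring the termination test $\ub - \lb \le \overline{\epsilon}$ is met.
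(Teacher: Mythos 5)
Your overall architecture matches the paper's (two inexact Newton steps majorize one exact step via Knaster--Tarski on the linearization; upper bounds are strict post-fixed points improved by applying $f$; the SCC step is needed to escape $\vone$), but two of your key inferences are false as stated. First, for the validity of line~\ref{u1} you argue that the exact value $\vz := \Ne(\Ne(\lb))$ satisfies $g(\vz) \prec \vz$ for the linearization $g$ ``because $\vz \succ \Ne(\lb)$.'' Writing $B = f'(\lb)$, one has $g(\vz) - \vz = -(\Id - B)(\vz - \Ne(\lb))$, and for a nonnegative $B$ with $\rho(B)<1$ it is \emph{not} true that $(\Id-B)\vw \succ \vzero$ for every $\vw \succ \vzero$ (take $B = \left(\begin{smallmatrix}0 & 0.9\\ 0.001 & 0\end{smallmatrix}\right)$ and $\vw = (1,2)^\top$). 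The paper's argument needs the specific representation $\vz - \Ne(\lb) = f'(\Ne(\lb))^{*} R(\lb,\vDelta)$ coming from the Taylor remainder $R \succ \vzero$, for which $(\Id - f'(\lb))\,f'(\Ne(\lb))^{*}R \ge (\Id - f'(\Ne(\lb)))\,f'(\Ne(\lb))^{*}R = R \succ \vzero$; being merely above the least fixed point of $g$ is not enough.

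Second, your argument that line~\ref{u6} eventually fires relies on ``$\rho(f'_{SS}(\vone))>1$ plus irreducibility plus $\vone - \mu_{f,S} \succ \vzero$ implies $f'_{SS}(\vone)(\vone-\mu_{f,S}) \succ \vone - \mu_{f,S}$.'' This implication is false: for an irreducible nonnegative $A$ with $\rho(A)>1$, Perron--Frobenius only gives $A\vv \succ \vv$ for the Perron vector (or yields $A\vv > \vv$ in \emph{some} component via the left eigenvector), not for an arbitrary positive $\vv$ --- e.g.\ $A = \left(\begin{smallmatrix}0&2\\2&0\end{smallmatrix}\right)$, $\vv=(1,10)^\top$. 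What actually makes the specific vector $\vone - \mu_{f,S}$ work is strict convexity of the purely superlinear $f_S$: $f'_{SS}(\vone)(\vone-\mu_g) \succ f_S(\vone)-f_S(\mu_g) = \vone - \mu_g$ (the paper proves this by an integral/Taylor estimate), after which continuity and $\lb_S \to \mu_{f,S}$ give the condition. Relatedly, your final contraction claim for the upper bounds (``$\rho(f'(\mu_f))<1$ on the relevant components'') is asserted rather than proved, and it ignores that the \textbf{such that} clause of line~\ref{u3} only guarantees progress comparable to \emph{one} exact application of $f$; the paper instead majorizes $(\ubs{k})$ by an exact Kleene sequence started from a strict post-fixed point $\mu_f + r(\vone-\mu_f)$ and derives the linear rate from the strict inequality $f_i(\vp) < \vp_i$ on inconsistent components, not from a spectral-radius bound. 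These gaps need to be filled before the proof is complete.
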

Notice that Theorem~\ref{thm:gesamt-algo} is about the convergence speed of the
 approximants, not about the time needed to compute them.
To analyse the computation time, one would need stronger requirements
 on how floating assignments are performed.

The lower and upper bounds computed by Algorithm~\ref{gesamtalgo} have a special feature:
 they satisfy $\lb \prec f(\lb)$ and $\ub \ge f(\ub)$.
The following proposition guarantees that such points are in fact lower and upper bounds.

\newcommand{\stmtpropprefixpostfix}{
 Let $f$ be a perfectly superlinear PSP.
 Let $\vzero \le \vx \le \vone$.
 If $\vx \prec f(\vx)$, then $\vx \prec \mu_f$.
 If $\vx \ge f(\vx)$, then $\vx \ge \mu_f$.
}
\begin{proposition} \label{prop:prefix-postfix}
 \stmtpropprefixpostfix
\end{proposition}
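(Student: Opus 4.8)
The plan is to exploit the structure of a perfectly superlinear PSP — every variable depends directly on itself — together with monotonicity and convexity of $f$ on $[\vzero,\vone]$, plus the already-available fact (from Kleene and Knaster-Tarski, as recalled in Section~\ref{sec:prelim}) that $\mu_f$ is the limit of $\vzero, f(\vzero), f(f(\vzero)),\ldots$ and is the least post-fixed point below~$\vone$. The post-fixed point direction, ``$\vx \ge f(\vx) \Rightarrow \vx \ge \mu_f$'', is immediate: Knaster–Tarski says $\mu_f$ is the \emph{least} post-fixed point, so any post-fixed point~$\vx$ with $\vzero\le\vx\le\vone$ satisfies $\vx\ge\mu_f$. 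No perfect superlinearity is needed here. So the real content is the pre-fixed direction.

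For the pre-fixed direction, first I would dispatch the easy half-conclusion that $\vx\le\mu_f$: since $\vx$ is pre-fixed ($\vx\le f(\vx)$) and $f$ is monotone on $[\vzero,\vone]$, the sequence $\vx, f(\vx), f(f(\vx)),\ldots$ is nondecreasing and bounded above by~$\vone$, hence converges to some fixed point~$\vx^\ast\ge\vx$; and because $\vx\ge\vzero$ and $f$ is monotone, $f^{(k)}(\vzero)\le f^{(k)}(\vx)$ for all~$k$, so taking limits $\mu_f\le\vx^\ast$. That alone does not give $\vx\prec\mu_f$ though — $\vx^\ast$ could in principle be strictly larger than $\mu_f$. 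The key extra step is to show $\vx^\ast=\mu_f$, equivalently that $\mu_f$ is the \emph{only} fixed point of $f$ in the interval $[\vx,\vone]$ reachable from~$\vx$, and this is exactly where perfect superlinearity enters. The standard tool is: for an irreducible (scPSP) block, the map $f$ restricted to that block is strictly convex along the relevant directions precisely because the degree is at least~$2$ in each coordinate (purely superlinear), and a strictly convex monotone self-map on a box can have at most two fixed points on any line segment, with $\mu_f$ the smaller. I would run this SCC-by-SCC along the DAG of SCCs: process bottom SCCs first, where $\tilde f$ is either constant (trivial) or a purely superlinear scPSP, use strict convexity there to pin down that the iteration from $\vx_S$ lands on $(\mu_f)_S$ rather than on a larger fixed point, then substitute and move up the DAG, using that each variable depends on itself so no SCC is ``degenerate linear in its own coordinate''.

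The strictness $\vx\prec\mu_f$ (rather than just $\vx\le\mu_f$) I would get by the following argument: suppose $\vx_j=(\mu_f)_j$ for some~$j$. Since $\vx\prec f(\vx)$ componentwise (strict!) and $f(\mu_f)=\mu_f$, we get $(\mu_f)_j=\vx_j<f_j(\vx)\le f_j(\mu_f)=(\mu_f)_j$ using monotonicity in the middle inequality — a contradiction. So in fact the hypothesis $\vx\prec f(\vx)$ with $\vx\le\mu_f$ already forces $\vx\prec\mu_f$ directly once we know $\vx\le\mu_f$, and the convexity machinery is only needed for the $\le$ part, i.e.\ to rule out $\vx^\ast\succneq\mu_f$. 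I expect the main obstacle to be exactly this uniqueness-of-fixed-point step: making the strict-convexity argument fully rigorous requires handling reducible PSPs via the SCC decomposition and being careful that ``perfectly superlinear'' (every variable depends on itself, every superlinear SCC purely superlinear) is precisely the hypothesis that makes the line-restriction of $f$ strictly convex in each coordinate direction, so that on the segment from $\vx$ to $\vone$ the function $f$ cannot cross the diagonal twice above~$\mu_f$. Everything else — monotonicity, the sandwiching of Kleene iterates, and the final strictness — is routine.
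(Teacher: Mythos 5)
Your post-fixed direction is correct and identical to the paper's: it is immediate from Knaster–Tarski, and you correctly note that perfect superlinearity plays no role there.

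For the pre-fixed direction there is a genuine gap. Your plan is: iterate $f$ upward from $\vx$ to a fixed point $\vx^\ast \ge \mu_f$, then prove $\vx^\ast = \mu_f$ by a ``strict convexity gives at most two fixed points on a line segment, $\mu_f$ being the smaller'' argument, processed SCC-by-SCC. You yourself flag this uniqueness step as ``the main obstacle'', and indeed it is not carried out. As stated it does not work: the fact that a strictly convex scalar map has at most two fixed points does not tell you which one the increasing Kleene sequence from $\vx$ converges to. The conclusion $\vx^\ast = \mu_f$ would follow from $\vx \le \mu_f$ (then $f^i(\vx)\le f^i(\mu_f)=\mu_f$ for all $i$), but $\vx\le\mu_f$ is exactly what is to be proved, so the reduction is circular. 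What is actually needed is an \emph{a priori} reason why $\vx^\ast$ cannot be a larger fixed point — e.g.\ that just below every fixed point $\vz$ with $\mu_f\prec\vz\le\vone$ the map $f$ is strictly post-fixed, whence no strictly pre-fixed increasing sequence can converge to~$\vz$. This is provable for scPSPs, but the SCC decomposition is delicate (non-bottom SCCs see a ``moving target'' as lower SCCs converge, and not every component is strictly superlinear), and none of it appears in the proposal beyond the phrase ``substitute and move up the DAG''. Your final strictness argument ($\vx_j=(\mu_f)_j$ gives $(\mu_f)_j=\vx_j<f_j(\vx)\le f_j(\mu_f)=(\mu_f)_j$) is fine, but only once $\vx\le\mu_f$ is in hand.

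The paper closes the gap by a sandwich from \emph{above} rather than uniqueness from below. It uses Lemma~\ref{lem:close-to-one} to produce, by taking $r<1$ close enough to~$1$, a post-fixed point $\vy=\mu_f+r(\vone-\mu_f)$ with $\vx\prec\vy\le\vone$ (possible because $\vx\prec f(\vx)\le\vone$ forces $\vx\prec\vone$), and then invokes Lemma~\ref{lem:kleene-converges-linearly}, which guarantees that the \emph{downward} Kleene iterates $f^i(\vy)$ converge to~$\mu_f$ — this is precisely the nontrivial fact that iteration from a well-chosen post-fixed point below~$\vone$ does not get stuck at a spurious fixed point. Monotonicity then gives $\vx\prec f(\vx)\le f^i(\vx)\le f^i(\vy)\to\mu_f$, hence $\vx\prec f(\vx)\le\mu_f$. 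This neatly outsources the ``which fixed point do we hit?'' question to the two auxiliary lemmas about the segment $(\mu_f,\vone)$, instead of trying to control the upward iteration from an arbitrary~$\vx$.
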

\noindent So a user of Algorithm~\ref{gesamtalgo} can immediately verify that the computed bounds are correct.
To summarize, Algorithm~\ref{gesamtalgo} computes provably and even verifiably correct
 lower and upper bounds,
 although exact computation is restricted to detecting numerical problems.
 See Section~\ref{sub:case-neutron} for experiments.

\subsection{Proving consistency using the inexact algorithm}
In Section~\ref{sec:consistency} we presented a simple and efficient algorithm to check the consistency of a PSP.
Algorithm~\ref{gesamtalgo} is aimed at approximating~$\mu_f$, but note that it can also prove the inconsistency of a PSP:
 when the algorithm sets $\ub_i < 1$, we know $(\mu_f)_i < 1$.
This raises the question whether Algorithm~\ref{gesamtalgo} can also be used for proving consistency.
The answer is yes, and the procedure is based on the following proposition.
\newcommand{\stmtpropproveconsinex}{
  Let $f$ be an scPSP.
  Let $\vt \succ \vzero$ be a vector with $f'(\vone) \vt \le \vt$.
  Then $f$ is consistent.
}
\begin{proposition} \label{prop:prove-cons-inex}
 \stmtpropproveconsinex
\end{proposition}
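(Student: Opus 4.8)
The plan is to reduce the statement to a fact about the spectral radius of the Jacobian and then invoke Perron--Frobenius theory. Since $f$ is an scPSP, the matrix $A := f'(\vone)$ is nonnegative and irreducible (as recorded in the preliminaries: $f$ is an scPSP iff $f'(\vone)$ is irreducible). By Proposition~\ref{prop:spektralradius}, $f$ is consistent if and only if $\rho(A) \le 1$, so it suffices to show that the hypothesis $A\vt \le \vt$ with $\vt \succ \vzero$ forces $\rho(A) \le 1$.

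To do this I would pass to the transpose. The matrix $A^\top$ is again nonnegative and irreducible, and $\rho(A^\top) = \rho(A)$ because $A$ and $A^\top$ have the same characteristic polynomial. Hence Lemma~\ref{lemma:pf1}(2), applied to $A^\top$, yields a vector $\vu \succ \vzero$ with $A^\top \vu = \rho(A)\,\vu$, i.e.\ a strictly positive \emph{left} eigenvector of~$A$ for the eigenvalue~$\rho(A)$. Multiplying the hypothesis $A\vt \le \vt$ on the left by $\vu^\top$ (legitimate since $\vu \succ \vzero$) gives
\[
 \rho(A)\,\vu^\top\vt \;=\; (A^\top\vu)^\top \vt \;=\; \vu^\top A \vt \;\le\; \vu^\top \vt .
\]
Since $\vu \succ \vzero$ and $\vt \succ \vzero$ we have $\vu^\top\vt > 0$, so dividing yields $\rho(A) \le 1$, which finishes the proof via Proposition~\ref{prop:spektralradius}.

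There is essentially no hard obstacle; the only points needing care are that one must use the left Perron eigenvector (obtained by feeding $A^\top$ into Lemma~\ref{lemma:pf1}) rather than a right one, and that the weak inequality $A\vt \le \vt$ is already enough, so no strictness assumption is wasted. If one prefers to stay with right eigenvectors, an alternative is a direct argument by contradiction: assuming $\rho(A) > 1$, take a right eigenvector $\vv \succ \vzero$ of $A$ for $\rho(A)$ (Lemma~\ref{lemma:pf1}(2)), set $\lambda := \min_i \vt_i/\vv_i > 0$ and $\vz := \vt - \lambda\vv$. If $\vz = \vzero$, then $\vt$ is itself the eigenvector and $\rho(A)\vt = A\vt \le \vt$ already forces $\rho(A) \le 1$, a contradiction. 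Otherwise $\vz \ge \vzero$, $\vz \neq \vzero$, and $\vz$ has at least one zero coordinate (at the index attaining $\lambda$), while $A\vz = A\vt - \lambda\rho(A)\vv \le \vt - \lambda\rho(A)\vv \prec \vt - \lambda\vv = \vz$, the strict step using $\rho(A) > 1$ together with $\lambda>0$ and $\vv \succ \vzero$; but then a zero coordinate of~$\vz$ would be mapped by the nonnegative matrix $A$ to a strictly negative value, which is impossible. Either way $\rho(A) \le 1$.
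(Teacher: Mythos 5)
Your proof is correct and follows the same two-step strategy as the paper: reduce the claim to $\rho\bigl(f'(\vone)\bigr)\le 1$ and then invoke Proposition~\ref{prop:spektralradius}. Where you diverge is in how the spectral-radius bound is obtained. The paper's proof is a one-liner citing Lemma~\ref{lemma:pf1}(4): since $\vt\succ\vzero$ and $A\vt\le\vt\prec\beta\vt$ for every $\beta>1$ (with a suitable small $\alpha>0$ on the other side, as $A\vt\succ\vzero$ by irreducibility), that lemma gives $\rho(A)<\beta$ for every $\beta>1$, whence $\rho(A)\le 1$. You instead apply Lemma~\ref{lemma:pf1}(2) to the transpose $A^\top$ to extract a strictly positive left Perron eigenvector, pair it with the hypothesis $A\vt\le\vt$, and divide by the positive inner product $\vu^\top\vt$; this gives $\rho(A)\le 1$ in one stroke without the limiting argument over~$\beta$, and makes it explicit that the weak inequality $A\vt\le\vt$ is all that is needed. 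Your alternative contradiction argument with the right Perron eigenvector and the scaling $\lambda=\min_i\vt_i/\vv_i$ is also sound and entirely elementary. All three variants rest on standard Perron--Frobenius facts and reach the same conclusion; your route is a bit longer than the paper's terse citation, but it is self-contained and a little more transparent about why no strictness in the hypothesis is required.
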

\noindent Proposition~\ref{prop:prove-cons-inex} can be used to identify consistent components.

Use Algorithm~\ref{gesamtalgo} with some (small) $\epsilon$ to compute $\ub$ and $\lb$.
Take any bottom SCC~$S$.
\begin{itemize}
 \item
   If $f'(\vone) (\vone - \lb_S) \le \vone - \lb_S$, mark all variables in~$S$ as consistent and remove the $S$-components from~$f$.
   In the remaining components, replace all variables in~$S$ with~$1$.
 \item
   Otherwise, remove~$S$ and all other variables that depend on~$S$ from~$f$.
\end{itemize}

Repeat with the new bottom SCC until all SCCs are processed.

\noindent There is no guarantee that this method detects all~$i$ with $(\mu_f)_i = 1$.


\subsection{Case study: A neutron branching process} \label{sub:case-neutron}

One of the main applications of the theory of branching processes is
the modelling of cascade creation of particles in physics. We
study a problem described by Harris in~\cite{harris}. Consider a ball of
fissionable radioactive material of radius~$D$. Spontaneous fission
of an atom can liberate a neutron, whose collision with another
atom can produce further neutrons etc. If $D$ is very small, most neutrons
leave the ball without colliding.
If $D$ is very large, then nearly all neutrons eventually collide,
and the probability that the neutron's progeny never dies is large.
A well-known result
shows that, loosely speaking, the population of a process that does not go extinct
grows exponentially over time with large probability. Therefore, the
neutron's progeny never dying out actually means that after a (very)
short time all the material is fissioned, which amounts to a nuclear explosion.
The task is to compute the largest value of~$D$ for which the probability of
extinction of a neutron born at the centre of the ball is still~$1$
(if the probability is $1$ at the centre, then it is $1$ everywhere).
This is often called the critical radius.
Notice that, since the number of atoms that undergo spontaneous fission is large (some hundreds per
second for the critical radius of plutonium), if the probability of extinction lies only slightly below~1, there is already a large probability of a chain reaction. Assume that
a neutron born at distance $\xi$ from the centre leaves
the ball without colliding with probability $l(\xi)$, and collides
with an atom at distance $\eta$ from the centre with
probability density $R(\xi,\eta)$. Let further $f(x) = \sum_{i \geq 0} p_i x^i$,
where $p_i$ is the probability that a collision generates $i$ neutrons.
For a neutron's progeny to go extinct, the neutron must either leave the ball
without colliding, or collide at some distance~$\eta$ from the centre, but in
such a way that the progeny of all generated neutrons goes extinct.
So the extinction probability $Q_D(\xi)$ of a neutron born at distance~$\xi$ from the centre is given by~\cite{harris},~p.~86:
$$Q_D(\xi) = l(\xi) + \int_0^D R(\xi,\eta) f(Q_D(\eta))\; d\eta$$
Harris takes $f(x) = 0.025 + 0.830 x + 0.07x^2 + 0.05 x^3 + 0.025 x^4$,
and gives expressions for both $l(\xi)$ and $R(\xi,\eta)$. By discretizing
the interval $[0,D]$ into $n$ segments and replacing the integral by
a finite sum we obtain a PSP of dimension $n+1$ over the variables
$\{ Q_D(jD/n) \mid 0\leq j \leq n\}$.
Notice that $Q_D(0)$ is the probability that a neutron born in the centre does not cause an explosion.

\paragraph{\em Results}
For our experiments we used three different discretizations $n=20,50,100$.
We applied our consistency algorithm from Section~\ref{sec:consistency} and Maple's Simplex to check inconsistency, i.e., to check whether an explosion occurs.
The results are given in the first 3 rows of Table~\ref{tab:qd0bounds}: Again our algorithm dominates the LP approach, although the polynomials are much denser than in the $\hs{n}$-systems.
\iftechrep{\begin{center}}{}
\begin{table}
\begin{tabular}[c]{l|rrr|rrr|rrr|rrr}
\hspace{1.75cm} $D$                           & \multicolumn{3}{c|}{2}   & \multicolumn{3}{c|}{3} & \multicolumn{3}{c|}{6} & \multicolumn{3}{c}{10} \\
\hspace{1.75cm} $n$                           & 20       & 50     & 100      & 20 & 50 & 100 & 20 & 50 & 100 & 20 & 50 & 100 \\ \hline
inconsistent (yes/no)                         &   n      &   n    &  n       & y & y  & y   & y  & y  & y   & y & y  & y  \\
Cons.~check (Alg. Sec.~\ref{sec:consistency}) &   $<1$   & $<1$   &  2       & $<1$ & $<1$  & 2  & $<1$  & $<1$  & 2  & $<1$ & $<1$  & 2 \\
Cons.~check (exact LP)                        &   $<1$   &  20    &  258     & $<1$ & 22 & 124 & $<1$  & 16 & 168 & $<1$ & 37  & 222\\
Approx.~$Q_D$ ($\epsilon = 10^{-3}$)             &  $<1$   & $<1$   &  4       & 2  &  8   & 32  &  1 & 5  & 21  &  1 & 4 & 17 \\
Approx.~$Q_D$ ($\epsilon = 10^{-4}$)            &  $<1$   & $<1$   &  4       & 2  &  8   & 34  &  2 & 7  & 28 &   1 & 6  & 23
\end{tabular}
\caption{Runtime in seconds of various algorithms on different values of~$D$ and~$n$.}
\label{tab:qd0bounds}
\end{table}
\iftechrep{\end{center}}{}

We also implemented Algorithm~\ref{alg:bounds} using Maple for computing lower and upper bounds on~$Q_D(0)$ with two different values of the error bound~$\epsilon$.
The runtime is given in the last two rows.
By setting the \emph{Digits} variable in Maple we controlled the precision of Maple's software floating-point numbers for the floating assignments.
In all cases starting with the standard value of~10, Algorithm~\ref{alg:bounds} increased {\em Digits} at most twice by~$5$, resulting in a maximal {\em Digits} value of~$20$.
\iftechrep{\Andreas{Version MIT Bild}
The numerical results, plotted in Figure~\ref{img:critical-radius}, fit in well with the approximations given in~\cite{harris}.

\begin{figure}[tbp]
\centering
\includegraphics[width=0.8\textwidth]{critical_radius.eps}
\caption{$Q_D(0)$ for different values of $D$, $n=100$.}
\label{img:critical-radius}
\end{figure}

As a side note we mention that Algorithm~\ref{alg:bounds} computed an upper bound $\prec \vone$, and thus proved inconsistency,
after the first few iterations in all investigated cases, almost as fast as the consistency algorithm from Section~\ref{sec:consistency}.

\paragraph{\em Computing approximations for the critical radius.}
From the data displayed in Figure~\ref{img:critical-radius} one can suspect that the critical radius,
 i.e., the smallest value of~$D$ for which $Q_D(0) = 1$, lies somewhere between 2.7 and~3.}{We mention that Algorithm~\ref{alg:bounds} computed an upper bound $\prec \vone$, and thus proved inconsistency,
after the first few iterations in all investigated cases, almost as fast as the algorithm from Section~\ref{sec:consistency}.
\paragraph{\em Computing approximations for the critical radius.}
After computing $Q_D(0)$ for various values of $D$ one can suspect that the critical radius,
i.e., the smallest value of~$D$ for which $Q_D(0) = 1$, lies somewhere between 2.7 and~3.
}
We combined binary search with the consistency algorithm from Section~\ref{sec:consistency} to determine the critical radius up to an error of~$0.01$.
During the binary search, the algorithm from Section~\ref{sec:consistency} has to analyze PSPs that come closer and closer to the verge of (in)consistency.
For the last (and most expensive) binary search step that decreases the interval to~$0.01$,
 our algorithm took $\mathord{<}1$, $1$, $3$, $8$ seconds for $n = 20, 50, 100, 150$, respectively.
For $n=150$, we found the critical radius to be in the interval $[2.981,2.991]$.
Harris~\cite{harris} estimates $2.9$.

\section{Conclusions} \label{sec:conclusions}
We have presented a new, simple, and efficient algorithm for checking the consistency of PSPs,
which outperforms the previously existing LP-based method.
We have also described the first algorithm that computes reliable lower and upper bounds on~$\mu_f$.
The sequence of bounds converges linearly to~$\mu_f$.
To achieve these properties without sacrificing efficiency, we use a novel combination of exact and inexact (floating-point)
arithmetic. Experiments on PSPs from concrete branching processes confirm the practicality of our approach.
The results raise the question whether our combination of exact and inexact arithmetic could be transferred to other computational problems.

\subsubsection*{Acknowledgments}
We thank several anonymous referees for pointing out inaccuracies and
helping us clarify certain aspects of the paper. The second author was supported by the DFG Graduiertenkolleg 1480 (PUMA).
We also thank Andreas Reuss for proofreading the manuscript.
\bibliographystyle{plain} 
\bibliography{literatur}

\iftechrep{ \newpage \appendix \section{Proofs of Section~\ref{sec:inex}}

\noindent Here is a restatement of Proposition~\ref{prop:normal-form}.

\begin{qproposition}{\ref{prop:normal-form}}
 \propnormalform
\end{qproposition}
\begin{proof}
 In a first step, we add to the equation system $\vX = f(\vX)$ an $(n+1)$-st equation $\tilde{X} = \frac13 \tilde{X}^2 + \frac23$.
 It is easy to see that the least solution of this equation is $\tilde{X} = 1$.
 In order to make $f$ purely superlinear, we take all components~$f_i$ that are not yet superlinear and multiply a monomial of~$f_i$ by~$\tilde{X}$.
 For instance, if $f_i = \frac14 X_j + \frac13 X_k$, then we replace $f_i$ with $\frac14 X_j \tilde{X} + \frac13 X_k$.
 This transformation does not change the least fixed point in the non-$\tilde{X}$-components.
 Call the resulting PSP again~$f$ for simplicity.
 Notice that $f$ is now purely superlinear.

 In a second step we make sure that all superlinear SCCs are purely superlinear.
 For this, we repeatedly apply a certain operation:
 Let $X_j$ be a variable that occurs in a monomial~$m$ of a component~$f_i$, i.e.,
  there is a monomial~$\tilde{m}$ with $m = X_j \cdot \tilde{m}$.
 The operation that replaces the monomial~$m$ in~$f_i$ with $0.5 \cdot m + 0.5 \cdot f_j \cdot \tilde{m}$ is called {\em substituting} (an occurrence of)~$X_j$.
 It is easy to see that applying this operation to a PSP yields a PSP with the same set of fixed points.
 Notice that substituting does not change the dependency relation between the variables.

 In order to make all superlinear SCCs purely superlinear, we apply a sequence of substituting operations.
 Take a superlinear SCC~$S$ which is not purely superlinear and let $g(S)$ denote the PSP obtained by restricting~$f$ to the $S$-components
  and replacing all variables which are not in~$S$ by the constant~$1$.
 Since $S$ is superlinear and not purely superlinear, the PSP~$g(S)$ is, by definition, superlinear and not purely superlinear.
 So there exist variables $X_i, X_j \in S$ such that $X_i$ directly depends on~$X_j$ in~$g(S)$, and $g(S)_i$ is linear, and $g(S)_j$ is superlinear.
 Substitute the corresponding occurrence of~$X_j$ in~$f_i$.
 This makes $g(S)_i$ superlinear.
 By proceeding this way, at most~$n$ substituting operations suffice to make all superlinear SCCs purely superlinear.

 To make $f$ perfectly superlinear, it remains to make each variable directly depend on itself.
 We achieve that by replacing, for all variables~$X$, the polynomial $f_X$ with $0.5 f_X + 0.5X$.
 It is easy to see that $f$ has the same least fixed point, the sum of the coefficients is still at most~$1$ in all components,
  and no new variable dependencies are created by this operation except that every variable now depends directly on itself.
 So, this operation makes $f$ perfectly superlinear.

 Clearly, the bottleneck of this whole procedure consists of the substituting operations.
 Notice that computing the DAG of SCCs can be done in time $O(s)$ with Tarjan's algorithm.
 The size of each single polynomial at the end of the substituting procedure is $O(s)$,
  so the total size of the resulting PSP is $O(n \cdot s)$. 
\end{proof}

\noindent Here is a restatement of Proposition~\ref{prop:alg-compute-prefix}.
\begin{qproposition}{\ref{prop:alg-compute-prefix}}
 \propalgcomputeprefix
\end{qproposition}
\begin{proof}
  We will prove the following invariant of the algorithm:
  \begin{itemize}
   \item[(a)] $\vzero \le \vx \le f(\vx)$;
   \item[(b)] for all components~$j$ with $(f^i(\vzero))_j > 0$ we have $0 < \vx_j < f_j(\vx)$.
  \end{itemize}
  The invariant implies that the loop terminates after at most $n$ iterations, because $f^n(\vzero) \succ \vzero$ holds as $\mu_f \succ \vzero$.
  The invariant also implies that we have $\vzero \prec \vx \prec f(\vx)$ after the loop terminates.

  So it remains to show the invariant.
  Part~(a) clearly holds throughout the loop because for all components~$j$ either $0 = \vx_j$ holds or $0 < \vx_j < f_j(\vx)$
   is guaranteed by the floating assignment.
  Assume inductively that the invariant holds after $i \ge 0$ iterations.
  It suffices to prove that (b) holds after $i+1$ iterations.
  Let $\xs{i}$ denotes the value of~$\vx$ after $i$ iterations.
  Let $(f^{i+1}(\vzero))_j > 0$.
  Then $f_j(\vzero) > 0$ or there is a monomial in~$f_j$ which consists only of variables~$X_k$ with $(f^{i}(\vzero))_k > 0$.
  In the second case we have, by induction hypothesis part~(b), that $\xs{i}_k > 0$ holds for those variables~$X_k$.
  In both cases it follows $f_j(\xs{i}) > 0$.
  Furthermore, we have by induction hypothesis part~(a) that $f_j(\xs{i}) \le f_j(f(\xs{i}))$ where, in fact, the inequality is
   strict because $X_j$ depends on itself (as $f$ is perfectly superlinear).
  We conclude that $0 < f_j(\xs{i}) < f_j(f(\xs{i}))$, and hence the floating assignment guarantees $0 < \xs{i+1}_j < f_j(\xs{i+1})$.
  So the invariant holds after $i+1$ iterations. 
\end{proof}

\subsection{Proof of Proposition~\ref{prop:prefix-postfix}}

For the proof of Proposition~\ref{prop:prefix-postfix} we need some auxiliary lemmas that will also used later on for Theorem~\ref{thm:gesamt-algo}.
We start with the following simple lemma on linear PSPs.
\begin{lemma} \label{lem:lin-unique-fixed-point}
 Let $f$ be a linear PSP.
 Then $\mu_f$ is the unique fixed point of~$f$.
\end{lemma}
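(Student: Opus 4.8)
The plan is to represent a linear PSP $f$ in the form $f(\vX) = A\vX + \mathbf{b}$, where $A \in \R^{n \times n}$ is the (nonnegative) coefficient matrix and $\mathbf{b} \ge \vzero$ collects the constant terms, and then to exploit that Kleene iteration from $\vzero$ produces the partial sums of the series $\sum_{i \ge 0} A^i \mathbf{b}$.

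First I would record two elementary identities. (i) By Kleene's theorem $f^k(\vzero) \to \mu_f$, and an easy induction gives $f^k(\vzero) = \sum_{i=0}^{k-1} A^i \mathbf{b}$; unfolding $\mu_f = A\mu_f + \mathbf{b}$ yields the telescoping identity $\mu_f = A^k \mu_f + f^k(\vzero)$ for all $k$, hence $A^k \mu_f = \mu_f - f^k(\vzero) \to \vzero$. (ii) If $\vy$ is any fixed point of $f$, then $\vy \ge \mu_f$ by leastness, and subtracting $A\mu_f + \mathbf{b} = \mu_f$ from $A\vy + \mathbf{b} = \vy$ shows that $\vz := \vy - \mu_f \ge \vzero$ satisfies $A\vz = \vz$, and therefore $A^k \vz = \vz$ for every $k$.

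To finish, I would show $\vz = \vzero$. Suppose not. Invoking the standing assumption $\mu_f \succ \vzero$, the quantity $d := \min\{(\mu_f)_j / \vz_j : \vz_j > 0\}$ is well-defined and strictly positive, and by its definition $d \vz \le \mu_f$, i.e.\ $\mu_f - d\vz \ge \vzero$. Multiplying by the nonnegative matrix $A^k$ preserves nonnegativity, so $A^k \mu_f \ge d A^k \vz = d\vz$ for every $k$; letting $k \to \infty$ and using identity~(i) gives $\vzero \ge d\vz$, which contradicts $d > 0$ and $\vz \ne \vzero$. Hence $\vz = \vzero$, i.e.\ $\vy = \mu_f$, so $\mu_f$ is the unique fixed point.

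The main obstacle is conceptual rather than technical: the statement genuinely fails without the global assumption $\mu_f \succ \vzero$ (for instance $f(X) = X$ is a linear PSP with $\mu_f = 0$ and infinitely many fixed points), so the argument must use that assumption, and the only place it enters is to guarantee $d > 0$. The one other point that needs a word of justification is the convergence $A^k \mu_f \to \vzero$, but this is immediate from the telescoping identity together with Kleene convergence; in particular no bound on the spectral radius of $A$ (nor the ``coefficients sum to at most~$1$'' condition) is needed for this lemma.
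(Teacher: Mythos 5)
Your proof is correct, but it takes a genuinely different route from the paper's. You work with the explicit affine representation $f(\vX) = A\vX + \mathbf{b}$, derive $A^k\mu_f \to \vzero$ by telescoping the unfolding $\mu_f = A^k\mu_f + f^k(\vzero)$ against Kleene convergence, and then, given a second fixed point $\vy$, observe that the difference $\vz := \vy - \mu_f \ge \vzero$ is $A$-invariant ($A^k\vz = \vz$ for all $k$), which is incompatible with $d\vz \le \mu_f$ and $d > 0$ once $A^k\mu_f \to \vzero$. The paper argues geometrically instead: the fixed-point set of the affine map $f$ is an affine subspace, so if it contains $\mu_f$ and another point $\vx$, the whole line through them consists of fixed points; walking along this line from $\mu_f \succ \vzero$ until some coordinate first reaches zero produces a nonnegative fixed point $\vy$ with $\vy_i = 0$, contradicting leastness ($\mu_f \le \vy$ forces $(\mu_f)_i \le 0$). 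Both proofs rest entirely on leastness and the standing assumption $\mu_f \succ \vzero$. Yours is a little longer and more computational, but it has the virtue of making explicit exactly where $\mu_f \succ \vzero$ enters (to obtain $d > 0$) and of isolating the convergence fact $A^k\mu_f \to \vzero$, which the paper's proof never states; the paper's version is shorter and more slick but leaves the reader to supply the boundary-crossing argument. You are also right that no spectral-radius bound or coefficient-sum condition is needed beyond what is already packed into $\mu_f \succ \vzero$.
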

\begin{proof}
 Recall that $\mu_f \succ \vzero$.
 Assume that there is a fixed point~$\vx$ of~$f$ different from~$\mu_f$.
 Then, by the linearity of~$f$, all points on the straight line through $\vx$ and $\mu_f$ are fixed points of~$f$.
 So there is a point~$\vy \ge \vzero$ on this straight line with $\vy_i = 0$ for some $i \in \{1,\ldots,n\}$.
 This contradicts the fact that $\mu_f$ is the {\em least} fixed point of~$f$. 
\end{proof}

The following lemma shows how to compute a post-fixed point that satisfies certain properties and is arbitrarily close to~$\vone$.
\begin{lemma} \label{lem:close-to-one}
 Let $f$ be a perfectly superlinear PSP.
 Let $r \in \R$ with $0 < r < 1$.
 Let $\vx = \mu_f + r (\vone - \mu_f)$.
 Then $f(\vx) \le \vx$.
 Furthermore, let $\vp = f^n(\vx)$.
 Then $f(\vp) \le \vp$ and $f_i(\vp) < \vp_i$ holds for all $i \in \{1,\ldots,n\}$ with $(\mu_f)_i < 1$.
\end{lemma}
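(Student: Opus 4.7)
The plan is as follows. First, I would prove $f(\vx) \le \vx$ by a convexity argument along the segment from $\mu_f$ to $\vone$. Set $g(r) = \mu_f + r(\vone - \mu_f)$ and $h_i(r) = f_i(g(r)) - g_i(r)$ for $r \in [0,1]$. Expanding $f_i(g(r))$ shows that it is a polynomial in $r$ with only nonnegative coefficients (every factor $(\mu_f)_j + r(1-(\mu_f)_j)$ has nonnegative $r$-coefficients and $f_i$ itself has nonnegative coefficients), hence convex on $[0,\infty)$; subtracting the linear $g_i$ keeps $h_i$ convex. Then $h_i(0) = 0$ (since $f(\mu_f) = \mu_f$) and $h_i(1) = f_i(\vone) - 1 \le 0$ (since $\vone$ is post-fixed), so convexity on $[0,1]$ gives $h_i(r) \le (1-r)h_i(0) + rh_i(1) \le 0$, i.e.\ $f(\vx) \le \vx$. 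Monotonicity of $f$ then makes $\vp^{(k)} := f^k(\vx)$ componentwise decreasing, so $\vp = \vp^{(n)}$ satisfies $f(\vp) \le \vp$.

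For the strict inequality on $S := \{i : (\mu_f)_i < 1\}$, define $D_k := \{i : \vp^{(k)}_i > \vp^{(k+1)}_i\}$. The mean-value theorem gives $\vp^{(k+1)}_j - \vp^{(k+2)}_j = \sum_\ell \partial_\ell f_j(\xi)(\vp^{(k)}_\ell - \vp^{(k+1)}_\ell)$ for some $\xi$ between $\vp^{(k+1)}$ and $\vp^{(k)}$, both of which satisfy $\xi \ge \mu_f \succ \vzero$. Whenever $j$ depends directly on $\ell$, the partial derivative $\partial_\ell f_j$ is a nonzero polynomial with nonnegative coefficients and thus positive at $\xi$. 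Consequently, if $j$ depends directly on some $\ell \in D_k$ then $j \in D_{k+1}$; specialising $\ell = j$ (valid by perfect superlinearity) yields $D_k \subseteq D_{k+1}$. Iterating, $D_n$ contains every variable linked through a dependency path of length at most $n-1$ to some element of $D_0$.

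It therefore suffices to show that every $j \in S$ reaches an element of $D_0$ through the dependency graph. As a preparatory step, each SCC of $f$ is either entirely consistent or entirely inconsistent: if $(\mu_f)_i = 1$ inside a strongly connected subsystem, then $f_i(\mu_f) = f_i(\vone) = 1$, and expanding $0 = f_i(\vone) - f_i(\mu_f)$ as a sum of nonnegative terms forces $\partial_\ell f_i \equiv 0$ for each inconsistent $\ell$, after which strong connectivity propagates consistency through the whole SCC. Thus $S$ is a union of SCCs. For $j \in S$ with SCC $T_j$, repeatedly moving in the DAG of SCCs to descendants that still lie in $S$ terminates at an $S$-bottom SCC $T$ (whose proper descendants are all consistent); the DAG path from $T_j$ to $T$, together with strong connectivity of $T$, provides a dependency path from $j$ to any prescribed element of $T$.

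It remains to establish that every $S$-bottom SCC $T$ satisfies $T \cap D_0 \ne \emptyset$. If $T$ is superlinear, perfect superlinearity makes it purely superlinear, so each $f_i$, $i \in T$, contains a monomial with $\sum_{\ell \in T}\alpha_\ell \ge 2$; since $T \subseteq S$, this monomial produces a strictly positive $r^2$-coefficient in $f_i(g(r))$, so $h_i$ is strictly convex on $[0,1]$ and $h_i(r) < 0$ on $(0,1)$, placing $i$ in $D_0$. If $T$ is linear, replacing the outer variables on which $T$ depends (all consistent, by the $S$-bottom property) with~$1$ yields a linear PSP $\hat f_T$ whose unique fixed point, by the uniqueness lemma for linear PSPs, is $(\mu_f)_T \prec \vone$; if $f_i(\vone) = 1$ held for every $i \in T$ then $\vone$ would also be a fixed point of $\hat f_T$, contradicting uniqueness, so some $i \in T$ satisfies $f_i(\vone) < 1$, hence $h_i(1) < 0$ and $i \in D_0$. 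The linear sub-case is the main obstacle, because $D_0$-membership is not forced by the degree structure alone and the strictness must be extracted from the strict inequality $f_i(\vone) < 1$ that inconsistency of a linear $S$-bottom SCC necessarily imposes on at least one of its components.
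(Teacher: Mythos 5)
Your proof is correct, and while it shares the paper's skeleton (reduce to SCCs, use Lemma~\ref{lem:lin-unique-fixed-point} to separate the linear case), the execution differs at both stages. For $f(\vx)\le\vx$ the paper Taylor-expands $f$ at $\mu_f$ and uses superlinearity of the remainder $R(\mu_f,\cdot)$ to get $R(\mu_f,r\vd)\le rR(\mu_f,\vd)$; your observation that $r\mapsto f_i(\mu_f+r(\vone-\mu_f))$ has nonnegative coefficients, hence is convex, packages the same fact more cleanly and pays off again later. For the strictness claim the paper argues by contradiction: it picks a violating component minimal over the SCC DAG, notes that both $(\mu_f)_S$ and $\vx_S$ are then fixed points of the restricted system $g$, invokes Lemma~\ref{lem:lin-unique-fixed-point} to conclude $g$ is superlinear, and derives $g(\vone)>\vone$ in some component via a second Taylor computation. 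You instead argue forward: the sets $D_k$ and the mean-value-theorem propagation step make rigorous what the paper dismisses as ``easy to see'' (that a strict decrease, once present, reaches every inconsistent component within $n$ iterations of $f$), and you locate the seed of strictness in each $S$-bottom SCC directly --- via the strictly positive $r^2$-coefficient in the purely superlinear case, and via $f_i(\vone)<1$ (forced by uniqueness of the fixed point of the linear restriction) in the linear case. Your route is somewhat longer but more self-contained and constructive; the paper's is shorter at the price of a minimal-counterexample argument and an unproved propagation claim. Both correctly rely on the standing assumption $\mu_f\succ\vzero$ (you need it for the positivity of the partial derivatives at $\xi$ and of the residual factors in the $r^2$-coefficient), so no gap there.
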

\begin{proof}
 Letting $\vu, \vv$ be any vectors, we write $f(\vu + \vv) = f(\vu) + f'(\vu) \vv + R(\vu,\vv)$ for the Taylor expansion of~$f$ at~$\vu$.
 Then we have
 \begin{align*}
  \mu_f + f'(\mu_f)(\vone - \mu_f) + R(\mu_f,\vone - \mu_f)
   & = f(\mu_f + (\vone - \mu_f)) = f(\vone) \le \vone = \mu_f + (\vone-\mu_f) \;,
 \end{align*}
  so it follows $f'(\mu_f) \vd + R(\mu_f,\vd) \le \vd$ where $\vd := \vone - \mu_f$.
 Moreover, we have
 \begin{align*}
  f(\vx)
   & = f(\mu_f + r \vd) = f(\mu_f) + f'(\mu_f) r \vd + R(\mu_f,r \vd) \\
   & =   \mu_f + r f'(\mu_f) \vd + R(\mu_f,r \vd) \\
   & \le \mu_f + r f'(\mu_f) \vd + r R(\mu_f,\vd) && \text{(since $R(\mu_f,\cdot)$ is superlinear)} \\
   &  =  \mu_f + r (f'(\mu_f) \vd + R(\mu_f,\vd)) \\
   & \le \mu_f + r \vd && \text{(from above)} \\
   & =   \vx \;,
 \end{align*}
  i.e., $\vx$ is a post-fixed point, hence $\mu_f \le \vx \le \vone$.
 Consider the sequence $\vx \ge f(\vx) \ge f(f(\vx)) \ge \cdots$.
 Since every component depends directly on itself, we have for all components~$i$ that once $(f^{j}(\vx))_i > (f^{j+1}(\vx))_i$ holds for some~$j$,
  we have $(f^{k}(\vx))_i > (f^{k+1}(\vx))_i$ for all $k \ge j$.
 On the other hand, it is easy to see that if $(f^{j}(\vx))_i > (f^{j+1}(\vx))_i$ for some~$j$,
  then $(f^{k}(\vx))_i > (f^{k+1}(\vx))_i$ for some $k \le n$.
 It follows that $f_i(\vp) < \vp_i$ holds for all components~$i$ for which there exists~$j$ with $(f^{j}(\vx))_i > (f^{j+1}(\vx))_i$.
 It remains to show that for all components~$i$ with $(\mu_f)_i < 1$ there exists~$j$ with $(f^{j}(\vx))_i > (f^{j+1}(\vx))_i$.
 Assume for a contradiction that this does not hold.
 Choose a variable~$X_i$ that violates the property (i.e., $(\mu_f)_i < 1$ and $\vx_i = (f^{j}(\vx))_i$ for all~$j$)
  so that all components in lower SCCs satisfy the property.
 Then for all $X_j$ on which $X_i$ depends we have $(\mu_f)_j = 1$, and so $\vx_j = 1$.
 Furthermore, we have $(\mu_f)_S \prec \vx_S \prec \vone$.
 Let $S$ denote the SCC of~$X_i$ and let $g$ be the PSP obtained by restricting~$f$ to the $S$-components and replacing all variables in lower SCCs by the constant~$1$.
 Notice that Lemma~\ref{lem:lin-unique-fixed-point} guarantees that $g$ is not linear, since both $(\mu_f)_S$ and $\vx_S$ are fixed points of~$g$.
 Hence, $g$ is superlinear.
 For any vectors $\vu, \vv$ we write $g(\vu + \vv) = g(\vu) + g'(\vu) \vv + T(\vu,\vv)$ for the Taylor expansion of~$g$ at~$\vu$.
 We have:
 \begin{align*}
       r \vd_S
   & = \vx_S - (\mu_f)_S \\
   & = g(\vx_S) - (\mu_f)_S \\
   & = g((\mu_f)_S + r\vd_S) - (\mu_f)_S \\
   & = (\mu_f)_S + g'((\mu_f)_S) r \vd_S + T((\mu_f)_S,r\vd_S) - (\mu_f)_S \\
   & = g'((\mu_f)_S)r\vd_S + T((\mu_f)_S,r\vd_S)
 \end{align*}
 Moreover, as $\vd_S \succ \vzero$ and $g$ is superlinear, the following inequality is strict in at least one component:
 \begin{align*}
        g(\vone)
  & =   g \left((\mu_f)_S + \frac1r r \vd \right) \\
  & =   g((\mu_f)_S) + g'((\mu_f)_S) \frac1r r \vd_S + T\left((\mu_f)_S,\frac1r r \vd\right) \\
  & \ge (\mu_f)_S + \frac1r g'((\mu_f)_S) r \vd_S + \frac1{r^2} T\left((\mu_f)_S,r \vd\right) \\
  & >   (\mu_f)_S + \frac1r \left( g'((\mu_f)_S) r \vd_S + T\left((\mu_f)_S,r \vd\right) \right) \\
  &  =  (\mu_f)_S + \frac1r r \vd_S && \text{(as computed above)} \\
  &  =  (\mu_f)_S + \vd_S  = \vone
 \end{align*}
 This is the desired contradiction as $g(\vone) \le \vone$ should hold since $g$ is a PSP. 
\end{proof}

The following lemma is used for the proof of Proposition~\ref{prop:prefix-postfix},
 but will also be essential to prove the convergence statement of Theorem~\ref{thm:gesamt-algo}.

\begin{lemma} \label{lem:kleene-converges-linearly}
Let $f$ be a perfectly superlinear PSP.
Let $f(\vp) \le \vp \le \vone$ and $f_i(\vp) < \vp_i$ for all $i\in\{1,\ldots,n\}$ with $(\mu_f)_i < 1$.
Then the sequence $(\ps{i})_{i \in \mathbb{N}}$
defined by
\begin{equation*}
\ps{0} := \mathbf{p} \text{ and } \ps{i+1} := f(\ps{i}) \text{ for } i \geq 0
\end{equation*}
converges linearly to~$\mu_f$.
\end{lemma}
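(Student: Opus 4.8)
Looking at this lemma, I need to prove that the Kleene-type iteration $\ps{i+1} = f(\ps{i})$ starting from a post-fixed point $\vp$ (with strict decrease in the inconsistent components) converges linearly to $\mu_f$.

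\textbf{Proof proposal.}
The plan is to reduce the problem to the inconsistent components and there exploit a contraction-type estimate coming from the Jacobian. First I would note that the sequence $(\ps{i})_i$ is monotonically decreasing (since $f$ is monotone and $f(\vp) \le \vp$) and bounded below by $\mu_f$ (since $\mu_f$ is the least fixed point and every $\ps{i} \ge \mu_f$ by induction), hence it converges to some limit $\vp^* \ge \mu_f$ with $f(\vp^*) = \vp^*$. I would then argue $\vp^* = \mu_f$: the components $i$ with $(\mu_f)_i = 1$ stay at $1$ forever (they are sandwiched between $(\mu_f)_i = 1$ and $\vp_i \le 1$, and $f$ maps $\vone$-valued consistent components to $1$), while for the inconsistent components the strict-decrease hypothesis propagates — here I would invoke that a fixed point $\vp^*$ with $\vp^* \ge \mu_f$ and $\vp^* \le \vp \prec \vone$ in the inconsistent block must equal $\mu_f$, using Proposition~\ref{prop:prefix-postfix} or the uniqueness arguments around it (a fixed point strictly below $\vone$ in a block where $\mu_f$ is also strictly below $\vone$ forces equality, essentially because the segment between two fixed points would contain a fixed point touching a coordinate hyperplane, contradicting minimality, after linearizing the superlinear blocks as in Lemma~\ref{lem:lin-unique-fixed-point}).

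For the \emph{linear} rate, I would restrict attention to the set $T = \{i : (\mu_f)_i < 1\}$ and write $\vDelta^{(i)} := \ps{i}_T - (\mu_f)_T \ge \vzero$. Using the Taylor expansion $f_T(\mu_f + \vDelta) = \mu_f{}_T + f'_{TT}(\mu_f)\vDelta + R_T(\mu_f,\vDelta)$ with $R_T$ the (nonnegative, superlinear) remainder, and the fact that on the consistent components $\ps{i}$ is already at $1 = (\mu_f)$, one gets $\vDelta^{(i+1)} \le f'_{TT}(\mu_f)\,\vDelta^{(i)} + R_T(\mu_f,\vDelta^{(i)})$. Since $f$ is perfectly superlinear, the submatrix $f'_{TT}(\vone)$ has spectral radius $< 1$ on each inconsistent SCC — this is exactly where consistency fails — and by monotonicity $f'_{TT}(\mu_f) \le f'_{TT}(\vone)$, so $\rho := \rho(f'_{TT}(\mu_f)) < 1$. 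Choosing a submultiplicative norm in which $\|f'_{TT}(\mu_f)\| \le \rho + \delta < 1$ and using that $\|R_T(\mu_f,\vDelta^{(i)})\| = o(\|\vDelta^{(i)}\|)$ together with $\vDelta^{(i)} \to \vzero$, I would conclude that for large enough $i$, $\|\vDelta^{(i+1)}\| \le c\,\|\vDelta^{(i)}\|$ for some fixed $c < 1$, which is precisely linear convergence (a constant number of accurate bits gained per iteration); the finitely many initial iterations do not affect the asymptotic rate.

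\textbf{Main obstacle.} The delicate point is handling the block structure cleanly. Within the inconsistent set $T$ the SCCs are not independent: an inconsistent SCC may depend on consistent variables (harmless, they are frozen at $1$) but also the matrix $f'_{TT}(\mu_f)$ is block-triangular over the inconsistent SCCs, so its spectral radius is the maximum of the spectral radii of the diagonal blocks; I must verify that \emph{each} diagonal block has spectral radius strictly below $1$. For a superlinear inconsistent SCC $S$, the restricted PSP $g$ has $\mu_g \prec \vone$, and Lemma~\ref{lemma:pf1}(4) applied to a strict post-fixed point near $\vone$ (as produced in Lemma~\ref{lem:close-to-one}) gives $\rho(g'(\vone)) < 1$, hence $\rho(f'_{SS}(\mu_f)) \le \rho(g'(\vone)) < 1$; for a linear inconsistent SCC one argues directly. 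The second subtlety is that the remainder bound $\|R_T\| = o(\|\vDelta\|)$ must be uniform enough — it is, since $R_T$ is a polynomial vanishing to second order at $\mu_f$ — so the estimate $\|\vDelta^{(i+1)}\| \le (\rho + \delta)\|\vDelta^{(i)}\| + K\|\vDelta^{(i)}\|^2$ kicks in once $\|\vDelta^{(i)}\|$ is small, which happens after finitely many steps because we already proved $\ps{i} \to \mu_f$.
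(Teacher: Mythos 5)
Your overall plan (monotone decrease to $\mu_f$, then a contraction estimate from the Jacobian on the inconsistent components) is sensible, but the central spectral-radius claim is backwards, and this is a genuine error, not a detail. You assert that for an inconsistent superlinear SCC $S$ one has $\rho\bigl(f'_{SS}(\vone)\bigr)=\rho(g'(\vone))<1$, and then transfer this to $\mu_f$ by monotonicity. But Proposition~\ref{prop:spektralradius} says the exact opposite: an scPSP is consistent if{}f $\rho(g'(\vone))\le 1$, so inconsistency forces $\rho(g'(\vone))>1$. (One can also see this directly: Lemma~\ref{thm:taylor} with $\vx=\mu_g$, $\vu=\vone-\mu_g\succ\vzero$ gives $g'(\vone)(\vone-\mu_g)\succ\vone-\mu_g$, and then Lemma~\ref{lemma:pf1}(4) yields $\rho(g'(\vone))>1$.) Consequently the monotonicity step $\rho(f'_{SS}(\mu_f))\le\rho(g'(\vone))$ points the wrong way and cannot give $<1$. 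The fact you actually need, $\rho(f'_{SS}(\mu_f))<1$ when $(\mu_f)_S\prec\vone$, is true, but it cannot be obtained by comparison with the Jacobian at $\vone$; one has to argue from below (e.g.\ from a strictly post-fixed point $\vp\succ\mu_f$ with $f(\vp)\prec\vp$ and the irreducibility of $f'_{SS}(\mu_f)$), which is essentially a separate lemma.

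The paper sidesteps the spectral-radius question entirely. It writes $f(\vu)-\mu_f = A(\vu)(\vu-\mu_f)$ where $A(\vu):=f'(\mu_f)+\tilde R(\mu_f,\vu-\mu_f)$ is a nonnegative matrix, monotone in $\vu$, obtained by factoring the quadratic-and-higher Taylor remainder as $\tilde R\cdot\Delta$. The strict post-fixedness $f(\vp)\prec\vp$ then gives $A(\vp)\Delvp\prec\Delvp$ componentwise with $\Delvp\succ\vzero$, hence $A(\vp)\Delvp\le\delta\Delvp$ for some $\delta<1$, and an elementary induction using the monotonicity of $A$ propagates $\ps{i}-\mu_f\le\delta^i\Delvp$. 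This yields the componentwise geometric rate directly, with no norm, no spectral radius, and no ``eventually small enough for the quadratic term to be negligible'' step. If you want to pursue your route, you would need to replace the false inequality $\rho(g'(\vone))<1$ by a correct proof that $\rho\bigl(f'_{SS}(\mu_f)\bigr)<1$, and also justify more carefully why the block-triangular reduction over inconsistent SCCs interacts correctly with the ``frozen at $1$'' consistent components; the paper's factorization avoids both complications.
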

\begin{proof}
If $(\mu_f)_i = 1$ then $(\mu_f)_j = 1$ has to hold for every component $j$ on which $i$ depends.
As $\mu_f$ is the least post-fixed point by Knaster-Tarski's theorem, we have $f^k_i(\mathbf{p}) = 1$ for every $k \in \mathbb{N}$.
Hence we can ignore the 1-components in our convergence proof
 and assume w.l.o.g.\ that $\mu_f \prec \overline{1}$ and with the assumptions $ f(\vp) \prec \vp $.
By the monotonicity of~$f$ and as every variable depends on itself, we get by a simple induction that
 $\ps{i} \succ \ps{i+1} \succ \mu_f$ for all~$i \in \mathbb{N}$.
This already shows that $(\ps{i})$ converges to some limit point.

\newcommand{\Delvu}{\mathbf{\Delta_{u}}}
\newcommand{\Delvp}{\mathbf{\Delta_{p}}}
For every $\mathbf{u} \succ \mu_f$
with $\mathbf{u} \succ f(\mathbf{u})$
we write $\mathbf{u} = \mu_f + \Delvu$ and get:
\begin{align*}
f(\mathbf{u}) - \mu_f & = f(\mu_f+\Delvu) - \mu_f\\
& = f(\mu_f) + f'(\mu_f)\Delvu+R(\mu_f,\Delvu) - \mu_f && \text{(Taylor expansion)}\\
\intertext{Since $R(\mu_f,\Delvu)$ depends at least quadratically on~$\Delvu$,
 one can write $R(\mu_f,\Delvu) = \tilde{R}(\mu_f,\Delvu) \cdot \Delvu$ for a nonnegative matrix $\tilde{R}(\mu_f,\Delvu)$.
 Continuing the above equaliy, we obtain:}
& = (f'(\mu_f) + \tilde{R}(\mu_f,\Delvu))\Delvu \\
& \prec \Delvu && \text{(as $\mathbf{u} \succ f(\mathbf{u})$.)}
\end{align*}
Define $A(\vu) := f'(\mu_f) + \tilde{R}(\mu_f,\Delvu)$, so that we obtain
\begin{equation}
\label{eq:A_absch}
 f(\mathbf{u}) - \mu_f = A(\vu) \cdot \Delvu \prec \Delvu.
\end{equation}
This holds especially for $\mathbf{u} = \mathbf{p}$.
From the $\prec$-inequality in~\eqref{eq:A_absch} follows that there exists $0 < \delta < 1$ such that
\begin{equation}
f(\mathbf{p}) - \mu_f = A(\vp) \Delvp \leq \delta \Delvp.
\label{thm:postkleene_indanf}
\end{equation}
We now show for every $i \ge 0$ that
$\mathbf{p}^{(i)} - \mu_f = \mathbf{\Delta_{p^{(i)}}} \leq \delta^i \Delvp$ by induction over~$i$.
This implies the linear convergence of $(\mathbf{p}^{(i)})_{i \in \mathbb{N}}$.
The base case $i=1$ is proved by~\eqref{thm:postkleene_indanf}.
For $i>1$ note that if $\overline{0} \leq \mathbf{u} \leq \mathbf{u'}$ and $\overline{0} \leq \mathbf{v} \leq \mathbf{v'}$, we have
$A(\mathbf{u})\mathbf{v} \leq A(\mathbf{u'})\mathbf{v'}$, since $A(\mathbf{\mathbf{u}})$ is nonnegative if $\mathbf{u}$ is nonnegative.
\begin{align*}
f^i(\mathbf{p}) - \mu_f & = f(\mathbf{p}^{(i-1)}) - \mu_f\\
& = A(\mathbf{p}^{(i-1)}) ( \mathbf{p}^{(i-1)} - \mu_f) & \text{(by~\eqref{eq:A_absch})} \\
& \leq  A(\vp)  (\delta^{i-1} \Delvp) & \text{(induction hypothesis)} \\
& =  \delta^{i-1} A(\vp) \Delvp \\
& \leq \delta^{i} \Delvp. & \text{~\eqref{thm:postkleene_indanf}}
\end{align*} 
\end{proof}

\noindent Now we can prove Proposition~\ref{prop:prefix-postfix} which we restate here.
\begin{qproposition}{\ref{prop:prefix-postfix}}
 \stmtpropprefixpostfix
\end{qproposition}
\begin{proof}
 By Knaster-Tarski's theorem, $\mu_f$ is the least post-fixed point; the final statement of the proposition follows.
 It remains to show the first statement.
 By choosing the number~$r$ from Lemma~\ref{lem:close-to-one} large enough we can find a post-fixed point~$\vy$ with $\vx \prec \vy \le \vone$.
 By Lemma~\ref{lem:close-to-one} and Lemma~\ref{lem:kleene-converges-linearly} the sequence $\vy, f(\vy), f(f(\vy)), \ldots$ converges to~$\mu_f$.
 On the other hand, by repeatedly applying~$f$ to both sides of the inequality $\vx \prec \vy$ we obtain that
  $\vx \succ f(\vx) \le f^i(\vx) \le f^i(\vy)$ holds for all $i \ge 0$.
 Since $(f^i(\vy))_i$ converges to~$\mu_f$, we have $\vx \prec \mu_f$. 
\end{proof}

\subsection{Proof of Theorem~\ref{thm:gesamt-algo}}

In order to prove Theorem~\ref{thm:gesamt-algo} we can reuse some results of the previous subsection, but we need some additional lemmas.
The following lemma was essentially proved in~\cite{etessamiyannakakis,EKL08:stacs}.

\begin{lemma} \label{lem:newton-star}
 Let $f$ be a perfectly superlinear PSP and let $\vx \prec f(\vx)$.
 Then 
 \mbox{$\Ne(\vx) = \vx + (f'(\vx))^* (f(\vx) - \vx)$},
  where for any square matrix~$A$, we define the matrix star $A^* = \sum_{i=0}^{\infty}{A^i} = \Id + \sum_{i=1}^{\infty}{A^i}$.
\end{lemma}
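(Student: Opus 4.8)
The plan is to reduce the claimed identity to the single linear-algebra fact that the spectral radius of $A := f'(\vx)$ is strictly less than~$1$. Indeed, once $\rho(A) < 1$ is known, standard matrix facts (the same ones already used in the proof of Lemma~\ref{lem:matrix-spectral-radius}(b)) give that $\Id - A$ is invertible with $(\Id - A)^{-1} = \sum_{k \ge 0} A^k = A^*$; substituting this into the definition $\Ne(\vx) = \vx + (\Id - f'(\vx))^{-1}(f(\vx) - \vx)$ immediately yields $\Ne(\vx) = \vx + (f'(\vx))^*(f(\vx) - \vx)$. So the entire task is to prove $\rho(A) < 1$.

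First I would note that $A = f'(\vx)$ is nonnegative, since $f$ has nonnegative coefficients and $\vx \ge \vzero$. Next, using that $\vx$ lies in $[\vzero,\vone]$ and $\vx \prec f(\vx)$, Proposition~\ref{prop:prefix-postfix} gives $\vx \prec \mu_f$, so the vector $\vd := \mu_f - \vx$ is \emph{strictly} positive. Now I would expand $f$ by its (finite) Taylor series around~$\vx$, writing $f(\vx + \vd) = f(\vx) + A\vd + R$ with remainder $R = R(\vx,\vd)$ collecting all terms of degree $\ge 2$ in~$\vd$; because $f$ has nonnegative coefficients and $\vx, \vd \ge \vzero$, we have $R \ge \vzero$. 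Combining this with $f(\mu_f) = \mu_f$ and the strict inequality $f(\vx) \succ \vx$,
\[
  \mu_f \;=\; f(\vx) + A\vd + R \;\succ\; \vx + A\vd + R \;\ge\; \vx + A\vd,
\]
and hence $A\vd \prec \mu_f - \vx = \vd$. Thus $A$ is a nonnegative matrix possessing a strictly positive vector~$\vd$ with $A\vd \prec \vd$.

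From here the argument is routine: set $\delta := \max_i (A\vd)_i / \vd_i$, so that $0 \le \delta < 1$ and $A\vd \le \delta \vd$; by monotonicity of~$A$ and induction, $A^k \vd \le \delta^k \vd \to \vzero$, and since $\vd \succ \vzero$ every standard basis vector is dominated by a multiple of~$\vd$, so $A^k \to 0$ entrywise, which is equivalent to $\rho(A) < 1$. (Alternatively, one could restrict to each SCC~$S$, where $A_{SS}$ is irreducible, and invoke Lemma~\ref{lemma:pf1}(4); the global argument above handles all SCCs at once.) This closes the proof along the lines sketched in the first paragraph.

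I expect the only genuinely delicate point to be the passage $\vx \prec f(\vx) \Rightarrow \vx \prec \mu_f$ \emph{with strictness in every component}: this is exactly where perfect superlinearity of~$f$ is needed (via Proposition~\ref{prop:prefix-postfix}), and it is essential, since without strict positivity of~$\vd$ one would only obtain $\rho(A) \le 1$, which is not enough for the Neumann series to converge (consider $f(X) = \tfrac12 X^2 + \tfrac12$ at $\vx = \vone$). Everything else — nonnegativity of the Taylor remainder and the standard equivalences between $A^k \to 0$, $\rho(A)<1$, and $\sum_{k\ge0} A^k = (\Id-A)^{-1}$ — is bookkeeping of a kind already carried out elsewhere in the paper.
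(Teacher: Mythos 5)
Your proof is correct and follows the same high-level route as the paper: use Proposition~\ref{prop:prefix-postfix} to get $\vx \prec \mu_f$, establish $\rho(f'(\vx)) < 1$, and then conclude via the Neumann-series identity $(\Id - A)^{-1} = A^*$. The one place you diverge is that the paper simply cites~\cite{etessamiyannakakis,EKL08:stacs} for the spectral-radius bound, whereas you give a self-contained argument (Taylor-expand at $\vx$ with $\vd = \mu_f - \vx \succ \vzero$, observe $f'(\vx)\vd \prec \vd$, then push $A^k\vd \le \delta^k\vd \to \vzero$ to get $\rho < 1$); this is a clean, correct elaboration of the cited fact and does not change the structure of the proof.
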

\begin{proof}
 By Proposition~\ref{prop:prefix-postfix} we have $\vx \prec \mu_f$.
 For such points~$\vx$ it was shown in~\cite{etessamiyannakakis,EKL08:stacs} that $\rho(f'(\vx)) < 1$.
 By standard matrix facts~\cite{matrixbuch}, the matrix star $A^*$ exists if and only if $\rho(A) < 1$.
 Furthermore, if $A^{*}$ exists, it is equal to $(\text{Id} - A)^{-1}$.
 Hence, $(\Id - f'(\vx))^{-1} = f'(\vx)^*$, and the statement follows. 
\end{proof}
So, Lemma~\ref{lem:newton-star} allows to replace the matrix inverse $(\Id - f'(\vx))^{-1}$ with the matrix star $f'(\vx)^*$
 as long as $\vx \prec f(\vx)$ holds, which will be true in this paper whenever we compute~\mbox{$\Ne(\vx)$}.

The following two lemmas are used to show the validity of the floating assignment in line~\ref{u1} of Algorithm~\ref{gesamtalgo}.
\begin{lemma} \label{lem:newton-step-strict}
 Let $f$ be perfectly superlinear.
 Let $\vzero \prec \vx \prec f(\vx) \prec \vone$ and $\vy = \Ne(\vx)$.
 Then $f(\vx) \prec \vy \prec f(\vy) \prec \vone$.
\end{lemma}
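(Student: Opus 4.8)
The plan is to combine the matrix-star representation of~$\Ne$ from Lemma~\ref{lem:newton-star} with a Taylor expansion of~$f$ at~$\vx$. Since $\vzero \le \vx \le \vone$ (from $\vx \prec f(\vx) \prec \vone$) and $\vx \prec f(\vx)$, Proposition~\ref{prop:prefix-postfix} gives $\vx \prec \mu_f$; hence $\rho(f'(\vx)) < 1$ (the fact from~\cite{etessamiyannakakis,EKL08:stacs} already used in the proof of Lemma~\ref{lem:newton-star}), so $(f'(\vx))^*$ exists and, by that lemma,
\[
 \vy \;=\; \vx + (f'(\vx))^*\bigl(f(\vx)-\vx\bigr) \;=\; f(\vx) + \sum_{i\ge 1} \bigl(f'(\vx)\bigr)^i\bigl(f(\vx)-\vx\bigr),
\]
using $(f'(\vx))^* = \Id + \sum_{i\ge1}(f'(\vx))^i$. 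In particular $\vd := \vy-\vx = (f'(\vx))^*(f(\vx)-\vx) \ge f(\vx)-\vx \succ \vzero$, and (by definition of the Newton step) $\vy$ is the unique fixed point of the affine map $g(\vX) := f(\vx) + f'(\vx)(\vX-\vx)$, so $f(\vx) + f'(\vx)\vd = g(\vy) = \vy$.

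For $f(\vx) \prec \vy$: as $f$ is perfectly superlinear, every variable depends directly on itself, so $(f'(\vx))_{ii}$ is the value at $\vx \succ \vzero$ of a nonzero polynomial with nonnegative coefficients and is therefore strictly positive. Hence $f'(\vx)(f(\vx)-\vx) \succ \vzero$, and since $f'(\vx)$ is nonnegative the whole tail $\sum_{i\ge1}(f'(\vx))^i(f(\vx)-\vx)$ is $\ge f'(\vx)(f(\vx)-\vx) \succ \vzero$; substituting into the displayed identity gives $\vy \succ f(\vx)$.

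For $\vy \prec f(\vy)$: write the Taylor expansion $f(\vy) = f(\vx+\vd) = f(\vx) + f'(\vx)\vd + R(\vx,\vd)$ with $R$ collecting the monomials of degree $\ge 2$ in~$\vd$. Nonnegativity of the coefficients together with $\vx \ge \vzero$, $\vd \succ \vzero$ gives $R(\vx,\vd) \ge \vzero$, and since $f$ is purely superlinear each $f_i$ has a monomial of degree~$\ge 2$ which, evaluated at $\vx,\vd \succ \vzero$, contributes strictly to $R_i(\vx,\vd)$ (a term $\propto \vx_j^{a_j-2}\vd_j^2$ if some exponent is $\ge 2$, or $\propto \vd_j\vd_{j'}$ if two distinct variables occur); thus $R(\vx,\vd) \succ \vzero$. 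Using $f(\vx)+f'(\vx)\vd = \vy$, we conclude $f(\vy) = \vy + R(\vx,\vd) \succ \vy$. For $f(\vy) \prec \vone$: $\vy = \Ne(\vx) \le \mu_f \le \vone$ by the standard property of Newton iterates from a pre-fixed point below~$\mu_f$ (recalled before Algorithm~\ref{gesamtalgo}; here $\vx \prec \mu_f$), so $\vy \le \vone$ and $\vy \prec f(\vy)$, whence Proposition~\ref{prop:prefix-postfix} even gives $\vy \prec \mu_f \le \vone$; and since each $f_i$ depends directly on~$X_i$, it is strictly increasing in~$X_i$ on $(\vzero,\vone]$, so $f_i(\vy) < f_i(\vone) \le 1$ (its coefficients summing to at most~$1$), i.e.\ $f(\vy) \prec \vone$.

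I expect the strictness of $\vy \prec f(\vy)$ to be the only real obstacle: the literature only supplies the non-strict post-fixed point inequality $\Ne(\vx) \le f(\Ne(\vx))$, and upgrading it to ``$\prec$'' rests entirely on the positivity of the second-order remainder $R(\vx,\vy-\vx)$, which in turn needs both that $f$ is purely superlinear and that $\vx \succ \vzero$. I would take care to spell out why a degree-$\ge 2$ monomial always produces a strictly positive remainder component under these hypotheses; the remaining inequalities are routine monotonicity arguments.
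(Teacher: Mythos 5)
Your proof is correct and follows essentially the same route as the paper's: you invoke Lemma~\ref{lem:newton-star} to write $\vy = \vx + f'(\vx)^*(f(\vx)-\vx)$, get $f(\vx)\prec\vy$ from the strictly positive diagonal of~$f'(\vx)$, and get $\vy\prec f(\vy)$ from the Taylor identity $f(\vy)=\vy+R(\vx,\vy-\vx)$ together with strict positivity of the superlinear remainder at $\vx\succ\vzero$, $\vy-\vx\succ\vzero$. The only cosmetic difference is the final inequality: the paper derives $f(\vy)\prec\vone$ from the chain $\vy\prec f(\vy)\prec f(f(\vy))\le\mu_f\le\vone$, whereas you argue $\vy\prec\vone$ and then use strict monotonicity of $f_i$ in $X_i$ to conclude $f_i(\vy)<f_i(\vone)\le 1$; both rest on the same self-dependence hypothesis and are equivalent in spirit.
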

\begin{proof}
 By Lemma~\ref{lem:newton-star} we have $\Ne(\vx) = \vx + f'(\vx)^* (f(\vx) - \vx)$.
 Write $\vDelta = f'(\vx)^* (f(\vx) - \vx)$, i.e., $\vy = \vx + \vDelta$.
 As every variable depends directly on itself, we have $f'(\vx) (f(\vx) - \vx) \succ \vzero$.
 Consequently,
  \begin{align*}
   f(\vx) & \prec  \vx + (f(\vx) - \vx) + f'(\vx) (f(\vx) - \vx) \\
          &    =   \vx + \sum_{i=0,1} f'(\vx)^i (f(\vx) - \vx) \\
          &  \le   \vx + \sum_{i=0}^\infty f'(\vx)^i (f(\vx) - \vx) \\
          &    =   \vx + \vDelta \\
          &    =   \vy\;.
  \end{align*}

 Letting $\vu, \vv$ be any vectors, we write $f(\vu + \vv) = f(\vu) + f'(\vu) \vv + R(\vu,\vv)$ for the Taylor expansion of~$f$ at~$\vu$.
 Notice that $R(\vx,\vDelta) \succ \vzero$, because $\vx \succ \vzero$ and $f$ is purely superlinear.
 Hence we have
  \begin{align*}
    \vy & =     \vx + \vDelta \\
        & \prec \vx + \vDelta + R(\vx,\vDelta)                                               \\
        & =     \vx + f'(\vx)^* (f(\vx) - \vx) + R(\vx,\vDelta) \\
        & =     \vx + (f(\vx) - \vx) + f'(\vx) f'(\vx)^* (f(\vx) - \vx) + R(\vx,\vDelta) \\
        & =     f(\vx) + f'(\vx) \vDelta + R(\vx,\vDelta) \\
        & =     f(\vx + \vDelta) \\
        & =     f(\vy)\;.
  \end{align*}
 By~\cite{etessamiyannakakis,EKL07:stoc} we have $\vy \le \mu_f$.
 By the monotonicity of~$f$ it follows that $f(\vy) \le f(\mu_f) = \mu_f$.
 Using the monotonicity of~$f$ once more and the fact that every variable depends directly on itself, we obtain $\vy \prec f(\vy) \prec f(f(\vy)) \le f(\mu_f) = \mu_f$.
 As $\mu_f \le \vone$, it follows $f(\vy) \prec \vone$. 
\end{proof}

The following lemma will also be used to show the validity of the floating assignment in line~\ref{u1} of Algorithm~\ref{gesamtalgo}.
It says that $\Ne(\Ne(\vx))$ is a post-fixed point of the linearization of~$f$ at~$\vx$.
\begin{lemma} \label{lem:doppelter-Newton}
 Let $f$ be perfectly superlinear.
 Let $\vzero \prec \vx \prec f(\vx)$.
 Let $\vz = \Ne(\Ne(\vx))$.
 Then $f(\vx) + f'(\vx)(\vz - \vx) \prec \vz$.
\end{lemma}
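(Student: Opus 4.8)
The plan is to abbreviate $\vy := \Ne(\vx)$ and $\vz := \Ne(\vy)$ and to prove the equivalent statement $g(\vz) \prec \vz$ for the affine map $g(\vX) := f(\vx) + f'(\vx)(\vX - \vx)$, noting that $g(\vy) = \vy$ since $\vy = \Ne(\vx)$ is by definition the solution of $\vX = g(\vX)$. First I would promote $\vy$ to a new ``base point'': the hypothesis $\vzero \prec \vx \prec f(\vx)$ (together with the easily checked $f(\vx) \prec \vone$) lets Lemma~\ref{lem:newton-step-strict} apply, giving $f(\vx) \prec \vy \prec f(\vy) \prec \vone$; in particular $\vx \prec \vy$ and $\vzero \prec \vy \prec f(\vy)$. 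The latter lets me invoke Lemma~\ref{lem:newton-star} on~$\vy$, so that $\vz = \vy + f'(\vy)^*\,(f(\vy) - \vy)$.

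Next, set $\mathbf{b} := f(\vy) - \vy \succ \vzero$ and $\vw := \vz - \vy = f'(\vy)^*\mathbf{b}$. Splitting off the zeroth term of the matrix star yields the identity $\vw = \mathbf{b} + f'(\vy)\vw$, hence $\vw \succeq \mathbf{b} \succ \vzero$ and $f'(\vy)\vw = \vw - \mathbf{b} \prec \vw$. Since $f$ is a polynomial with nonnegative coefficients, its Jacobian is monotone on the nonnegative orthant, so $\vx \prec \vy$ gives $f'(\vx) \le f'(\vy)$ entrywise, and therefore (as $\vw \ge \vzero$) $f'(\vx)\vw \le f'(\vy)\vw = \vw - \mathbf{b}$.

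Finally I would assemble the pieces: because $g$ is affine with linear part $f'(\vx)$ and $g(\vy) = \vy$, we get $g(\vz) = g(\vy + \vw) = g(\vy) + f'(\vx)\vw = \vy + f'(\vx)\vw$, so that $g(\vz) - \vz = f'(\vx)\vw - \vw \le -\mathbf{b} \prec \vzero$. This is exactly the claim $f(\vx) + f'(\vx)(\vz - \vx) \prec \vz$.

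The only step carrying any content is the chain $f'(\vx)\vw \le f'(\vy)\vw = \vw - \mathbf{b} \prec \vw$; it combines monotonicity of the Jacobian (from nonnegativity of the coefficients), the geometric-series identity $\vw = \mathbf{b} + f'(\vy)\vw$, and the strict positivity of~$\mathbf{b}$ --- equivalently, that the intermediate Newton iterate~$\vy$ is a \emph{strict} pre-fixed point, which is precisely what Lemma~\ref{lem:newton-step-strict} provides. Everything else is routine bookkeeping about the affine map~$g$ and its fixed point~$\vy$.
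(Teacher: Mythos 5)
Your proof is correct and is essentially the paper's argument in cleaner packaging: the quantity $\mathbf{b} = f(\vy) - \vy$ you introduce is exactly the Taylor remainder $R(\vx,\vDelta)$ that the paper computes explicitly (since $\vy = g(\vy) = f(\vx) + f'(\vx)\vDelta$ and $f(\vy) = f(\vx) + f'(\vx)\vDelta + R(\vx,\vDelta)$), and your chain $f'(\vx)\mathbf{w} \le f'(\vy)\mathbf{w} = \mathbf{w} - \mathbf{b} \prec \mathbf{w}$ is precisely the paper's estimate $f'(\vx)f'(\vy)^{*}R \le f'(\vy)f'(\vy)^{*}R \prec f'(\vy)^{*}R$. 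The only presentational difference is that you obtain $\mathbf{b} \succ \vzero$ by invoking Lemma~\ref{lem:newton-step-strict} and exploit the affine fixed-point identity $g(\vy) = \vy$, whereas the paper re-derives both of these in-line from the Taylor expansion of~$f$ at~$\vx$ and pure superlinearity.
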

\begin{proof}
 Letting $\vu, \vv$ be any vectors, we write $f(\vu + \vv) = f(\vu) + f'(\vu) \vv + R(\vu,\vv)$ for the Taylor expansion of~$f$ at~$\vu$.
 We write $\vy = \Ne(\vx)$ and $\vDelta = f'(\vx)^*(f(\vx) - \vx)$.
 Notice that $\vy = \vx + \vDelta$.
 We have
 \begin{align*}
   \vz & = \vy + f'(\vy)^* (f(\vy) - \vy) \\
       & = \vx + \vDelta + f'(\vy)^* ( f(\vx + \vDelta) - \vx - \vDelta ) \\
       & = \vx + \vDelta + f'(\vy)^* ( f(\vx) + f'(\vx) \vDelta + R(\vx,\vDelta) - \vx - \vDelta ) \\
       & = \vx + \vDelta + f'(\vy)^* ( ( f(\vx) - \vx )+ f'(\vx) f'(\vx)^*(f(\vx) - \vx) - \vDelta + R(\vx,\vDelta) ) \\
       & = \vx + \vDelta + f'(\vy)^* ( \vDelta - \vDelta + R(\vx,\vDelta) ) \\
       & = \vx + \vDelta + f'(\vy)^* R(\vx,\vDelta) \;.\\
 \end{align*}
 It follows
 \begin{align*}
    f(\vx) + f'(\vx)(\vz - \vx) & = f(\vx) + f'(\vx) \left( \vDelta + f'(\vy)^* R(\vx,\vDelta) \right) \\
                                & = \vx + (f(\vx)-\vx) + f'(\vx) f'(\vx)^* (f(\vx)-\vx) + f'(\vx) f'(\vy)^* R(\vx,\vDelta) \\
                                & = \vx + \vDelta + f'(\vx) f'(\vy)^* R(\vx,\vDelta) \\
                                & \le \vx + \vDelta + f'(\vy) f'(\vy)^* R(\vx,\vDelta) \\
                                & \prec \vx + \vDelta + f'(\vy) f'(\vy)^* R(\vx,\vDelta) + R(\vx,\vDelta) \\
                                & = \vx + \vDelta + f'(\vy)^* R(\vx,\vDelta) \\
                                & = \vz \;.
 \end{align*}
 For the $\mathord{\prec}$-inequality in this inequality chain, notice that, since $\vx \prec f(\vx)$, we have $\vDelta \succ \vzero$,
  and since $f$ is purely superlinear, we have $R(\vx,\vDelta) \succ \vzero$. 
\end{proof}

The following lemma states that $\Ne(\vx)$ is the least post-fixed point of the linearization of~$f$ at~$\vx$.
\begin{lemma} \label{lem:newton-less-than-post-fixed}
  Let $f$ be a PSP.
  Let $\vzero \prec \vx \prec f(\vx)$.
  Let $f(\vx) + f'(\vx)(\vz - \vx) \le \vz$.
  Then $\Ne(\vx) \le \vz$.
\end{lemma}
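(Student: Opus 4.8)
Let $f$ be a PSP, let $\vzero \prec \vx \prec f(\vx)$, and let $\vz$ satisfy $f(\vx) + f'(\vx)(\vz - \vx) \le \vz$. Then $\Ne(\vx) \le \vz$.

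The plan is to recognize $\Ne(\vx)$ as the \emph{least} fixed point of the linearization $g(\vX) = f(\vx) + f'(\vx)(\vX - \vx)$, which is a linear PSP-like map, and then invoke Knaster--Tarski to conclude that it sits below every post-fixed point of~$g$, in particular below~$\vz$. Concretely, the hypothesis $f(\vx) + f'(\vx)(\vz - \vx) \le \vz$ says exactly that $\vz$ is a post-fixed point of~$g$, i.e.\ $g(\vz) \le \vz$. On the other side, recall (from the derivation of Newton's method in the main text, and from Lemma~\ref{lem:newton-star}) that $\Ne(\vx) = \vx + (f'(\vx))^*(f(\vx) - \vx)$ is obtained as the solution of $\vX = g(\vX)$; using $\vx \prec f(\vx)$ together with Proposition~\ref{prop:prefix-postfix} we get $\vx \prec \mu_f$, hence $\rho(f'(\vx)) < 1$, so $(f'(\vx))^* = (\Id - f'(\vx))^{-1}$ exists and $\Ne(\vx)$ is the unique fixed point of~$g$. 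It therefore is a post-fixed point of~$g$ as well.

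First I would make precise that $\Ne(\vx)$ is not merely \emph{a} fixed point of~$g$ but the \emph{least} one among nonnegative vectors. Since $g$ is affine with nonnegative linear part $f'(\vx) \ge 0$ and nonnegative constant term $f(\vx) - f'(\vx)\vx$ (this constant term is $\ge \vzero$ because it equals $f(\vx)$ minus the ``linear part contribution'', and can be checked nonnegative using that all coefficients of~$f$ are nonnegative and $\vx \ge \vzero$; alternatively one argues via the Kleene sequence below), the map~$g$ is monotone, so by Kleene's theorem it has a least fixed point given by the limit of $\vzero, g(\vzero), g(g(\vzero)), \dots$; and since $g$ has a \emph{unique} fixed point (as $\rho(f'(\vx)) < 1$), that least fixed point must be $\Ne(\vx)$ itself. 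Then Knaster--Tarski (exactly as used for $\mu_f$ being the least post-fixed point earlier in the paper) gives that the least fixed point of the monotone map~$g$ is below every post-fixed point, so $\Ne(\vx) \le \vz$.

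Alternatively, and perhaps more cleanly, I would avoid the monotonicity-of-$g$ subtlety by a direct computation: from $g(\vz) \le \vz$ we get $f'(\vx)(\vz - \vx) \le \vz - f(\vx) \le \vz - \vx$ (using $\vx \prec f(\vx)$), i.e.\ $(\Id - f'(\vx))(\vz - \vx) \ge f(\vx) - \vx \succ \vzero$; applying the nonnegative matrix $(f'(\vx))^* = (\Id - f'(\vx))^{-1}$ to both sides preserves the inequality, yielding $\vz - \vx \ge (f'(\vx))^*(f(\vx) - \vx) = \Ne(\vx) - \vx$, hence $\Ne(\vx) \le \vz$. The only real obstacle is justifying that $(f'(\vx))^*$ exists and is nonnegative — but that is $\rho(f'(\vx)) < 1$, which follows from $\vx \prec f(\vx) \Rightarrow \vx \prec \mu_f$ (Proposition~\ref{prop:prefix-postfix}) and the cited spectral-radius fact, so this reduces to bookkeeping of already-established results.
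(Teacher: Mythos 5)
Your second (``direct computation'') argument is precisely the paper's proof: the paper computes $(\Id - f'(\vx))(\vz - \Ne(\vx)) = \vz - f(\vx) + f'(\vx)(\vx - \vz) \ge \vzero$ and then multiplies by the nonnegative matrix $(\Id-f'(\vx))^{-1} = f'(\vx)^*$, which is the same calculation as yours shifted by $\vDelta$. So the route you yourself flag as cleaner is exactly right and is the paper's route.

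One caveat on your first route: the claim that the constant term $f(\vx) - f'(\vx)\vx$ of the linearization $g$ is nonnegative is false in general. By Euler's identity a degree-$d$ monomial $m$ contributes $(1-d)\,m(\vx)$ to this constant, so e.g.\ for $f(X)=X^2$ and $x>0$ the constant term is $-x^2 < 0$. Consequently the Kleene chain $\vzero, g(\vzero), \ldots$ need not start below the fixed point, and the ``least nonnegative fixed point of $g$'' framing doesn't go through as written. The Knaster--Tarski conclusion can still be rescued: $g$ is monotone (its linear part $f'(\vx)$ is nonnegative), so $g(\vz)\le\vz$ gives a decreasing chain $\vz \ge g(\vz) \ge g^2(\vz) \ge \cdots$, and since $g^k(\vz) = \Ne(\vx) + f'(\vx)^k(\vz - \Ne(\vx)) \to \Ne(\vx)$ as $\rho(f'(\vx))<1$, we get $\Ne(\vx) \le \vz$. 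But the direct calculation you give is the cleaner way, and is what the paper does.
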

\begin{proof}
 We write $\vy = \Ne(\vx)$ and $\vDelta = f'(\vx)^*(f(\vx) - \vx)$.
 Notice that $\vy = \vx + \vDelta$.
 We have
 \begin{align*}
  \left(\Id - f'(\vx)\right) (\vz - \vy)
   & = \vz - \vx - \vDelta + f'(\vx) (\vx + \vDelta - \vz) \\
   & = \vz - f(\vx) + ( f(\vx) - \vx ) - f'(\vx)^*(f(\vx) - \vx) + f'(\vx) f'(\vx)^*(f(\vx) - \vx) \\
   & \qquad + f'(\vx) (\vx - \vz) \\
   & = \vz - f(\vx) + f'(\vx) (\vx - \vz) \\
   & \ge \vzero \quad \text{(by assumption)}\;.
 \end{align*}
 It follows that
 \begin{align*}
   \vz - \vy & = \left(\Id - f'(\vx)\right)^{-1} \left(\Id - f'(\vx)\right) (\vz - \vy) \\
             & = f'(\vx)^* \left(\Id - f'(\vx)\right) (\vz - \vy) \ge f'(\vx)^* \vzero  = \vzero\;,
 \end{align*}
  i.e., $\Ne(\vx) = \vy \le \vz$. 
\end{proof}

The following lemma shows the validity of the floating assignment in line~\ref{u7} of Algorithm~\ref{gesamtalgo}.

\begin{lemma} \label{lem:compute-postfix}
  Let $f$ be a purely superlinear PSP. 
  Let $\vzero \prec \vt \prec \vone$ such that $f'(1) \vt \succ \vt$.
  Let
   \[
    \vy = \vone - \min \left\{1,\frac{\min_{i\in \{1,\ldots,n\}}(f'(\vone) \vt - \vt)_i}{2 \cdot \max_{i\in \{1,\ldots,n\}} (f(\vtwo))_i} \right\} \cdot \vt\;.
   \]
  Then $f(\vy) \prec \vy \prec \vone$.
\end{lemma}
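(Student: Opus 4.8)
The plan is to set $c := \min\bigl\{1,\ \tfrac{\min_{i}(f'(\vone)\vt - \vt)_i}{2\max_{i}(f(\vtwo))_i}\bigr\}$, so that $\vy = \vone - c\,\vt$, and to begin with the easy endpoint checks. Since $f$ is purely superlinear, the $f_i$ are non-constant polynomials with positive coefficients, hence $f_i(\vtwo) > 0$; and $f'(\vone)\vt \succ \vt$ gives $(f'(\vone)\vt - \vt)_i > 0$ for every~$i$. Thus the fraction defining~$c$ is positive, while $c \le 1$ by construction, so $0 < c \le 1$. Together with $\vzero \prec \vt \prec \vone$ this yields $\vzero \prec \vone - \vt \le \vone - c\vt = \vy \prec \vone$, which is the second half of the statement.

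For $f(\vy) \prec \vy$ I would Taylor-expand each component around~$\vone$ in the direction $-c\vt$, writing $f_i(\vone - c\vt) = f_i(\vone) - c\,(f'(\vone)\vt)_i + R_i$, and prove the remainder bound $R_i \le c^2\,f_i(\vtwo)$ monomial by monomial. A monomial $a\vX^\alpha$ of $f_i$ contributes $a\cdot\bigl(\prod_j(1 - c t_j)^{\alpha_j} - 1 + c\sum_j \alpha_j t_j\bigr)$ to~$R_i$. Writing the product as $\prod_{k=1}^{d}(1 - c s_k)$, where $d = |\alpha|$ and $(s_1,\dots,s_d)$ lists each $t_j$ with multiplicity $\alpha_j$, and expanding in elementary symmetric polynomials $e_m(s)$, this bracket equals $\sum_{m \ge 2}(-c)^m e_m(s)$. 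Since every $e_m(s) \ge 0$ and $0 < c \le 1$, it is at most $c^2 \sum_{m\ge 0} e_m(s) = c^2 \prod_{k=1}^{d}(1 + s_k) \le c^2\, 2^{d}$, the last inequality because every $s_k = t_j < 1$. Summing over the monomials of $f_i$, weighted by their nonnegative coefficients, gives $R_i \le c^2 \sum_\alpha a_{i,\alpha}\, 2^{|\alpha|} = c^2 f_i(\vtwo)$.

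From here the conclusion is bookkeeping. Using $f(\vone) \le \vone$,
\[
 f_i(\vy) \;=\; f_i(\vone) - c\,(f'(\vone)\vt)_i + R_i \;\le\; 1 - c\,(f'(\vone)\vt)_i + c^2 f_i(\vtwo)\,,
\]
so $f_i(\vy) < \vy_i = 1 - c\,t_i$ is equivalent (after dividing by $c>0$) to $c\,f_i(\vtwo) < (f'(\vone)\vt - \vt)_i$. Because $c \le \tfrac{\min_j(f'(\vone)\vt - \vt)_j}{2\max_j f_j(\vtwo)}$ we get $c\,f_i(\vtwo) \le c\max_j f_j(\vtwo) \le \tfrac12 \min_j(f'(\vone)\vt - \vt)_j < (f'(\vone)\vt - \vt)_i$, using once more that $f'(\vone)\vt \succ \vt$. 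As this holds for all~$i$, we obtain $f(\vy) \prec \vy$.

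The one genuinely non-mechanical point is the remainder estimate $R_i \le c^2 f_i(\vtwo)$: rewriting each monomial via elementary symmetric polynomials is exactly what makes the clean factor $f(\vtwo)$ appear (instead of a messier expression involving higher derivatives of $f$ at $\vone$), and it is there that the hypothesis $\vt \prec \vone$ is used, to bound $\prod_k(1 + s_k)$ by $2^{|\alpha|}$. Everything else — positivity of~$c$, the endpoint inequalities, and the final comparison — is routine.
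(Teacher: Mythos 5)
Your proof is correct and follows essentially the same strategy as the paper's: bound the Taylor remainder at $\vone$ in the direction $-c\vt$ by $c^2 f(\vtwo)$, then finish with the same algebra comparing $c\,f_i(\vtwo)$ against $\tfrac12\min_j(f'(\vone)\vt-\vt)_j$. The only difference is cosmetic: where the paper establishes the remainder bound by an abstract chain $R(\vone,-r\vt)\le R(\vone,r\vt)\le r^2 R(\vone,\vt)\le r^2 R(\vone,\vone)\le r^2 f(\vtwo)$ exploiting nonnegativity of the Taylor coefficients, you derive the identical bound concretely, monomial by monomial, via the elementary-symmetric-polynomial expansion of $\prod_k(1-cs_k)$.
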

\begin{proof}
 Letting $\vu, \vv$ be any vectors, we write $f(\vu + \vv) = f(\vu) + f'(\vu) \vv + R(\vu,\vv)$ for the Taylor expansion of~$f$ at~$\vu$.
 Let $\displaystyle r = \min \left\{1,\frac{\min_{i\in \{1,\ldots,n\}}(f'(\vone) \vt - \vt)_i}{2 \cdot \max_{i\in \{1,\ldots,n\}} (f(\vtwo))_i} \right\}$.
 Then we have:
  \begin{align*}
    R(\vone,-r \vt) & \le R(\vone,r \vt) \\
                     & \le r^2 R(\vone,\vt) && \text{(degree of $R(\vone,\cdot)$ at least $2$)} \\
                     & \le r^2 R(\vone,\vone) && \text{($\vt \le \vone$)} \\
                     & \le r^2 f(\vtwo) && \text{($f(\vone+\vone) = f(\vone) + f'(\vone) \vone + R(\vone,\vone)$)} \\
                     & \le r^2 \max_{i\in \{1,\ldots,n\}} (f(\vtwo))_i \cdot \vone \\
                     & \le r \cdot \frac{\min_{i\in \{1,\ldots,n\}}(f'(\vone) \vt - \vt)_i}{2} \cdot \vone && \text{(definition of~$r$)} \\
                     & \le \frac{r}{2} \cdot (f'(\vone) \vt - \vt) \\
                     & \prec r \cdot (f'(\vone) \vt - \vt) && \text{($f'(\vone) \vt \succ \vt$)} \\
  \end{align*}
 Using this inequality we obtain
  \begin{align*}
    f(\vone - r \vt) &  =  f(\vone) + f'(\vone) \cdot (-r \vt) + R(\vone,-r \vt) \\
                     & \le 1 - r f'(\vone) \vt + R(\vone,-r \vt) && \text{($f(\vone) \le \vone$)} \\
                     & \prec 1 - r f'(\vone) \vt + r \cdot (f'(\vone) \vt - \vt) && \text{(see above)} \\
                     &   =   1 - r \vt
  \end{align*}
\end{proof}

\noindent The following lemma states a monotonicity property.
\begin{lemma}
\label{lem:ubfloat}
Let $f$ be perfectly superlinear.
Let $\vy = f(\vx) \le \vx$ and $f_i(\vx) < \vx_i$ for some $i \in \{1,\ldots,n\}$.
Then $f(\vy) \le \vy$ and $f_i(\vy) < \vy_i$.
\end{lemma}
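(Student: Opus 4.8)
For the first conclusion I would just invoke monotonicity of $f$ on the nonnegative orthant: since $\vx\ge\vzero$ we have $\vy=f(\vx)\ge\vzero$ (the coefficients of $f$ are nonnegative), and applying $f$ to $f(\vx)\le\vx$ yields $f(\vy)=f(f(\vx))\le f(\vx)=\vy$.

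The substance is the strict inequality $f_i(\vy)<\vy_i$, and my plan is to reduce it to the fact that $\vx\succ\vzero$. Since $\vx$ is a post-fixed point lying in the ambient range $[\vzero,\vone]$, Proposition~\ref{prop:prefix-postfix} gives $\vx\ge\mu_f$, and $\mu_f\succ\vzero$ by the standing assumption; hence $\vx\succ\vzero$, and then $\vy=f(\vx)\succ\vzero$ as well (each $f_i$ is a nonzero polynomial with positive coefficients). If one prefers not to use $\vx\le\vone$, the same conclusion is obtained directly: put $Z=\{k:\vx_k=0\}$ and suppose $Z\ne\emptyset$; for $j\in Z$ the inequality $f_j(\vx)\le\vx_j=0$ forces $f_j(\vx)=0$, so every monomial of $f_j$ contains a variable from $Z$; a straightforward induction on the Kleene sequence $\vzero,f(\vzero),f^2(\vzero),\dots$ then shows $(f^m(\vzero))_Z=\vzero_Z$ for all $m$, whence $(\mu_f)_Z=\vzero_Z$, contradicting $\mu_f\succ\vzero$.

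Granting $\vx\succ\vzero$, I would finish as follows. Fix the index $i$ with $\vy_i=f_i(\vx)<\vx_i$, so $\vy_i<\vx_i$. Because $f$ is perfectly superlinear, $X_i$ occurs in $f_i$, so we may write $f_i=c\,X_i\,\tilde m+r$ with $c>0$, $\tilde m$ a monomial, and $r$ a polynomial with nonnegative coefficients. Using $\vy\le\vx$ and monotonicity of $\tilde m$ and $r$, together with $\tilde m(\vx)>0$ (which holds because $\vx\succ\vzero$), I get
\[
 f_i(\vy)=c\,\vy_i\,\tilde m(\vy)+r(\vy)\ \le\ c\,\vy_i\,\tilde m(\vx)+r(\vx)\ <\ c\,\vx_i\,\tilde m(\vx)+r(\vx)=f_i(\vx)=\vy_i,
\]
where the strict step uses $\vy_i<\vx_i$ and $c\,\tilde m(\vx)>0$. (Equivalently one can Taylor-expand $f_i$ at $\vy$: $f_i(\vx)-f_i(\vy)\ge\nabla f_i(\vy)\cdot(\vx-\vy)\ge\tfrac{\partial f_i}{\partial X_i}(\vy)\,(\vx_i-\vy_i)>0$, since $\vy\succ\vzero$ forces $\tfrac{\partial f_i}{\partial X_i}(\vy)>0$.)

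The only delicate point is this reduction to $\vx\succ\vzero$: without it the monomial $X_i\tilde m$ could vanish at $\vx$ (if $\tilde m$ involves a variable that is $0$ there) and the argument would break; it is precisely the standing assumption $\mu_f\succ\vzero$ — via Proposition~\ref{prop:prefix-postfix} or the short Kleene-iteration argument above — that rules this out. Everything else is routine monotonicity of polynomials with nonnegative coefficients.
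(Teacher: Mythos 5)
Your proof is correct and rests on the same two ideas as the paper's: monotonicity of $f$ gives $f(\vy)\le\vy$, and the fact that each $X_i$ depends directly on itself (perfect superlinearity) gives the strict inequality $f_i(\vy)<f_i(\vx)=\vy_i$. The paper's own proof of the strict part is a single sentence (and contains an evident typo, writing the claimed inequality as $f_i(f(\vx)) < f_i(f(\vx))$); it leaves implicit exactly the point you isolate: going from $f_i(\vx)<\vx_i$ to $f_i(\vy)<f_i(\vx)$ uses strict monotonicity of $f_i$ in the $X_i$-direction between $\vy$ and $\vx$, which requires the $X_i$-containing monomial to be positive there, hence $\vx\succ\vzero$. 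Your two ways of securing that (via Proposition~\ref{prop:prefix-postfix} together with the standing assumption $\mu_f\succ\vzero$, or directly by the Kleene-sequence argument that a zero component of a post-fixed point would propagate to a zero component of $\mu_f$) are exactly the missing justification, and the second one has the advantage of not needing $\vx\le\vone$, which is not among the lemma's stated hypotheses. So this is the paper's approach, carried out with more care than the paper's own terse proof.
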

\proof
We have $f(\vy) = f(f(\vx)) \le f(\vx) = f(\vy)$ by the monotonicity of~$f$.
Moreover, since each component depends on itself, the strict inequality $f_i(\vx) < \vx_i$ implies the strict inequality $f_i(f(\vx)) < f_i(f(\vx))$.
\qed

The following lemma will be used in the proof of Theorem~\ref{thm:gesamt-algo} to show that $\ub_i < 1$ eventually holds in the components~$i$ with $(\mu_f)_i < 1$.
\begin{lemma}
\label{thm:taylor}
Let $f$ be a purely superlinear PSP and $\mathbf{x} \geq \overline{0}, \mathbf{u} \succ \overline{0}$. Then
\begin{equation*}
 f'(\mathbf{x}+\mathbf{u})\mathbf{u} \succ f(\mathbf{x}+\mathbf{u}) - f(\mathbf{x}).
\end{equation*}
\end{lemma}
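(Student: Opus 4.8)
The plan is to reduce the claimed vector inequality to a one-dimensional statement, handled componentwise. Fix a component $i \in \{1,\ldots,n\}$ and introduce the univariate polynomial $g(t) := f_i(\mathbf{x} + t\mathbf{u})$. By the chain rule, $g'(t) = \sum_j \frac{\partial f_i}{\partial X_j}(\mathbf{x}+t\mathbf{u})\, u_j = (f'(\mathbf{x}+t\mathbf{u})\,\mathbf{u})_i$, so that $(f'(\mathbf{x}+\mathbf{u})\,\mathbf{u})_i = g'(1)$ and $(f(\mathbf{x}+\mathbf{u}) - f(\mathbf{x}))_i = g(1) - g(0)$. Hence it suffices to prove $g'(1) > g(1) - g(0)$ for every~$i$.

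Next I would analyse the coefficients of $g$. Write $g(t) = \sum_{k=0}^{d} a_k t^k$ with $d := \deg f_i$. Since $f_i$ has only nonnegative coefficients and $\mathbf{x} \geq \vzero$, $\mathbf{u} \succ \vzero$, substituting $X_j \mapsto x_j + t u_j$ into each monomial of $f_i$ yields a polynomial in $t$ with nonnegative coefficients; summing over the monomials, all $a_k \geq 0$. Moreover the leading coefficient $a_d$ is the sum, over the monomials of $f_i$ of degree exactly $d$, of the (positive) monomial coefficient times the product of the corresponding $u_j$; each such product is positive because $\mathbf{u} \succ \vzero$, and there is at least one such monomial because $f$ purely superlinear forces $d = \deg f_i \geq 2$. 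Therefore $a_d > 0$.

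It then remains to compute $g'(1) - (g(1) - g(0)) = \sum_{k=1}^{d} k a_k - \sum_{k=1}^{d} a_k = \sum_{k=1}^{d} (k-1) a_k \geq (d-1)\,a_d > 0$, using $a_k \geq 0$ for all $k$, $a_d > 0$, and $d \geq 2$. Since $i \in \{1,\ldots,n\}$ was arbitrary, this establishes $f'(\mathbf{x}+\mathbf{u})\,\mathbf{u} \succ f(\mathbf{x}+\mathbf{u}) - f(\mathbf{x})$.

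The step deserving the most care — more a subtlety than a genuine obstacle — is obtaining \emph{strictness} rather than merely $\leq$ in all components: it is essential that $\mathbf{u} \succ \vzero$ (not just $\mathbf{u} \geq \vzero$), so that the top-degree coefficient $a_d$ is strictly positive, and that ``purely superlinear'' guarantees $\deg f_i \geq 2$ for \emph{every}~$i$; dropping either hypothesis leaves only the non-strict inequality, which amounts to convexity of $t \mapsto g(t)$ on $[0,1]$. Everything else is routine polynomial bookkeeping.
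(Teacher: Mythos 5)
Your proof is correct, and it takes a genuinely different route from the paper's. Both proofs fix a component $i$ and work along the segment $t \mapsto \mathbf{x}+t\mathbf{u}$, but the key step differs. The paper writes $f_i(\mathbf{x}+\mathbf{u}) - f_i(\mathbf{x}) = \int_0^1 f'_i(\mathbf{x}+s\mathbf{u})\,\mathbf{u}\,ds$ and bounds the integrand by $f'_i(\mathbf{x}+\mathbf{u})\,\mathbf{u}$, using monotonicity of the gradient (nonnegative coefficients) for $\le$ on all of $[0,1]$ and pure superlinearity together with $\mathbf{u}\succ\vzero$ for a strict inequality on $[0,1/2]$; splitting the integral then yields the strict bound. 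You instead expand $g(t)=f_i(\mathbf{x}+t\mathbf{u})=\sum_k a_k t^k$ and compute $g'(1)-(g(1)-g(0))=\sum_{k\ge 1}(k-1)a_k$, which is nonnegative termwise and strictly positive because the leading coefficient $a_d>0$ and $d\ge 2$. Your route is more elementary (no calculus beyond differentiating a univariate polynomial) and makes the source of strictness sharper and more local: it comes entirely from the top-degree coefficient, rather than from a strict integrand inequality over a subinterval. The paper's integral argument, on the other hand, transfers verbatim to non-polynomial functions with nondecreasing (and eventually strictly increasing) derivative, so it is slightly more robust in spirit. Both correctly isolate the two hypotheses that do the work — positivity of $\mathbf{u}$ and degree $\ge 2$ in every component — and your closing remark on where strictness can fail is accurate.
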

\proof
It suffices to show $f_i(\mathbf{x})  - f_i(\mathbf{x}+\mathbf{u}) + f'_i(\mathbf{x}+\mathbf{u})\mathbf{u} > 0$ for every component $i$
of $f$.
We can write $f_i(\mathbf{x}+\mathbf{u})$ as $f_i(\mathbf{x}) + \int_{0}^{1}f'_i(\mathbf{x}+s\mathbf{u}) \mathbf{u}\,ds$. Hence
\begin{align*}
& \\
& f_i(\mathbf{x})  - f_i(\mathbf{x}+\mathbf{u}) + f'_i(\mathbf{x}+\mathbf{u})\mathbf{u} \\
  & =
f_i(\mathbf{x}) - f_i(\mathbf{x}) - \int_{0}^{1}{f'_i(\mathbf{x}+s\mathbf{u})\mathbf{u}\,ds}+f'_i(\mathbf{x}+\mathbf{u})\mathbf{u}\\
& =  - \int_{0}^{1}{f'_i(\mathbf{x}+s\mathbf{u})\mathbf{u}\,ds} + f'_i(\mathbf{x}+\mathbf{u})\mathbf{u} \\
& =  - \int_{0}^{1/2}{f'_i(\mathbf{x}+s\mathbf{u})\mathbf{u}\,ds} - \int_{1/2}^{1}{f'_i(\mathbf{x}+s\mathbf{u})\mathbf{u}\,ds}
  + f'_i(\mathbf{x}+\mathbf{u})\mathbf{u} \\
\end{align*}

For $0\leq s \leq 1$, we have $f'_i(\mathbf{x}+s\mathbf{u})\mathbf{u} \le f'_i(\mathbf{x}+\mathbf{u}) \mathbf{u}$,
 and for $0 \le s \le 1/2$, the inequality is strict, because $\vu \succ \vzero$ and $f$ is purely superlinear.
Hence
\begin{align*}
 & - \int_{0}^{1/2}{f'_i(\mathbf{x}+s\mathbf{u})\mathbf{u}\,ds}
- \int_{1/2}^{1}{f'_i(\mathbf{x}+s\mathbf{u})\mathbf{u}\,ds} +f'_i(\mathbf{x}+\mathbf{u})\mathbf{u}\\
& >  - \int_{0}^{\frac{1}{2}}{f'_i(\mathbf{x}+\mathbf{u})\mathbf{u}\,ds}
- \int_{\frac{1}{2}}^{1}{f'_i(\mathbf{x}+\mathbf{u})\mathbf{u}\,ds} + f'_i(\mathbf{x}+\mathbf{u})\mathbf{u}\\
& = - f'_i(\mathbf{x}+\mathbf{u})\mathbf{u} + f'_i(\mathbf{x}+\mathbf{u})\mathbf{u} = 0.
\end{align*}
\qed


\noindent Now we can prove Theorem~\ref{thm:gesamt-algo} which is restated here:
\begin{qtheorem}{\ref{thm:gesamt-algo}}
 \stmtthmgesamtalgo
\end{qtheorem}
\begin{proof}
 The validity of the floating assignment in line~\ref{u1} follows from Lemma~\ref{lem:newton-step-strict} and Lemma~\ref{lem:doppelter-Newton}.
 Next we show the convergence of the lower bounds.
 Let $(\lbs{k})_k$ be the sequence of the lower bounds $\lb$ in the algorithm, where $\lbs{0}$ is the result of $\texttt{computeStrictPrefix}(f)$.
 Moreover, define an ``exact'' Newton sequence $\ns{0} = \lbs{0}$ and $\ns{k+1} = \Ne ( \ns{k} )$.
 We prove by induction that $\ns{k} \le \lbs{k}$.
 The induction base ($k=0$) is trivial.
 Let $k \ge 0$.
 Notice that the floating assignment in line~\ref{u1} guarantees $f(\lbs{k}) + f'(\lbs{k}) \left( \lbs{k+1} - \lbs{k} \right) \le \lbs{k+1}$.
 Therefore, Lemma~\ref{lem:newton-less-than-post-fixed} assures $\Ne(\lbs{k}) \le \lbs{k+1}$.
 Hence we have
 \begin{align*}
  \ns{k+1} &  =  \Ne(\ns{k}) \\
           & \le \Ne(\lbs{k}) && \text{(induction hypothesis, monotonicity of~$\Ne$ as shown in~\cite{EKL08:stacs})} \\
           & \le \lbs{k+1}    && \text{(as argued above)}\;.
 \end{align*}
 So we have $\ns{k} \le \lbs{k}$ for all~$k$.
 By the floating assignment in line~\ref{u1}, we have $\lbs{k} \prec f(\lbs{k})$, so $\lbs{k} \prec \mu_f$ by Proposition~\ref{prop:prefix-postfix}.
 As $(\ns{k})_k$ converges to~$\mu_f$, the sequence $(\lbs{k})_k$ converges to~$\mu_f$ as well.
 In addition, it was shown in~\cite{EKL07:stoc,EKL08:stacs} that $(\ns{k})_k$ converges linearly to~$\mu_f$.
 As $\ns{k} \le \lbs{k}$, the same holds for $(\lbs{k})_k$.

 Now we turn the upper bounds~$\ub$.
 We prove the following invariants of the algorithm:
  \begin{itemize}
   \item[(a)] $f(\ub) \le \ub \le \vone$;
   \item[(b)] for all components~$j$ with $\ub_i < 1$, we have $f_i(\ub) < \ub_i$.
  \end{itemize}
 Clearly, this holds at the beginning (when $\ub = \vone$).
 The invariants are clearly preserved by the assignment in line~\ref{u2b}.
 Repeated application of Lemma~\ref{lem:ubfloat} shows that the floating assignment in line~\ref{u3} is valid and that the invariants are preserved.
 Lemma~\ref{lem:compute-postfix} implies that the floating assignment in line~\ref{u7} are valid and preserve the invariants.
 Hence, the invariants hold.

 Next we prove that for any component~$i$ with $(\mu_f)_i < 1$, we eventually have $\ub_i<1$.
 Let us assume for the sake of a contradiction that there exists a component~$i$ with the property~$P(i)$,
  where $P(i)$ means that $(\mu_f)_i < 1$ and $\ub_i=1$ holds during the entire execution of the algorithm.
 Choose $i$ ``minimal'' in the sense that for all variables~$X_j$ on which $X_i$ depends we have that either $X_i$ and $X_j$ are in the same SCC,
  or $P(j)$ does not hold.
 Let $S$ be the SCC of~$X_i$ and let $X_j$ be any variable from $\text{Var} \setminus S$ on which $X_i$ depends.
 Since $P(i)$ holds, we must have $\ub_j = 1$ during the entire execution of the algorithm,
  because if $\ub_j < 1$ were true at some point, it would take at most $n$ iterations before $\ub_i < 1$.
 As $P(j)$ cannot hold by the minimality of~$i$, we have $(\mu_f)_j = 1$.
 Therefore, letting $g$ denote the PSP obtained by restricting~$f$ to the $S$-components and replacing all variables from other SCCs by the constant~$1$,
  we have $\mu_g = (\mu_f)_S \prec \vone$.
 Since $\ub_S = \vone$ holds during the execution of the algorithm, we have $g(\vone) = \vone$, i.e., $\vone$ is a fixed point of~$g$.
 Therefore, by Lemma~\ref{lem:lin-unique-fixed-point}, $g$ cannot be linear, as $\mu_g \prec \vone$.
 Since $f$ is perfectly superlinear, $g$ must then be purely superlinear.
 Application of Lemma~\ref{thm:taylor} (with $\vx := \mu_g$ and $\vu := \vone - \mu_g$) yields
 \begin{align*}
   g'(\vone) (\vone - \mu_g) \succ g(\vone) - g(\mu_g) = \vone - \mu_g\;.
 \end{align*}
 Since the sequence of $\lb_S$ computed during the execution of the algorithm converges to~$\mu_g$,
  the continuity of $g'(\vone)$ implies that eventually $g'(\vone)(\vone - \lb_S) \succ \vone - \lb_S$ holds.
 But this means that the condition of line~\ref{u6} is satisfied and, thus, the following assignment causes $\ub_S \prec \vone$,
  contradicting our assumption that $P(i)$ holds.
 So we have shown that for any component~$i$ with $(\mu_f)_i < 1$, we eventually have $\ub_i<1$.

 Denote by $(\ubs{k})_k$ the sequence of upper bounds~$\ub$ computed by the algorithm.
 It remains to show that this sequences converges linearly to~$\mu_f$.
 We have shown above that there exists $k_0$ such that for all $k \ge k_0$ we have that
  $\ubs{k+1}_i \le \ubs{k}_i < 1$ holds for all components~$i$ with $(\mu_f)_i < 1$.
 Choose a real number~$r$ with $0 < r <1$ such that for the point $\vp := \mu_f + r (\vone - \mu_f)$ we have $\ubs{k_0} \le \vp \le \vone$ and
  the following is true for all components~$i$: either $(\ubs{k_0})_i = 1$ or $\ubs{k_0}_i < \vp_i < 1$.
 Define the sequence $(\ps{k})_{k \ge k_0}$ by setting $\ps{k_0} := \vp$ and $\ps{k+1} := f(\ps{k})$ for all $k \ge k_0$.
 By Lemma~\ref{lem:close-to-one} and Lemma~\ref{lem:kleene-converges-linearly}, this sequence converges linearly to~$\mu_f$.
 To prove that the same holds for $(\ubs{k})$, it suffices to show that $\ubs{k} \le \ps{k}$ holds for all $k \ge k_0$.
 We proceed by induction on~$k$.
 The induction base ($k=k_0$) holds by definition of~$\ps{k_0}$.
 Let $k \ge k_0$.
 Then we have:
 \begin{align*}
   \ubs{k+1}
    & \le f(\ubs{k}) && \text{(\textbf{such that} clause of line~\ref{u3})} \\
    & \le f(\ps{k})  && \text{(induction hypothesis)} \\
    &  =  \ps{k+1}   && \text{(definition of~$\ps{k+1}$)}
 \end{align*} 
 This completes the proof.
\end{proof}

\subsection{Proof of Proposition~\ref{prop:prove-cons-inex}}
\noindent Here is a restatement of Proposition~\ref{prop:prove-cons-inex}.
\begin{qproposition}{\ref{prop:prove-cons-inex}}
 \stmtpropproveconsinex
\end{qproposition}
\begin{proof}
 Lemma~\ref{lemma:pf1}~(4) implies $\rho(f'(\vone)) \le 1$.
 Hence, $f$ is consistent by Proposition~\ref{prop:spektralradius}.
 \stefan{ Ausfuehrlicher waere es so: fuer alle $\beta > 1$ gilt, dass $\rho(f'(\vone)) < \beta$. Es folgt $\rho(f'(\vone)) \le 1$.} 
\end{proof}
 }{}

\end{document}